\documentclass[12pt]{article}

\usepackage{amsfonts}
\usepackage{amsmath}
\usepackage{amssymb}
\usepackage{lscape}
\usepackage{float}
\usepackage[margin=.75in]{geometry}
\usepackage{graphicx}
\usepackage{natbib}
\usepackage{ntheorem}
\usepackage{rotating}
\usepackage{theorem}
\usepackage[usenames,dvipsnames]{color}
\usepackage{caption}

\theoremseparator{.}
\newtheorem{proposition}{Proposition}
\newcommand{\Indicator}{1 \hskip -2.5 pt \hbox{I}}
\newcommand{\Perp}{\perp\!\!\!\perp}
\newenvironment{proof}[1][Proof]{\begin{trivlist}
\newcommand{\qed}{\hfill $\blacksquare$}
\item[\hskip \labelsep {\bfseries #1.}]}{\end{trivlist}}
\DeclareMathOperator*{\argmax}{\arg\!\max}

\begin{document}

\title{Fact or friction: Jumps at ultra high frequency\thanks{We would like to thank Bj\o rn Eraker (the referee), David Bates, Tim Bollerslev, Michael Brennan, Alvaro Cartea, Frank Diebold, Dobrislav Dobrev, David Leigh, Richard Louth, Andrew Patton, Kevin Sheppard, George Tauchen, Valeri Voev and the participants at the Fourth CSDA (Computational Statistics \& Data Analysis) International Conference on Computational and Financial Econometrics, the Sixth Nordic Econometric Meeting, the Marie Curie High Frequency Research Training Workshop, the SITE workshop on Measuring and Modeling Risk with High Frequency Data, the 65th European Meeting of the Econometric Society, the 2012 North American Winter Meeting of the Econometric Society, and at seminars at CREATES, University of Aarhus, Cambridge University, Deutsche Bank, Queen Mary University of London, London School of Economics, Imperial College London, Duke University, Erasmus University Rotterdam, and Universidad Carlos III for helpful comments and insightful suggestions that greatly improved the paper. We are grateful to ICAP and the CME for giving permission to use the EBS foreign exchange and CME futures market data in this study. Kim Christensen and Mark Podolskij acknowledge financial support from CREATES, which is funded by the Danish National Research Foundation.}}

\author{Kim Christensen\thanks{CREATES, Aarhus University, Department of Economics and Business, Fuglesangs All\'{e} 4, 8210 Aarhus V, Denmark. Corresponding author. E-mail: kchristensen@creates.au.dk. Tel.: +45 8716 5367.} \and Roel C. A. Oomen\thanks{Deutsche Bank, 1 Great Winchester Street, London EC2N 2DB, United Kingdom and affiliated with University of Amsterdam, Department of Quantitative Economics, Valckenierstraat 65-67, 1018 XE Amsterdam, The Netherlands. E-mail: roel.ca.oomen@gmail.com} \and Mark Podolskij\thanks{Aarhus University, Department of Mathematics, Ny Munkegade 118, 8000 Aarhus C, Denmark. E-mail: mpodolskij@math.au.dk}}

\date{November 6, 2014}

\maketitle

\begin{abstract}
This paper shows that jumps in financial asset prices are often erroneously identified and are, in fact, rare events accounting for a very small proportion of the total price variation. We apply new econometric techniques to a comprehensive set of ultra high-frequency equity and foreign exchange tick data recorded at millisecond precision, allowing us to examine the price evolution at the individual order level. We show that in both theory and practice, traditional measures of jump variation based on lower-frequency data tend to spuriously assign a burst of volatility to the jump component. As a result, the true price variation coming from jumps is overstated. Our estimates based on tick data suggest that the jump variation is an order of magnitude smaller than typical estimates found in the existing literature.

\bigskip

\noindent \textbf{JEL classification}: C14; C80.

\noindent \textbf{Keywords}: Jump variation; high-frequency data; microstructure noise; pre-averaging; realized variation.
\end{abstract}

\vfill

\thispagestyle{empty}

\pagebreak

\section{Introduction} \setcounter{page}{1}

A long-standing consensus in the literature on asset pricing is that a realistic dynamic model should incorporate several, if not all, of the following stylized facts: random walk behavior \citep*[e.g.,][]{fama:65a} at a macroscopic level, market microstructure effects \citep*[e.g.,][]{niederhoffer-osborne:66a} at a microscopic level, stochastic volatility \citep*[e.g.,][]{mandelbrot:63a}, leverage \citep*[e.g.,][]{black:76a}, and jumps \citep*[e.g.,][]{press:67a}. Extensive support for these factors can be found both in the theory of finance and in the abundantly available financial market data. In this paper, we examine the role of the jump component by applying new econometric techniques to a comprehensive set of the finest resolution equity and foreign exchange tick data. The use of individual order-level tick data in the context of jump identification is novel to this paper and necessary to arrive at our main finding that the jump component is substantially smaller than currently thought. Specifically, we show that jumps account for about 1\% of total price variability (i.e., quadratic variation) in contrast to accepted estimates from lower-frequency data, which are an order of magnitude larger. Our microscopic view at the tick data provides the intuition for this result: A burst of volatility is often spuriously identified as a jump at the lower frequencies commonly used in the literature. No doubt exists that, in times of stress, asset prices do move sharply over short periods of time, and while occasionally genuine price jumps do occur, we find that more often than not price continuity is preserved even when accompanied with a severe deterioration of liquidity.

\begin{table}
\setlength{ \tabcolsep}{0.05cm}
\caption{Selection of literature reporting estimates of the jump variation component.}
\label{Table:Literature}
\begin{center}
\begin{scriptsize}
\begin{tabular}{llllll}
\hline
Article & Data & Period & Frequency & Model & Jump variation \\
\hline
\citet*{press:67a}                          & Ten DJIA constituents     & 1926 -- 1960      & Monthly               & JD    & $20\%^a$          \\
\citet*{beckers:81a}                        & 47 US large-cap stocks    & 1975 -- 1977      & Daily                 & JD    & $25\%^b$          \\
\citet*{ball-torous:83a}                    & 47 US large-cap stocks    & 1975 -- 1977      & Daily                 & JD    & $50\%^c$          \\
\citet*{ball-torous:85a}                    & 30 US large-cap stocks    & 1981 -- 1982      & Daily                 & JD    & $47\%^d$          \\
\citet*{jorion:88a}                         & DM/$\$$ spot              & 1974 -- 1985      & Weekly                & JD    & $96\%~~$          \\
                                            & CRSP index                & 1974 -- 1985      & Weekly                & JD    & $36\%~~$          \\
\citet*{bates:96a}                          & DM/$\$$ options           & 1984 -- 1991      & Weekly                & SVJ   & $20\%~~$          \\
\citet*{bakshi-cao-chen:97a}                & S\&P 500 options          & 1988 -- 1991      & Daily                 & SVJ   & $18.9\%~~$        \\
\citet*{bates:00a}                          & S\&P 500 options          & 1988 -- 1993      & Weekly                & SVJ   & $30.3\%-38.5\%~~$ \\
\citet*{andersen-benzoni-lund:02a}          & S\&P 500 spot             & 1953 -- 1996      & Daily                 & SVJ   & $5.5\%~~$         \\
\citet*{bollerslev-zhou:02a}                & DM/$\$$                   & 1986 -- 1996      & Five minutes          & SVJ   & $7.4\%~~$         \\
\citet*{pan:02a}                            & S\&P 500 spot and options & 1989 -- 1996      & Weekly                & SVJ   & $55.7\%~~$        \\
Chernov et al. (2003)                       & DJIA spot                 & 1953 -- 1999      & Daily                 & SVJ   & $9.4\%~~$         \\
\citet*{eraker-johannes-polson:03a}         & S\&P 500 spot             & 1980 -- 1999      & Daily                 & SVJ   & $8.2\%-14.7\%~~$  \\
                                            & Nasdaq 100 spot           & 1985 -- 1999      & Daily                 & SVJ   & $6.0\%-17.0\%~~$  \\
\citet*{eraker:04a}                         & S\&P 500 spot and options & 1987 -- 1990      & Daily                 & SVJ   & $17.1\%~~$        \\
\citet*{johannes:04a}                       & US Treasury bills         & 1965 -- 1999      & Daily                 & SVJ   & $50\%~~$          \\
\citet*{maheu-mccurdy:04a}                  & 11 US large-cap stocks    & 1962 -- 2001      & Daily                 & GARCH & $29\%^e$          \\
                                            & DJIA spot                 & 1960 -- 2001      & Daily                 & GARCH & $16.6\%~~$        \\
                                            & Nasdaq 100 spot           & 1985 -- 2001      & Daily                 & GARCH & $14.4\%~~$        \\
                                            & TXX spot                  & 1995 -- 2001      & Daily                 & GARCH & $22.8\%~~$        \\
\citet*{barndorff-nielsen-shephard:04b}     & DM/$\$$                   & 1986 -- 1996      & Five minutes          & RV    & $3.1\%~~$         \\
\citet*{huang-tauchen:05a}                  & S\&P 500 futures          & 1982 -- 2002      & Five minutes          & RV    & $4.4\%~~$         \\
                                            & S\&P 500 spot             & 1997 -- 2002      & Five minutes          & RV    & $7.3\%~~$         \\
\citet*{barndorff-nielsen-shephard:06a}     & DM/$\$$, USDJPY           & 1986 -- 1996      & Five to 120 minutes   & RV    & $5.0\%-21.9\%~~$  \\
\citet*{bates:06a}                          & S\&P 500 spot             & 1953 -- 1996      & Daily                 & SVJ   & $12.7\%~~$        \\
\citet*{andersen-bollerslev-diebold:07a}    & DM/$\$$                   & 1986 -- 1999      & Five minutes          & RV    & $7.2\%~~$         \\
                                            & S\&P 500 spot             & 1990 -- 2002      & Five minutes          & RV    & $14.4\%~~$        \\
                                            & US Treasury bonds         & 1990 -- 2002      & Five minutes          & RV    & $12.6\%~~$        \\
\citet*{bollerslev-law-tauchen:08a}         & 40 US large-cap stocks    & 2001 -- 2005      & $17.5$ minutes        & RV    & $12\%~~$          \\
                                            & Equally weighted index    & 2001 -- 2005      & $17.5$ minutes        & RV    & $9\%~~$           \\
\citet*{jiang-oomen:07a}                    & S\&P 500 spot             & 1987 -- 1995      & Five minutes          & SVJ   & $18.6\%-19.5\%~~$ \\
\citet*{ait-sahalia-jacod:09b}              & INTC and MSFT             & 2006              & Five to 120 seconds   & RV    & $25\%^f$          \\
Bollerslev et al. (2009)                    & S\&P 500 futures          & 1985 -- 2004      & Five minutes          & RV    & $6.8\%~~$         \\
\citet*{todorov:09a}                        & S\&P 500 futures          & 1990 -- 2002      & Five minutes          & RV    & $15\%~~$          \\
\citet*{tauchen-zhou:11a}                   & S\&P 500 spot             & 1986 -- 2005      & Five minutes          & RV    & $5.4\%~~$         \\
                                            & US Treasury bonds         & 1991 -- 2005      & Five minutes          & RV    & $19.1\%~~$        \\
                                            & USDJPY                    & 1997 -- 2004      & Five minutes          & RV    & $6.5\%~~$         \\
\citet*{andersen-bollerslev-huang:11a}      & S\&P 500 futures          & 1990 -- 2005      & Five minutes          & RV    & $4.9\%~~$         \\
                                            & US Treasury bonds         & 1990 -- 2005      & Five minutes          & RV    & $14.6\%~~$        \\
\citet*{bates:12a}                          & CRSP index, S\&P 500      & 1926 -- 2006      & Daily                 & SVJ   & $6.4\%-7.2\%~~$   \\
\citet*{corsi-reno:12a}                     & S\&P 500 futures          & 1990 -- 2007      & Five minutes          & RV    & $6\%^g$           \\
\citet*{patton-sheppard:15a}                & S\&P 500 ETF              & 1997 -- 2008      & Five minutes$^h$      & RV    & $15\%~~$          \\
\hline
\end{tabular}
\end{scriptsize}
\medskip
\begin{scriptsize}
\parbox{\textwidth}{\emph{Note.} ``JD,'' ``SVJ,'' and ``GARCH'' refer to the \citet*{press:67a}- and \citet*{merton:76a}-type jump-diffusion model, the class of stochastic volatility plus jump models, and GARCH-type models, respectively. ``RV'' refers to the class of model-free measures of variation, including realized variance and (bi-) power variation. \\
\emph{a}. Individual JV between $0\%-100\%$; \emph{b}. Individual JV between $10\%-55\%$; \emph{c}. Individual JV between $10\%-80\%$; \emph{d}. Individual JV between $8\%-93\%$; \emph{e}. Individual JV between $20\%-41\%$. Two stocks have shorter sample from 1980s -- 2001; \emph{f}. Includes an infinite activity jump component; \emph{g}. Smoothed JV between $2\%-20\%$; \emph{h}. Five-minute sampling frequency equivalent in trade time.}
\end{scriptsize}
\end{center}
\end{table}
\nocite{bollerslev-kretschmer-pigorsch-tauchen:09a} \nocite{chernov-gallant-ghysels-tauchen:03a}

The foundations of most asset pricing models can be cast in the class of arbitrage-free It\^{o} semimartingales. These processes are naturally decomposed into a continuous diffusive Brownian component and a discontinuous jump part. The importance of being able to distinguish between these two fundamentally different sources of risk is emphasized in \citet*{ait-sahalia:04a}. Specifically, jumps have a profoundly distinct impact on option pricing \citep*[e.g.,][]{cox-ross:76a, merton:76a, duffie-pan-singleton:00a}, risk management \citep*[e.g.,][]{duffie-pan:01a, bakshi-panayotov:10a}, and asset allocation \citep*[e.g.,][]{jarrow-rosenfeld:84a, liu-longstaff-pan:03a}. Empirical work on identifying and modeling the jump component now spans nearly half a century. Table \ref{Table:Literature} provides a representative but necessarily incomplete overview. Starting with the influential paper of \citet*{press:67a}, and continuing up to \citet*{jorion:88a}, a number of papers estimate a (constant volatility) jump-diffusion model and report levels of jump variation (JV, expressed as a fraction of total return variation) in excess of $20\%$. An important shortcoming of the \citet*{press:67a}- or \citet{merton:76a}-style jump-diffusion model is that the jump component is the only mechanism that can account for fat tails of the empirical return distribution so that---in the presence of stochastic volatility---the JV measurements are potentially inflated. From the 1990s onwards, a large body of work considers numerous generalizations of the jump-diffusion model to include one or several stochastic volatility factors as well as state-dependent jump components. Estimation methods for such models are often highly complex and numerically intensive \citep*[e.g.,][]{eraker-johannes-polson:03a} but have the nice feature that they can exploit the information in the spot price \citep*[e.g.,][]{andersen-benzoni-lund:02a}, its associated derivative prices \citep*[e.g.,][]{bates:96a}, or both \citep*[e.g.,][]{pan:02a}. The majority of this literature concentrates on the US large-cap Standard \& Poor's (S\&P) 500 equity index and typically finds that the JV is around $10\% - 20\%$. The corresponding figure for foreign exchange rates is comparable and that of Treasury bills and individual stocks is higher still.

The most recent work on jumps has seen a shift away from model-based inference on low-frequency data to model-free inference based on intraday data. In an influential series of papers, \citet*{barndorff-nielsen-shephard:04b, barndorff-nielsen-shephard:06a} and \citet*{barndorff-nielsen-shephard-winkel:06a} introduce the concept of (bi-) power variation---a simple but effective technique to identify and measure the variation of jumps from intraday data [see \citet*{ait-sahalia-jacod:09b, ait-sahalia-jacod:09a} and \citet*{mancini:04a, mancini:09a} for a related jump-robust threshold estimator]. Using this, or variations thereof, a number of recent articles report model-free JV estimates of around 10\% for the S\&P 500 index and thus reinforce the earlier literature that the jump component is important \citep*[e.g.,][]{huang-tauchen:05a, andersen-bollerslev-diebold:07a}.

\begin{figure}[ht!]
\begin{center}
\caption{The S\&P 500 flash-crash: Jump or burst in volatility?}
\label{Figure:Crash}
\begin{tabular}{cc}
\small{Panel A: Five-minute data} & \small{Panel B: Trade-by-trade}\\
\includegraphics[height=8cm,width=0.48\textwidth]{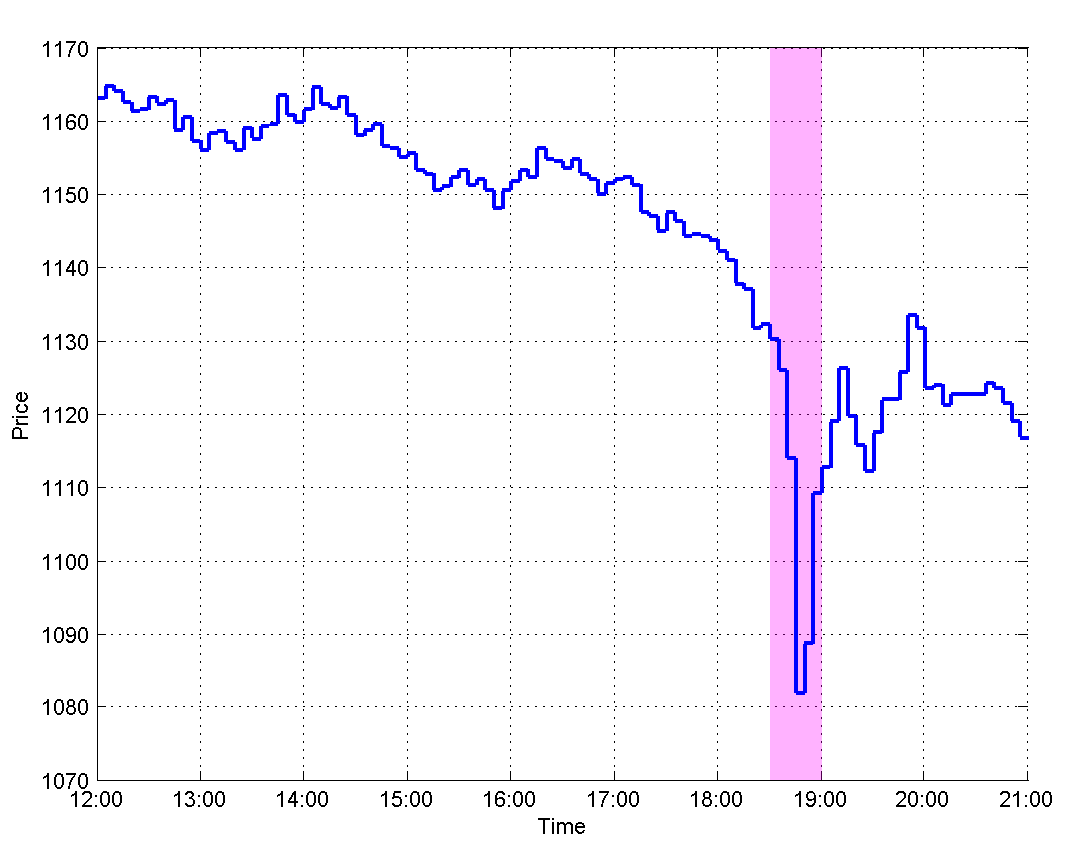} &
\includegraphics[height=8cm,width=0.48\textwidth]{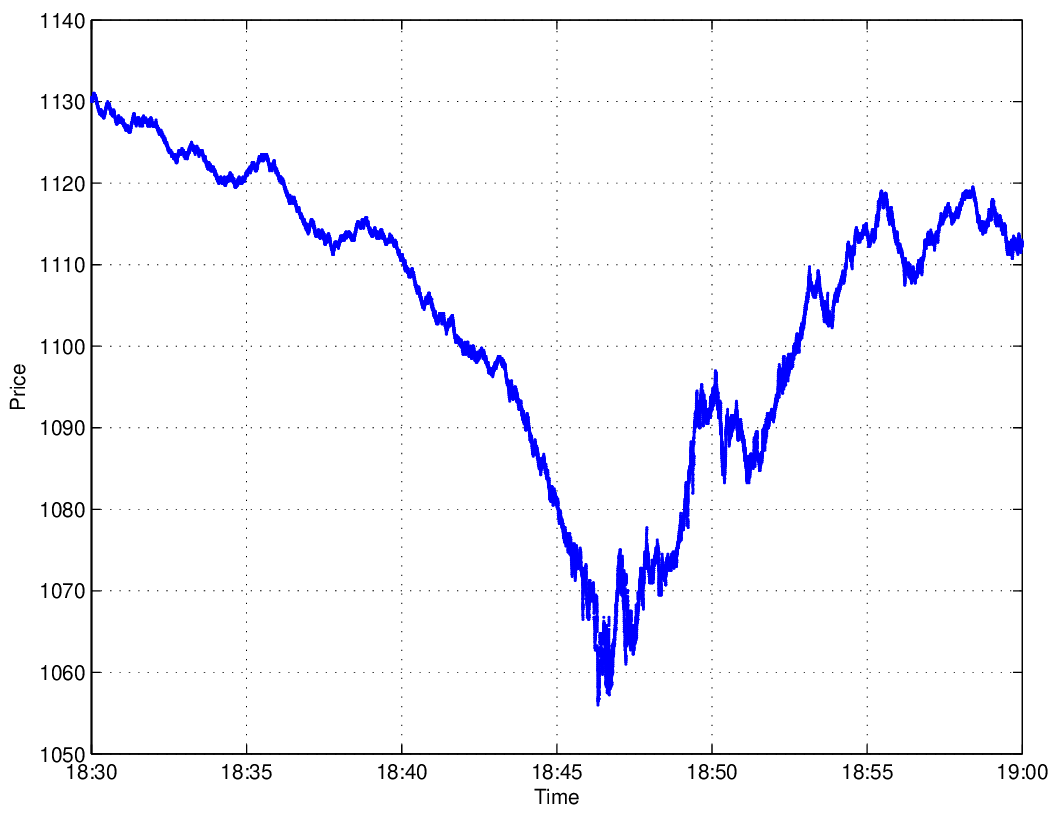} \\
\end{tabular}
\begin{scriptsize}
\parbox{\textwidth}{\emph{Note.} This figure draws the price evolution of the Chicago Mercantile Exchange's E-mini S\&P 500 futures contract on May 6, 2010. Panel A displays the data for the whole day at a five-minute frequency, while Panel B is at the trade frequency over the flash-crash episode. The reported time is Greenwich Mean Time, while local time in the US (EDT) is four hours earlier.}
\end{scriptsize}
\end{center}
\end{figure}

By their nature, jumps are instantaneous and discrete moves in the price. Therefore, not surprisingly, identification is aided by the finest resolution price data. This is precisely what motivates the recent literature to use intraday data. However, the consensus five-minute frequency at which returns are typically sampled, instead of at the finest tick-by-tick resolution, reflects a compromise to ensure that the market microstructure effects are sufficiently benign for the theory to remain valid. Figure \ref{Figure:Crash} illustrates the cost of doing so: a diminished ability to distinguish jumps from bursts in volatility. From Panel A, in which returns are sampled at a five-minute frequency, one would probably conclude that the notorious flash-crash episode contains a number of very large jumps [see, e.g., \citet*{easley-prado-ohara:11a} for a discussion of the event]. The widely used bi-power variation (BV) jump measure is highly significant indicating the presence of jumps on formal statistical grounds. Yet, when focusing on the relevant subperiod in Panel B, one can see that at tick frequency jumps are elusive. In fact, in a recent study using audit trail data, \citet*{kirilenko-kyle-samadi-tuzun:17a} characterize the flash-crash as a ``brief period of extreme market volatility.'' The March 2011 earthquake in Japan led to similar scenarios in the US dollar-Japanese yen (USDJPY) exchange rate, which experienced a flash-crash-type episode on March 16 and a rapid depreciation following a coordinated intervention by the Bank of Japan and other G7 central banks on March 18. Again, both events would be classified as exhibiting large jumps by conventional realized measures based on five-minute data, while the tick data reveal a period of heightened volatility instead of discrete price jumps (see Figure \ref{Figure:Quake}).

This paper provides a comprehensive study into the magnitude of the jump component based on the highest resolution data available. To the best of our knowledge, we are the first to do so, and we find that the tick data have a fundamentally different story to tell. Our analysis employs the pre-averaging theory of \citet*{podolskij-vetter:09a} and \citet*{jacod-li-mykland-podolskij-vetter:09a} to construct noise-robust jump measures from tick data. We apply these to a representative data set composed of US large-cap stocks (the 30 Dow Jones Industrial Average (DJIA) constituents), equity indexes (the S\&P 500 and Nasdaq 100), and currency pairs (the Euro-US dollar (EURUSD), US dollar-Japanese yen (USDJPY), and US dollar-Swiss franc (USDCHF)). We start by confirming that when sampling at low frequencies the jump component appears substantial and in line with the extant literature at around 10\%. In passing, we observe a near-perfect relation in which the jump component decreases in magnitude as the sampling frequency is increased from 15 minutes to five minutes. Next, we obtain our main result, in which most of the previously identified jumps vanish as we move to the tick frequency and we are left with highly volatile episodes instead. The overall JV measured across all instruments we consider is just over 1\%.

The finding of a much-diminished role for jumps clearly carries important implications for asset pricing models and their associated finance applications. The observational equivalence---at lower frequencies---between a sharp but continuous move in the price and a sudden discontinuous jump highlights an important point. When market participants view prices at periodic but infrequent intervals (e.g., every hour, or at the end of each trading day), they could be exposed to price moves that they experience as genuine jumps when a period of extreme volatility unfolds in between observation points. For instance, an options trader that follows the market on an hourly basis for the purpose of hedging can experience jumps and could incur a profoundly different pay-off pattern compared with an automated hedging algorithm operating in tick-time. Similarly, traders that require a certain capacity (e.g., a derivatives dealer that is hedging a large book of options, or a pension fund managing a large portfolio) can also experience price jumps as it takes time and price discovery to execute large trades, particularly in times of stress. So even when the underlying price process is free of jumps, the trader's pay-off and execution costs can be impacted in a manner consistent with the presence of genuine price jumps. As such, models with jumps still have an important role to play in scenarios such as these. This does not, however, contradict the findings in this paper. In fact, we discuss in some detail the intimate link between price continuity, liquidity, volatility dynamics, and observation frequency as it provides the intuition for our results and reconciles their relation to the extant literature.

The rest of the paper is organized as follows. Section \ref{Sec:Theory} describes the theoretical model and presents the required estimators and asymptotic theory needed to measure the jump component. Section \ref{Sec:Empirical} summarizes the empirical findings and Section \ref{Sec:Conclusion} concludes. The proofs and more theoretical discussions are contained in the Appendices. Additional empirical results and case studies are collected in an online Appendix \citep*[see][]{christensen-oomen-podolskij:13b}.

\section{Jump measurement from noisy tick data} \label{Sec:Theory}

To study the role of jumps in financial asset prices and develop the necessary theory, we need a model that captures the salient and hypothesized aspects of tick data: stochastic volatility, jumps in the price, market microstructure noise, and outliers. We model the first two components as a latent continuous-time process that represents the unobserved efficient price process. The last two features are modeled on top of the discretely sampled efficient price to represent the observed tick data process. Within this setting, the econometric challenge we address is how to infer the magnitude of the jump component from noisy and discretely sampled measurements of the efficient price process. What is novel to this paper is the consideration for microstructure noise and outliers.

\subsection{A model for noisy tick data}

Starting with the formulation of the efficient price process, let $X=(X_t)_{t\geq 0}$ denote the logarithmic price of a financial asset that is defined on a filtered probability space $\bigl( \Omega, \mathcal{F}, \left( \mathcal{F}_{t} \right)_{t \geq 0}, \mathbb{P} \bigr)$ and is adapted to the filtration $\mathcal{F}_{t}$ that represents the information available to market participants at time $t$, $t \geq 0$. We assume that $X$ operates in an arbitrage-free frictionless market, which implies that $X$ belongs to the class of semimartingale processes \citep*[e.g.,][]{back:91a,delbaen-schachermayer:94a}. To allow for stochastic volatility and (finite activity) price jumps, we further assume that $X$ can be represented by a jump-diffusion model of the form
\begin{equation} \label{Eqn:Xprocess}
X_t = X_0 + \int_{0}^{t} a_{s} \text{d}s + \int_{0}^{t} \sigma_{s} \text{d}W_{s} + \sum_{i = 1}^{N_t^J} J_i, \qquad t \geq 0,
\end{equation}
where $X_{t}$ is the log-price at time $t$, $a = \left( a_{t} \right)_{t \geq 0}$ is a locally bounded and predictable drift term, $\sigma = \left( \sigma_{t} \right)_{t \geq 0}$ is an adapted c\`{a}dl\`{a}g volatility process, $W = \left( W_{t} \right)_{t \geq 0}$ is a standard Brownian motion, $N^J = \left( N^J_{t} \right)_{t \geq 0}$ is a counting process, and $J = \left( J_{i} \right)_{i = 1, \ldots, N_{t}^J}$ is a sequence of nonzero random variables. Here, $N^J$ represents the total number of jumps in $X$ that have occurred up to time $t$ and $J$ denotes the corresponding jump sizes.

The above jump-diffusion model is completely nonparametric and, apart from a regularity condition imposed on the variance process to facilitate the proof of the central limit theorem, we do not require the driving processes to be specified. As such, it nests many well-known continuous-time models as special cases: for example, the geometric Brownian motion with an Ornstein-Uhlenbeck process for log-volatility as in \citet*{alizadeh-brandt-diebold:02a}, the stochastic volatility model with log-normal jumps generated by a non-homogeneous Poisson process as in \citet*{andersen-benzoni-lund:02a}, and the affine class of models as in \citet*{duffie-pan-singleton:00a}. Moreover, as discussed in more detail in \ref{Sec:ApndxProof}, the model and our theory allow for leverage effects \citep*[e.g.,][]{christie:82a} and the proofs can be extended to incorporate jumps in the volatility process as in, e.g., \citet*{eraker-johannes-polson:03a}.

\begin{figure}[ht!]
\begin{center}
\caption{An illustration of noisy tick data.}
\label{Figure:Noise}
\begin{tabular}{cc}
\small{Panel A: Microstructure noise} & \small{Panel B: Outliers} \\
\includegraphics[height=8cm,width=0.48\textwidth]{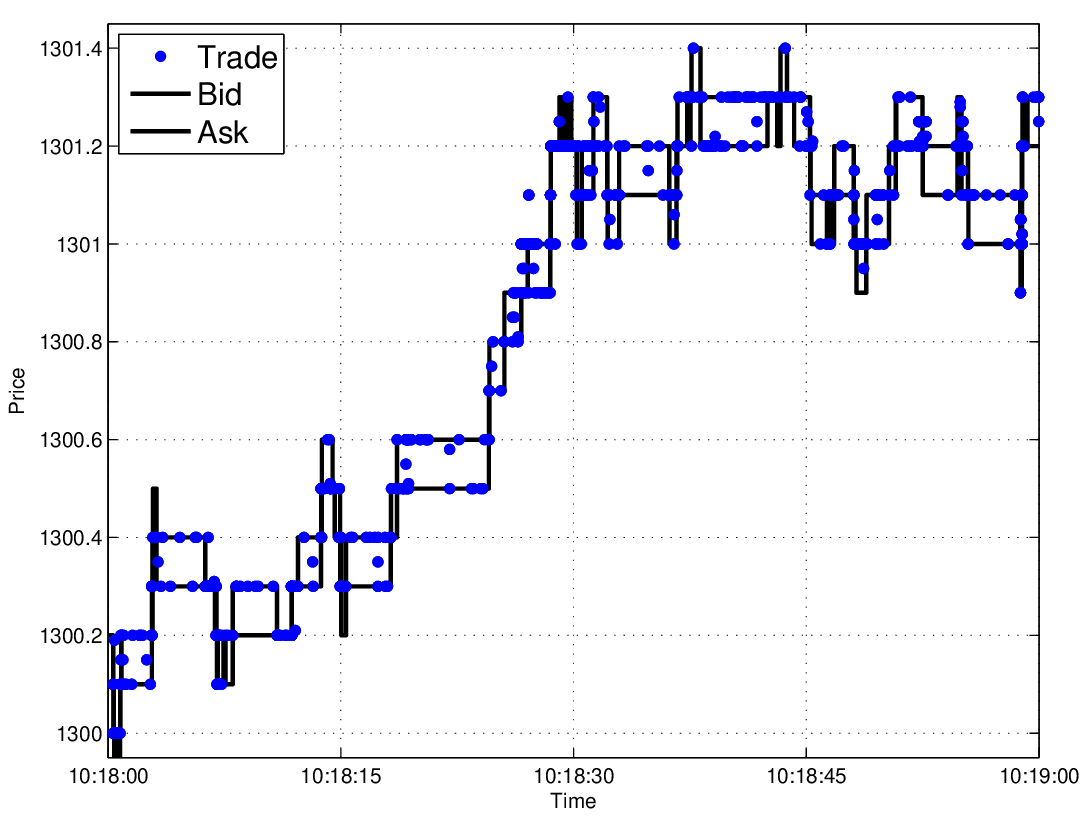} &
\includegraphics[height=8cm,width=0.48\textwidth]{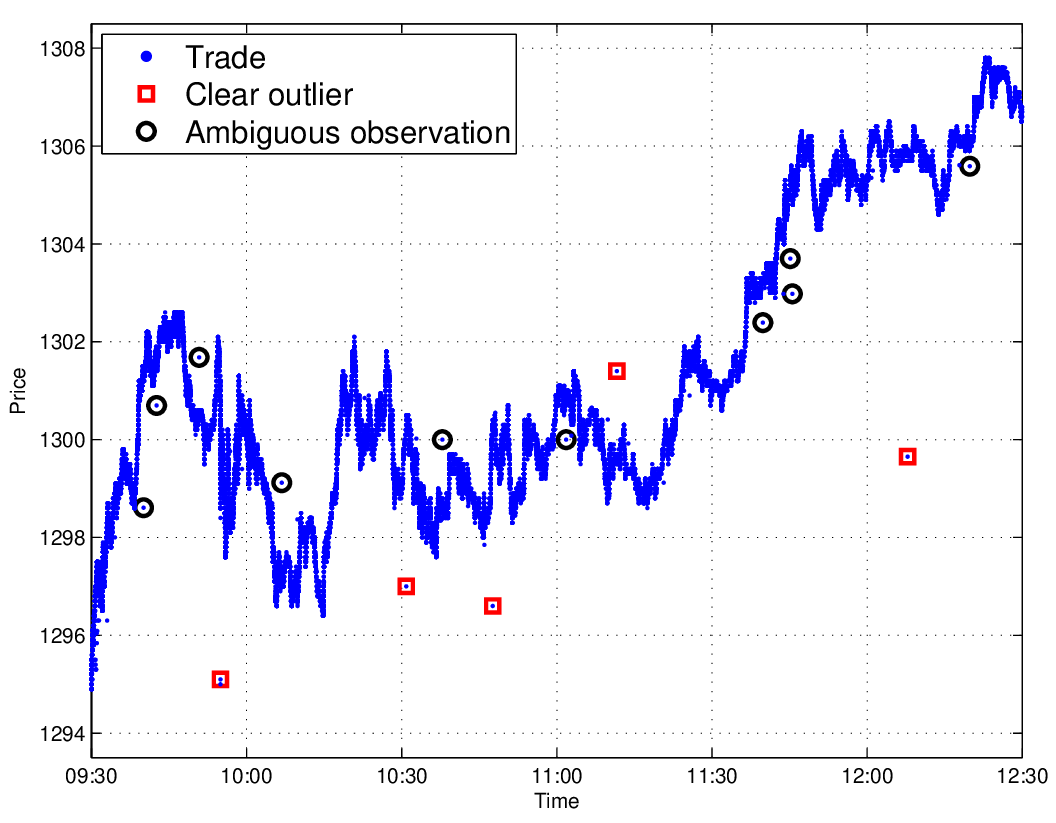} \\
\end{tabular}
\begin{scriptsize}
\parbox{\textwidth}{\emph{Note.} Panel A plots one minute of National Best Bid and Offer (NBBO) quote data overlaid with trades for SPY on March 11, 2011. It serves to illustrate the microstructure noise effects such as price discreteness and bid-ask bounce. Panel B plots SPY trades for the full morning session of the same day and highlights clear and suspected outliers. Amongst the clear outliers, four out of five are trades for 100,000 shares or more (and therefore qualify for delayed reporting of block trades). The first outlier is a trade for only two hundred shares or about half the average trade size for this period.}
\end{scriptsize}
\end{center}
\end{figure}

In practice we do not observe the efficient price process in continuous time but instead have discretely sampled measurements available that---at an individual order level---can contain a substantial amount of noise. One source of noise is microstructure effects that arise from market imperfections such as bid-ask spreads and price discreteness \citep*[e.g.,][]{niederhoffer-osborne:66a, roll:84a, black:86a}. Outliers are another source of noise but have received little attention in the literature thus far. These short-lived off-market prices often corrupt tick data and can be attributed to delayed trade reporting on block trades, fat-finger errors, bugs in the data feed, misprints, decimal misplacement, incorrect ordering of data, etc.\footnote{Pre-filtering tick data for outliers is standard practice. In some instances, most if not all the outliers can be filtered based on exchange-provided information such as a trade condition. In cases in which this is unreliable or not available, data-driven techniques as in \citet*{barndorff-nielsen-hansen-lunde-shephard:09a} can be used. A drawback of such an approach is that the filtering rules typically rely on ad hoc tuning parameters that control the tolerance levels to outliers and the user consequently risks over- or under-cleaning the data. The method for jump measurement we present here is robust to (residual) outliers and can thus be used in conjunction with or instead of a pre-filtering method.} Figure \ref{Figure:Noise} provides an illustration of both types of noise. With this in mind, and restricting attention to the unit time interval $t \in [0,1]$, we model the observed tick data process as
\begin{equation} \label{Eqn:Yprocess}
Y_{i/N} = X_{i/N} + u_{i/N} + O_{i/N}, \qquad i = 0,1,\ldots,N
\end{equation}
where $u$ is an identically and independently distributed or i.i.d. (microstructure) noise process with $E(u) = 0$ and $E(u^2) = \omega^2$, and $u$ is independent of $X$, or $u \Perp X$. Further, $O_{i/N} = \Indicator_{ \{ i / N \in \mathcal{A}_N \} } \mathcal{S}_i$ where $\mathcal{A}_N$ is a random set holding the appearance times of the outliers and their sizes are given by $(\mathcal{S}_i)_{i = 1, \ldots, N_1^O}$. We assume that $\mathcal{A}_N$ is a.s. finite and model it by
\begin{equation} \label{An}
\mathcal{A}_N = \left\{ \frac{[N T_i]}{N} : 0 \leq T_i \leq 1 \right\},
\end{equation}
where $\left( T_i \right)_{i = 1,\ldots, N_1^O}$ are the arrival times of another counting process $N^O = \left(  N_t^O \right)_{t \geq 0}$. We assume that $O$ is mutually independent of $X$ and $u$, $O \Perp (X, u)$. This implies that $N^J \Perp N^O$, i.e., the two counting processes generating jumps and outliers, are also independent, which further means that the probability of observing both a jump and an outlier in a small time interval is asymptotically negligible.

The formulation of the microstructure component is standard in the literature on high-frequency data [see \citet*{andersen-bollerslev-diebold:10a} and \citet*{barndorff-nielsen-shephard:07a} for comprehensive reviews]. It provides analytic convenience and also has some empirical support \citep*[see, e.g.,][]{hansen-lunde:06a, diebold-strasser:13a} because it captures the empirically observed negative first-order autocorrelation in tick returns [e.g., induced by bid-ask bouncing, see \citet*{roll:84a}]. The i.i.d. assumption, however, is not binding [\citet*{jacod-li-mykland-podolskij-vetter:09a} provide a more general treatment of the noise. Our results extend along those lines as well.] The outlier process is novel to this paper. It is sufficiently general for our purposes and can capture the first-order effects of the observed off-market prices.

\subsection{Measurement of the jump component}

Within the above setup, the magnitude of the jump component expressed as a fraction of total (quadratic) price variation is given by
\begin{equation} \label{Eqn:JV}
JV = \frac{[X]_1 - \int_0^1 \sigma_s^2 \text{d}s }{[X]_1}, \qquad \text{where} \qquad [X]_1 = \int_0^1 \sigma_s^2 \text{d}s + \sum_{i = 1}^{N_1^J} J_i^2.
\end{equation}
Measurement of these components based on discretely sampled observations of the efficient price process $X$ has been widely studied and is well understood. Specifically, a consistent estimator of $[X]_1$ is given by the realized variance (RV), see, e.g., \citet*{andersen-bollerslev-diebold-labys:01a}
\begin{equation} \label{Eqn:RVn}
RV = \sum_{i = 1}^N |r_i|^2 \overset{p}{\to} [X]_1 \qquad \text{as} \qquad N \to \infty,
\end{equation}
where $r_i = X_{i/N} - X_{(i - 1)/N}$ are logarithmic returns of the efficient price process.\footnote{Consistency follows as a direct consequence from the definition of the quadratic variation process
\begin{equation*}
[X]_t = \underset{N \to \infty}{\text{p-lim}} \sum_{i = 1}^{N} \left( X_{t_{i}} - X_{t_{i - 1}} \right)^2,
\end{equation*}
for any sequence of partitions $0 = t_0  < t_1  < ... < t_N  = t$ with $\sup_{i} \left\{ t_{i} - t_{i - 1} \right\} \to 0$ as $N \to \infty$ \citep*[e.g.,][]{protter:04a}. It is this fundamental result from stochastic calculus that has motivated the increasing use of high-frequency data to estimate financial volatility.} An intuitive and elegant way to separate the diffusive- and jump-variation components from $[X]_1$ is via the use of the \citet*{barndorff-nielsen-shephard:04b} bi-power variation measure: a jump-robust and consistent estimator of the integrated variance\footnote{Alternative ways of estimating the integrated variance in the presence of jumps include the threshold estimators of \citet*{ait-sahalia-jacod:09b, ait-sahalia-jacod:09a} and \citet*{mancini:04a, mancini:09a}, and the quantile-based estimators of \citet*{christensen-oomen-podolskij:10a} and \citet*{andersen-dobrev-schaumburg:12a}. Recent developments also show how to improve the finite sample jump robustness of the BV or make it
more efficient; see \citet*{corsi-pirino-reno:10a} and \citet*{mykland-shephard-sheppard:12a}.}
\begin{equation} \label{Eqn:BVn}
BV = \frac{N}{N - 1} \frac{\pi}{2} \sum_{i = 2}^{N} |r_{i-1}| |r_i| \overset{p}{\to}  \int_0^1 \sigma_s^2 \text{d}s
\qquad \text{as} \qquad N \to \infty.
\end{equation}
The adjustment factor $N / (N - 1)$ applied to BV is used to improve its finite sample properties. Using Eqs. \eqref{Eqn:RVn} and \eqref{Eqn:BVn}, the difference between realized variance and bi-power variation can be used to isolate the JV, i.e.,
\begin{equation}
RV - BV \overset{p}{\to} \sum_{i = 1}^{N_1^J} J_i^2.
\end{equation}
This result has been widely used in the literature cited in Table \ref{Table:Literature} that studies the role of the jump component using intraday five-minute data.

An inherent limitation associated with discretely sampled data is that the ability to distinguish a jump from a diffusive movement diminishes as the sampling frequency decreases. An illustration of this was given in Figure \ref{Figure:Crash}. In this paper we therefore set out to utilize the finest resolution tick data available to study the jump component. At this level of granularity, however, microstructure effects invalidate the standard RV and BV measures described above. So to make inference about $[X]_1$ and its components using noisy tick data, we make use of the pre-averaging approach introduced by \citet*{jacod-li-mykland-podolskij-vetter:09a} and \citet*{podolskij-vetter:09a, podolskij-vetter:09b}.\footnote{When the object to be estimated is quadratic variation, the pre-averaging approach is to first-order equivalent to the realized kernel-based estimator of \citet*{barndorff-nielsen-hansen-lunde-shephard:08a} and the two-scale or multi-scale subsampler of \citet*{zhang-mykland-ait-sahalia:05a} and \citet*{zhang:06a}.} Intuitively, this method locally smooths the observed price series $Y$ so that the microstructure component $u$ (almost) disappears under averaging. The outlier component $O$ is dealt with via the same mechanism. Returns on this pre-averaged price series can then be used to construct consistent measures of the diffusive- and jump-variation components.

To implement pre-averaging, we calculate returns on a price series that is pre-averaged in a local neighborhood of $K$ observations, i.e.,
\begin{equation} \label{Eq:PreavgP}
r^\ast_{i,K} = \frac{1}{K} \left(\sum_{j = K/2}^{K - 1} Y_{(i+j)/N} - \sum_{j = 0}^{K/2 - 1} Y_{(i+j)/N}\right),
\end{equation}
where $K\geq 2$ and even. For the asymptotics to work, it is required that $K = \theta \sqrt{N} + o(N^{- 1/4})$, i.e., the pre-averaging horizon grows as the sampling frequency increases, but at a slower rate. We can then construct noise- and outlier-robust versions of RV and BV as
\begin{align}
RV^\ast &= \frac{N}{N - K + 2} \frac{1}{K \psi_K} \sum_{i = 0}^{N - K + 1} |r^\ast_{i,K}|^2 - \frac{\widehat{\omega}^{2}}{\theta^{2} \psi_K} ,\label{Eqn:RVast} \\
BV^\ast &= \frac{N}{N - 2K + 2} \frac{1}{K\psi_K}\frac{\pi}{2} \sum_{i = 0}^{N - 2K + 1} |r^\ast_{i,K}| |r^\ast_{i+K,K}| - \frac{\widehat{\omega}^{2}}{\theta^{2} \psi_K},\label{Eqn:BVast}
\end{align}
where $\psi_K = (1+2K^{-2})/12$.

The term $\frac{\widehat{\omega}^{2}}{\theta^{2} \psi_K}$ is a bias-correction, which compensates for the residual microstructure noise that remains after pre-averaging. It drops out when we compute $RV^\ast-BV^\ast$ and so it is of limited importance. $\widehat{\omega}^2$ is an estimator of the noise variance $\omega^2$. It can be estimated in a number of ways [see, e.g., \citet*{gatheral-oomen:10a} for a comparison of estimators]. Here, we use the estimator proposed by \citet*{oomen:06b}, which is defined as
\begin{equation} \label{Eqn:omega}
\hat{\omega}^{2}_{\text{AC}} = - \frac{1}{N - 1} \sum_{i = 2}^{N} r_{i}^{\ast} r_{i-1}^{\ast},
\end{equation}
where $r_{i}^{\ast} = Y_{i/N} - Y_{(i-1)/N}$.

Proposition \ref{Thm:CLT} generalizes the results of \citet*{podolskij-vetter:09a} to allow for outliers. It gives the probability limit of the pre-averaging estimators and their joint asymptotic distribution under the null hypothesis of no jumps.
\begin{proposition} \label{Thm:CLT}
Assume that $Y$ follows Eq. \eqref{Eqn:Yprocess} and that $E(u^4) < \infty$. As $N \to \infty$, it holds that
\begin{equation} \label{Eqn:RV*BV*consistency}
RV^\ast \overset{p}{\to} [X]_1, \qquad BV^\ast \overset{p}{\to} \int_0^1 \sigma_s^2 \text{\upshape{d}}s.
\end{equation}
Moreover, suppose that $E(u^8)<\infty $, and that $X$ is a continuous semimartingale, i.e., $X$ follows Eq. \eqref{Eqn:Xprocess} but with $N_t^J \equiv 0$ for all $t$, and with condition (V) as listed in the proof fulfilled. As $N \to \infty$, it then further holds that
\begin{equation} \label{Eqn:RV*BV*CLT}
N^{1/4} \left(
\begin{array}{c}
RV^\ast - \int_0^1 \sigma_s^2 \text{\upshape{d}}s \\[0.25cm]
BV^\ast - \int_0^1 \sigma_s^2 \text{\upshape{d}}s
\end{array}
\right) \overset{d_s}{\to} MN(0, \Sigma^\ast),
\end{equation}
a mixed normal distribution with conditional covariance matrix $\Sigma^\ast$, where $\Sigma^\ast$ is defined in \ref{Sec:ApndxSigma}.\footnote{Throughout the paper, the symbol ``$\overset{d_{s}}{\to}$'' is used to denote convergence in law stably. We refer to \citet*{barndorff-nielsen-hansen-lunde-shephard:08a} for a formal definition of stable convergence in law and the motivation for using this type of convergence in the high-frequency volatility setting. The $N^{-1/4}$ rate of convergence is slow compared to the noiseless case, but it is the fastest possible that can be achieved in noisy diffusion models \citep*[see][]{gloter-jacod:01a, gloter-jacod:01b}.}
\end{proposition}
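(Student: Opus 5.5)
The plan is to reduce everything to the outlier-free case, where the results of \citet*{podolskij-vetter:09a} and \citet*{jacod-li-mykland-podolskij-vetter:09a} apply directly. Write $Y = \tilde Y + O$ with $\tilde Y_{i/N} = X_{i/N} + u_{i/N}$. Let $\tilde r^\ast_{i,K}$ and $o_{i,K}$ be the pre-averaged returns of $\tilde Y$ and $O$, so that $r^\ast_{i,K} = \tilde r^\ast_{i,K} + o_{i,K}$. Since the pre-averaging weights are $\pm 1/K$ on a window of length $K$, each outlier $\mathcal{S}_l$ contributes at most $|\mathcal{S}_l|/K$ to $o_{i,K}$. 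Moreover $o_{i,K}\neq 0$ only for the $O(K N^O_1)$ indices $i$ whose window contains an outlier time. Because $N^O_1<\infty$ a.s., we may work on sets where $N^O_1 \le M$ and $\max_l |\mathcal{S}_l| \le C$; the probability of their complement can be made arbitrarily small, which is compatible with both convergence in probability and stable convergence.

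\emph{Step 1: bounding the outlier contribution in $RV^\ast$ and $BV^\ast$.} With $K\asymp\sqrt N$, expand $|r^\ast_{i,K}|^2 = |\tilde r^\ast_{i,K}|^2 + 2\tilde r^\ast_{i,K} o_{i,K} + o_{i,K}^2$. The pure outlier term gives
\[
\frac{1}{K\psi_K}\sum_i o_{i,K}^2 \le \frac{C'}{K}\cdot K M \cdot \frac{C^2}{K^2} = O_p(N^{-1}).
\]
For the cross term, $\tilde r^\ast_{i,K} = O_p(N^{-1/4})$ uniformly up to log factors, since its variance is of order $K/N + \omega^2/K \asymp N^{-1/2}$. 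The cross term is therefore $O_p(K^{-1}\cdot K\cdot N^{-1/4}K^{-1}) = O_p(N^{-3/4})$. For $BV^\ast$, use $\bigl||a+b||c+d|-|a||c|\bigr| \le |b||c|+|a||d|+|b||d|$ to get the same orders. Both are $o_p(N^{-1/4})$, so the outliers are negligible for the CLT and a fortiori for consistency. The noise-variance estimator $\hat\omega^2_{AC}$ is also affected: each outlier changes $O(1)$ products $r_i^\ast r^\ast_{i-1}$ by $O_p(1)$, so the perturbation is $O_p(N^{-1})$. Since the bias correction carries the factor $1/\theta^2\psi_K = O(1)$, this is negligible too. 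Without outliers, $\hat\omega^2_{AC}$ is $\sqrt N$-consistent under $E(u^4)<\infty$, and its estimation error is $o_p(N^{-1/4})$.

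\emph{Step 2: the outlier-free statements.} For $\tilde Y$, consistency of the pre-averaged realized variance to $[X]_1$ follows from the law of large numbers of \citet*{jacod-li-mykland-podolskij-vetter:09a}. After the $\omega^2/(\theta^2\psi_K)$ correction, the noise bias $\psi$-terms cancel with the normalization $1/(K\psi_K)$, and the normalizations here are the discrete versions with weight $g(x)=\min(x,1-x)$. Consistency of the bipower version to $\int_0^1\sigma_s^2\,\mathrm{d}s$ under jumps is the following argument. Finitely many jumps affect only $O(K)$ indices, each with $|\tilde r^\ast_{i,K}| = O_p(1)$ times a neighbouring $O_p(N^{-1/4})$. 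Their total contribution is $O_p(K^{-1}\cdot K\cdot N^{-1/4}) \to 0$, which leaves the continuous-case limit from \citet*{podolskij-vetter:09a}. Under $N^J\equiv 0$, condition (V) and $E(u^8)<\infty$, the joint stable CLT for the vector of the two statistics is the bivariate version of the Podolskij--Vetter CLT. It is obtained by the standard blocking argument: approximate $\sigma$ as locally constant on big blocks, drop small blocks, and apply a martingale-difference CLT with conditional covariance $\Sigma^\ast$. The edge-effect factors $N/(N-K+2)$ and $N/(N-2K+2)$ differ from 1 by $O(N^{-1/2})$ and are harmless.

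\emph{Main obstacle.} The delicate part is the joint CLT and the covariance $\Sigma^\ast$, especially the cross-covariance between the squared and bipower pre-averaged returns with noise. I would import this from \citet*{podolskij-vetter:09a}, checking only that the discrete kernel and the $K$ offset in $BV^\ast$ match their setting. The genuinely new step is Step 1, and the care there is in the uniform $N^{-1/4}$ bound on $\tilde r^\ast_{i,K}$. I would avoid that bound by using Cauchy--Schwarz over the $O(K)$ affected indices, together with $E|\tilde r^\ast_{i,K}|^2 = O(N^{-1/2})$.
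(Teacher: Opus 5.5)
Your proposal is correct and takes the paper's route. You import the outlier-free results from Podolskij--Vetter and show that the a.s. finitely many outliers perturb only $O(K)$ pre-averaged returns, each by $O(1/K)$, while $\hat\omega^2_{\text{AC}}$ changes by only $O_p(N^{-1})$. Your bounds ($O_p(N^{-3/4})$ for the cross term and $O_p(N^{-1})$ for the pure outlier term) are sharper than the paper's $O_p(K^{-1})$, and you write out the $BV^\ast$ case and the localization explicitly, where the paper only says the $BV^\ast$ argument is almost identical.
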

\begin{proof}
See \ref{Sec:ApndxCLT}.
\end{proof}
The consistency result in Eq. \eqref{Eqn:RV*BV*consistency} implies that in the presence of noise and outliers, we can consistently estimate the magnitude of the jump component as
\begin{equation} \label{Eqn:NoiseJumpEstimation}
RV^\ast - BV^\ast \overset{p}{\to} \sum_{i = 1}^{N_1^J} J_i^2.
\end{equation}
Eq. \eqref{Eqn:RV*BV*CLT} supplies the basis for constructing a nonparametric noise-robust jump test. The online Appendix \citep*{christensen-oomen-podolskij:13b} provides the derivation and proposes a consistent jump-robust estimator of $\Sigma^{*}$, such that the test can be implemented in practice. Moreover, it highlights the finite sample properties of the test using simulations and applies it to the empirical data that are analyzed below. Because the test is not required to arrive at our main result, we leave it out of the paper.

\subsection{Simulation study}

We now perform a small Monte Carlo study to gauge the finite sample performance of the noise- and outlier-robust pre-averaging theory and focus specifically on its ability to back out the diffusive- and jump-variation components.

We simulate the logarithm of the efficient price $X$ from four distinct models, namely (1) a Brownian motion (BM) model where $\text{d}X_t = \sigma_t \text{d}W_t$ with $\sigma_t^2 = 0.0391$ corresponding to a return volatility of about 20\% annualized, (2) a \citet*{heston:93a}-type stochastic volatility (SV) model, using parameter values as in \citet*{bakshi-ju-ou-yang:06a}, (3) a two-factor stochastic volatility model with leverage (SV2F-LEV) proposed by \citet*{chernov-gallant-ghysels-tauchen:03a}, using calibrated parameter values reported in \citet*{huang-tauchen:05a}, (4) a Brownian motion plus jump (BMJ) model, in which we position a jump of random size at a random point in the series, ensuring that, on average, it accounts for $20\%$ of total variation. To study the impact of outliers, we also consider a fifth Brownian motion plus outlier (BMO) model, in which we insert an outlier of random size at a random point in the series, ensuring that, on average, it accounts for $20\%$ of total variation.

We simulate ten thousand independent price paths for each model using an Euler discretisation scheme with $N = 40,000$ and add i.i.d. microstructure noise where we select $\omega^2$ by fixing the noise ratio \citep*{oomen:06b} $\gamma = \sqrt{N \omega^2 / \int_0^1 \sigma_s^2 \text{d}s} = 0.50$. This ensures that the magnitude of the noise is in proportion to the efficient price variation. The values of $N$ and $\gamma$ are representative of the tick data in the empirical section (see Table \ref{Table:Descriptive}). While the i.i.d. noise assumption is standard in the literature, some evidence exists that noise at tick frequency can be dependent. We therefore consider an additional scenario
\begin{equation}
u_i = \beta u_{i-1} + \varepsilon_i,
\end{equation}
where $\varepsilon_i \overset{\text{i.i.d.}}{\sim} N(0,\omega^2(1-\beta^2))$. We set $\beta$ using an ad hoc approach by matching it to the coefficient of an AR(1) model calibrated to the trade sign of the S\&P 500 futures contract on the day of the flash-crash (see Figure \ref{Figure:Crash}).\footnote{The Chicago Mercantile Exchange (CME) E-mini S\&P 500 futures contract data have signed trade sizes to indicate if the trade was the result of a buy or sell order.} This gives $\beta = 0.77$.\footnote{If we aggregate trades by millisecond timestamp---a sensible approach for the market data distribution mechanism of the CME in which a collection of individual passive orders that get filled instantaneously against a single aggressor trade all get reported separately---the estimate is substantially lower at $\beta=0.41$. Thus, the value used in the simulations is conservative and, if anything, overstates the actual dependence in microstructure noise.} The setup allows us to gauge the ability of the pre-averaged estimators to isolate the jump component in the presence of stochastic volatility, microstructure noise, and outliers.

\begin{table}
\setlength{\tabcolsep}{0.08cm}
\caption{Properties of the pre-averaging estimators using simulated data.}
\label{Table:RVrobust}
\begin{center}
\begin{tabular}{lccccclccccclccccc}
\hline
& \multicolumn{5}{c}{$RV^\ast$} & & \multicolumn{5}{c}{$BV^\ast$} & & \multicolumn{5}{c}{$BV^\ast_\tau$}\\
\cline{2-6} \cline{8-12} \cline{14-18}
~~~~~~~~~~~~~~~~~~~~$\theta=$ & 0.1 & 0.5 & 1.0 & 2.0 & 5.0 && 0.1 & 0.5 & 1.0 & 2.0 & 5.0 && 0.1 & 0.5 & 1.0 & 2.0 & 5.0 \\
\hline
\multicolumn{18}{l}{\emph{Panel A : i.i.d. noise ($\beta = 0$)}}\\
BM        & 1.00 & 1.00 & 1.00 & 1.00 & 1.01 && 1.00 & 1.00 & 1.00 & 1.01 & 1.01 && 1.00 & 1.00 & 1.00 & 1.01 & 1.01 \\
SV        & 1.00 & 1.00 & 1.01 & 1.01 & 1.01 && 1.00 & 1.00 & 1.00 & 1.00 & 1.00 && 1.00 & 1.00 & 1.00 & 1.00 & 1.00 \\
SV2F-LEV  & 1.00 & 1.00 & 1.00 & 1.00 & 1.00 && 1.00 & 1.00 & 1.00 & 1.00 & 1.00 && 1.00 & 1.00 & 1.00 & 1.00 & 1.00 \\
BMJ       & 1.25 & 1.25 & 1.25 & 1.25 & 1.25 && 1.02 & 1.04 & 1.05 & 1.08 & 1.11 && 1.00 & 1.00 & 1.01 & 1.02 & 1.04 \\
BMO       & 1.00 & 1.00 & 1.00 & 1.00 & 1.01 && 1.00 & 1.00 & 1.00 & 1.01 & 1.01 && 1.00 & 1.00 & 1.00 & 1.01 & 1.01 \\
\multicolumn{18}{l}{}\\
\multicolumn{18}{l}{\emph{Panel B : Dependent noise ($\beta = 0.77$)}}\\
BM        & 1.02 & 1.00 & 1.00 & 1.00 & 1.01 && 1.03 & 1.00 & 1.00 & 1.01 & 1.01 && 1.03 & 1.00 & 1.00 & 1.01 & 1.01 \\
SV        & 1.02 & 1.00 & 1.01 & 1.01 & 1.01 && 1.03 & 1.00 & 1.00 & 1.00 & 1.00 && 1.03 & 1.00 & 1.00 & 1.00 & 1.00 \\
SV2F-LEV  & 1.03 & 1.00 & 1.00 & 1.00 & 1.00 && 1.03 & 1.00 & 1.00 & 1.00 & 1.00 && 1.03 & 1.00 & 1.00 & 1.00 & 1.00 \\
BMJ       & 1.26 & 1.25 & 1.25 & 1.25 & 1.25 && 1.05 & 1.04 & 1.06 & 1.08 & 1.11 && 1.03 & 1.00 & 1.01 & 1.02 & 1.04 \\
BMO       & 1.02 & 1.00 & 1.00 & 1.00 & 1.01 && 1.02 & 1.00 & 1.00 & 1.01 & 1.01 && 1.03 & 1.00 & 1.00 & 1.01 & 1.01 \\
\hline
\end{tabular}
\medskip
\begin{scriptsize}
\parbox{\textwidth}{\emph{Note.} The table reports the sample average of the pre-averaging estimators $RV^\ast$, $BV^\ast$, and $BV^\ast_\tau$ across $10,000$ independent simulation runs and for five choices of pre-averaging horizon $\theta = \{0.1; 0.5; 1 ; 2; 5\}$. $RV^\ast$ and $BV^\ast$ are defined in Eqs. \eqref{Eqn:RVast} -- \eqref{Eqn:BVast}, while $BV^\ast_\tau$ is a truncated version of $BV^\ast$. Truncation details are in the simulation study. We simulate Eq. \eqref{Eqn:Yprocess} using five models: (1) a Brownian motion (BM), where $\text{d}X_t = \sigma_t \text{d}W_t$ with $\sigma_t^2 = 0.0391$ (i.e., about 20\% volatility pro anno (p.a.)), (2) a \citet*{heston:93a}-type stochastic volatility (SV) model, (3) a two-factor SV model with leverage (SV2F-LEV), (4) a Brownian motion plus jump (BMJ) model, and (5) a Brownian motion plus outlier (BMO) model. In model BMJ (BMO), a single jump (outlier) is placed at random in the data and it accounts for $20\%$ of total variation. To add microstructure noise we use $u_i = \beta u_{i-1} + \varepsilon_i$, where $\varepsilon_i \overset{\text{i.i.d.}}{ \sim} N(0, \omega^2(1- \beta^2))$. We consider i.i.d. noise ($\beta = 0$) and dependent noise ($\beta = 0.77$). $\omega^{2}$ is determined via the noise ratio parameter $\gamma = \sqrt{N \omega^{2} / \int_0^1 \sigma_s^2 \text{d}s}$. We set $\gamma = 0.50$ and let the total number of observations be $N = 40,000$, which matches our empirical data. In each simulation, all estimates are normalized by $\int_0^1 \sigma_s^2 \text{d}s$. An unbiased estimator should therefore be close to one on average. The only exception is model BMJ, where $E(RV^\ast) = 1.25$.}
\end{scriptsize}
\end{center}
\end{table}

Table \ref{Table:RVrobust} reports the mean of the pre-averaging estimators $RV^\ast$ and $BV^\ast$, normalized by the diffusive variation component $\int_0^1 \sigma_s^2 \text{d}s$, for five choices of pre-averaging horizon $\theta = \{0.1; 0.5; 1 ; 2; 5\}$.\footnote{\citet*{christensen-oomen-podolskij:13b} provide additional simulation results for sample sizes of $N=10,000$ and $N=100,000$.} As desired, the (normalized) estimates of $RV^\ast$ and $BV^\ast$ are close to unity for all models without jumps. For the BMJ model, $E(RV^\ast) = 1.25 \int_0^1 \sigma_s^2 \text{d}s$ and we note that this is closely matched by the simulated values irrespective of the choice of $\theta$. None of these results change qualitatively for the dependent noise scenario. The jump-robust $BV^\ast$ estimates, however, are biased upward and this bias increases with $\theta$.

To reduce the finite sample bias in the ordinary BV estimator defined by Eq. \eqref{Eqn:BVn}, \citet*{corsi-pirino-reno:10a} propose to augment the estimator with a threshold filter. The idea is to remove any large jumps first (by removing returns that exceed some predefined threshold) before calculating BV. It is natural to use such an approach for the pre-averaged BV estimator as well. To set a good threshold in our setup, under a scaled Brownian motion with i.i.d. noise, as in model BM, the asymptotic distribution of $r_{i,K}^\ast$ is given by:
\begin{equation}
N^{1/4} r_{i,K}^\ast \mid \mathcal{F}_{i / N} \overset{a}{\sim} N \left( 0, \frac{\theta \sigma^2}{12} + \frac{\omega^2}{\theta} \right).
\end{equation}
Thus, we can define a threshold by taking
\begin{equation}
\tau = \frac{q_{\alpha}}{N^{\varpi}} \sqrt{\psi_K \theta \sigma^2 +  \frac{\omega^2}{\theta}} ,
\end{equation}
where $q_{\alpha}$ is the $\alpha$-quantile from the $N(0,1)$ distribution and $\varpi \in (0,0.25)$.

To implement the threshold, we set $\alpha = 0.999$, $\varpi = 0.20$, and replace $\omega^2$ and $\sigma^2$ by $\widehat{\omega}^2_{\text{AC}}$ and $BV^\ast$, respectively. Because the returns are calculated on pre-averaged prices, the data truncation needs some care. Simply removing all observations in which $|r_{i,K}^\ast|> \tau$ would be inefficient as a single jump in the tick data inflates a whole sequence of returns due to the averaging. As an alternative, we adopt a simple algorithm in which for every breach of $\tau$ we compute noisy tick returns from all the prices used in the pre-averaging calculations for that sequence of breaches and we then remove the largest one. After removing the largest noisy tick return, we compute a reconnected price series and repeat the pre-cleaning until all extreme observations are gone. $BV^\ast_\tau$ is then calculated from the price path of the remaining data. A graphical illustration of the procedure can be found in the online Appendix.

The third panel in Table \ref{Table:RVrobust} shows that this thresholding procedure works well and that $BV^\ast_\tau$ largely removes the finite sample bias. Regarding the choice of $\theta$, setting it to large values lowers the effective sampling frequency and slowly reintroduces the finite sample bias. However, setting it to small values exposes it to the dependent noise that operates at high sampling frequencies. Given these trade-offs, we settle on $\theta \in [0.5,2]$ as a reasonable choice.

\section{The role of the jump component in currency and US equity data}\label{Sec:Empirical}

In this section, we use the pre-averaged RV and BV measures to provide an in-depth study into the role of the jump component for a representative set of US equity and foreign exchange rate data. To the best of our knowledge this is the first study to take a comprehensive look at the magnitude of the jump variation component as measured from noisy tick data sampled at ultra-high frequencies. We also report results based on five- and 15-minute sampling frequencies as these are widely used in the literature and provide a natural reference point.

\subsection{Data description}

We have available a large set of tick data covering a representative set of foreign exchange rate, large-cap US equity, and US equity index data. The sample period is from January 2007 through March 2011 (or 1,170 trading days) and includes several episodes of exceptional turbulence such as the global housing and credit crisis, the S\&P 500 flash-crash, the European sovereign debt crisis, the bail-out of Greece, and the Japanese earthquake.

For the equities, we consider all 30 DJIA index constituents as of October 2010, as well as two market-wide indexes traded as highly liquid exchange traded funds (ETFs), namely QQQ tracking the Nasdaq 100 index and SPY tracking the S\&P 500 index. The latter is included because it is used by many other studies (see Table \ref{Table:Literature}) and thus provides a good benchmark. The equity data are extracted from the NYSE Trade and Quote (TAQ) database and include both quote and trade data with millisecond precision timestamps allowing for a detailed view of the price evolution. We restrict attention to the official trading hours from 9:30 to 16:00 local New York time.

For the foreign currency data, we have the three major rates of euro, Japanese yen, and Swiss franc all traded against the US dollar, i.e., EURUSD, USDJPY, and USDCHF, respectively. The data are taken from the EBSLive feed that also provides both trade and quote data with millisecond timestamps. We restrict attention to the most liquid London and New York trading hours from 7:00 to 19:00 Greenwich Mean Time.

\begin{table}
\setlength{\tabcolsep}{0.30cm}
\caption{Jump variation estimates for equity and foreign exchange rate high-frequency data.}
\label{Table:Descriptive}
\begin{center}
\begin{scriptsize}
\begin{tabular}{lrrrrrrrrrrrrrrrr}
\hline
&       &               && \multicolumn{3}{c}{Tick frequency} && \multicolumn{3}{c}{Five-minute frequency} && \multicolumn{3}{c}{15-minute frequency}\\
\cline{5-7} \cline{9-11} \cline{13-15}
& $N$~~~  & $\widehat{\gamma}$~~~       && $RV^\ast$ & $BV^\ast_\tau$ & JV && $RV$ & $BV$ & JV && $RV$ & $BV$ & JV \\
\hline
\multicolumn{6}{l}{\emph{Panel A : Equity indexes}}\\
QQQ             &  48,333 & 0.37 && 22.8 & 23.0 & -1.7 && 22.8 & 22.4 &  3.7 && 22.7 & 21.6 &  8.9\\
SPY             & 113,770 & 0.34 && 21.1 & 21.2 & -0.8 && 20.9 & 20.5 &  4.1 && 20.7 & 19.8 &  8.1\\
\emph{Average}  &  81,052 & 0.36 && 22.0 & 22.1 & -1.3 && 21.9 & 21.5 &  3.9 && 21.7 & 20.7 &  8.5\\
\multicolumn{6}{l}{}\\
\multicolumn{6}{l}{\emph{Panel B : Foreign exchange rates}}\\
EURUSD          &  17,703 & 0.36 &&  9.9 &  9.9 & -0.7 &&  9.8 &  9.4 &  7.1 &&  9.6 &  9.2 &  7.5\\
USDCHF          &   4,629 & 0.19 && 10.4 & 10.3 &  1.8 && 10.7 & 10.2 &  8.8 && 10.4 &  9.8 & 10.3\\
USDJPY          &   9,079 & 0.31 && 10.8 & 10.8 &  0.1 && 10.8 & 10.3 &  9.0 && 10.4 &  9.8 &  9.7\\
\emph{Average}  &  10,470 & 0.29 && 10.4 & 10.3 &  0.4 && 10.4 & 10.0 &  8.3 && 10.1 &  9.6 &  9.2\\
\multicolumn{6}{l}{}\\
\multicolumn{6}{l}{\emph{Panel C : DJIA constituents}}\\
AA              &  31,361 & 0.50 && 48.4 & 48.0 &  1.5 && 48.9 & 47.6 &  5.2 && 47.2 & 44.8 & 10.0\\
AXP             &  31,371 & 0.27 && 46.7 & 46.4 &  1.3 && 46.6 & 45.0 &  6.9 && 45.8 & 43.2 & 11.2\\
BA              &  21,122 & 0.26 && 30.5 & 30.3 &  1.5 && 30.6 & 29.5 &  7.0 && 30.4 & 28.5 & 12.0\\
BAC             &  82,628 & 0.81 && 59.0 & 59.3 & -1.0 && 58.9 & 56.2 &  8.9 && 58.2 & 54.7 & 11.7\\
CAT             &  27,422 & 0.26 && 36.7 & 36.6 &  0.2 && 36.7 & 35.7 &  5.4 && 36.3 & 34.7 &  8.8\\
CSCO            &  54,129 & 0.63 && 31.9 & 31.7 &  1.3 && 32.2 & 31.2 &  5.9 && 31.4 & 30.0 &  9.1\\
CVX             &  35,064 & 0.26 && 30.0 & 30.2 & -1.1 && 29.7 & 29.0 &  4.6 && 28.7 & 27.4 &  8.9\\
DD              &  20,302 & 0.28 && 32.3 & 32.2 &  0.6 && 32.8 & 31.7 &  6.8 && 31.9 & 30.3 &  9.9\\
DIS             &  24,914 & 0.36 && 29.9 & 29.6 &  1.7 && 30.3 & 29.1 &  7.6 && 29.6 & 28.0 & 10.5\\
GE              &  56,193 & 0.76 && 38.2 & 38.1 &  0.9 && 38.2 & 36.6 &  8.0 && 38.1 & 35.4 & 13.7\\
HD              &  30,111 & 0.37 && 34.5 & 34.4 &  1.0 && 34.4 & 33.3 &  6.4 && 33.6 & 31.6 & 12.0\\
HPQ             &  35,631 & 0.33 && 29.8 & 29.4 &  2.8 && 29.9 & 28.9 &  6.3 && 29.1 & 27.2 & 12.5\\
IBM             &  25,483 & 0.27 && 26.0 & 25.9 &  0.8 && 25.4 & 24.6 &  6.3 && 24.7 & 23.2 & 11.9\\
INTC            &  54,657 & 0.64 && 32.2 & 32.1 &  0.6 && 32.4 & 31.2 &  7.4 && 31.3 & 29.3 & 12.4\\
JNJ             &  29,929 & 0.37 && 18.8 & 18.6 &  1.4 && 18.9 & 18.1 &  8.2 && 18.3 & 16.9 & 14.0\\
JPM             &  67,858 & 0.33 && 48.7 & 48.8 & -0.4 && 48.5 & 47.2 &  5.3 && 47.8 & 45.2 & 10.7\\
KFT             &  20,494 & 0.44 && 22.8 & 22.3 &  4.2 && 23.3 & 22.0 & 10.6 && 22.2 & 20.5 & 14.3\\
KO              &  24,862 & 0.33 && 21.2 & 21.0 &  1.3 && 21.5 & 20.3 & 10.9 && 20.5 & 18.9 & 14.9\\
MCD             &  23,610 & 0.29 && 23.7 & 23.6 &  0.9 && 23.8 & 22.7 &  8.8 && 22.7 & 21.2 & 12.6\\
MMM             &  17,005 & 0.27 && 26.2 & 26.0 &  1.4 && 25.9 & 25.1 &  6.6 && 24.7 & 23.3 & 11.5\\
MRK             &  30,107 & 0.37 && 30.1 & 29.7 &  2.7 && 31.4 & 29.8 & 10.0 && 29.8 & 27.5 & 14.4\\
MSFT            &  59,937 & 0.58 && 28.3 & 28.2 &  0.7 && 28.3 & 27.4 &  5.8 && 27.5 & 25.8 & 12.0\\
PFE             &  37,409 & 0.81 && 26.5 & 26.2 &  2.2 && 26.7 & 25.6 &  8.5 && 25.5 & 24.0 & 11.1\\
PG              &  30,271 & 0.33 && 23.0 & 22.5 &  3.6 && 21.2 & 20.4 &  8.1 && 20.1 & 18.8 & 13.1\\
T               &  38,131 & 0.55 && 29.0 & 28.6 &  2.4 && 29.8 & 28.9 &  5.8 && 28.0 & 26.4 & 10.8\\
TRV             &  15,961 & 0.26 && 37.6 & 36.6 &  5.0 && 38.9 & 36.9 &  9.9 && 37.4 & 33.8 & 18.6\\
UTX             &  18,727 & 0.26 && 27.1 & 26.9 &  1.2 && 26.8 & 25.9 &  6.5 && 25.9 & 24.6 & 10.1\\
VZ              &  30,053 & 0.45 && 27.5 & 27.3 &  1.7 && 28.1 & 27.0 &  7.7 && 27.1 & 25.5 & 11.5\\
WMT             &  35,785 & 0.35 && 23.5 & 23.4 &  0.3 && 23.6 & 22.7 &  7.7 && 22.6 & 21.2 & 11.9\\
XOM             &  55,702 & 0.30 && 28.1 & 28.2 & -0.7 && 27.8 & 27.0 &  5.2 && 26.8 & 25.5 &  9.8\\
\emph{Average}  &  35,541 & 0.41 && 31.6 & 31.4 &  1.3 && 31.7 & 30.6 &  7.3 && 30.8 & 28.9 & 11.9\\
\hline
\end{tabular}
\end{scriptsize}
\medskip
\begin{scriptsize}
\parbox{\textwidth}{\emph{Note.} This table reports the average daily quadratic variation (QV) and integrated variance (IV) estimates together with the implied jump variation (JV) at different sampling frequencies for the sample period January 2007 through March 2011. At the tick (five- and 15-minute) frequency, QV is estimated by $RV^\ast$ in Eq. \eqref{Eqn:RVast} ($RV$ in Eq. \eqref{Eqn:RVn}) and IV is estimated by $BV^\ast_\tau$ ($BV$ in Eq. \eqref{Eqn:BVn}). $BV^\ast_\tau$ is a truncated version of $BV^\ast$ in Eq. \eqref{Eqn:BVast}. Truncation details are in the simulation study. The JV from Eq. \eqref{Eqn:JV} is expressed in percentages, while the realized variation measures are expressed as percentage annualized standard deviations. $N$ is the average daily number of observations (after millisecond aggregation) and $\widehat{\gamma}$ the estimated noise ratio given by Eq. \eqref{Eqn:omega}.}
\end{scriptsize}
\end{center}
\end{table}

While the foreign exchange rate data are exceptionally clean, the equity data require some pre-filtering. For the quote data we adopt the routine proposed in \citet*[][hereafter BNHLS]{barndorff-nielsen-hansen-lunde-shephard:09a}.\footnote{In particular, we remove quotes with irregular quote condition, with bid or ask equal to zero, with negative spreads, with spreads larger than ten times the median spread for the day, and when the mid-quote is outside five mean absolute deviations from a centered mean of 50 observations (excluding the quote under investigation).} That paper also suggests removing trades that when synchronized to the quote data are well above (below) the most recent best offer (bid) price. While this procedure works well in most cases, we did come across a few scenarios in which it leads to excessive removal of data. Over a volatile episode, the trade data appear to lag the quote data significantly and, as a direct consequence, the BNHLS filter removes most data (see Figure \ref{Figure:tradefilter} in \ref{Sec:ApndxGraphs}). To address such a situation, we introduce a simple backward-forward matching (BFM) algorithm that seeks to match trades that fall outside the bid-offer to quotes over a short forward-looking window and, if not successful, over a longer backward-looking window. We set the forward-looking window to one second as it merely aims to account for small inaccuracies in the timestamps and relative ordering of the quotes and trades. The backward-looking window is set to 20 minutes so that block trades qualifying for delayed trade reporting are covered by the filter procedure. Table \ref{Table:filter} in \ref{Sec:ApndxGraphs} reports detailed summary statistics for the BFM filter. After cleaning, we have a total sample size of well over four billion observations.

The results we present are based on trade data sampled at every tick, i.e., $N$ is equal to the number of trades for a given security on a particular day.\footnote{Results based on quote data are qualitatively very similar and are not reported to conserve space.} This sampling is standard in the literature and tends to homogenize the data in the common scenario in which the frequency of trade arrivals is correlated with volatility, i.e., heteroskedasticity is reduced through time deformation \citep*{ane-geman:00a}. This has the advantage---as made explicit in Eq. \eqref{Eqn:loglog} below---that it reduces the finite sample bias in the $BV^\ast$ measure. Furthermore, the consistency result in Proposition \ref{Thm:CLT} is robust to irregular sampling (see \ref{Sec:ApndxSampling}).

\begin{figure}[ht!]
\begin{center}
\caption{$\theta$-signature plot of average annualized volatility and jump proportion.}
\label{Figure:Theta_Sigplot} 
\begin{tabular}{cc}
\footnotesize{Panel A: Equity data} & \footnotesize{Panel B: Foreign exchange rate data} \\
\includegraphics[height=8cm,width=0.48\textwidth]{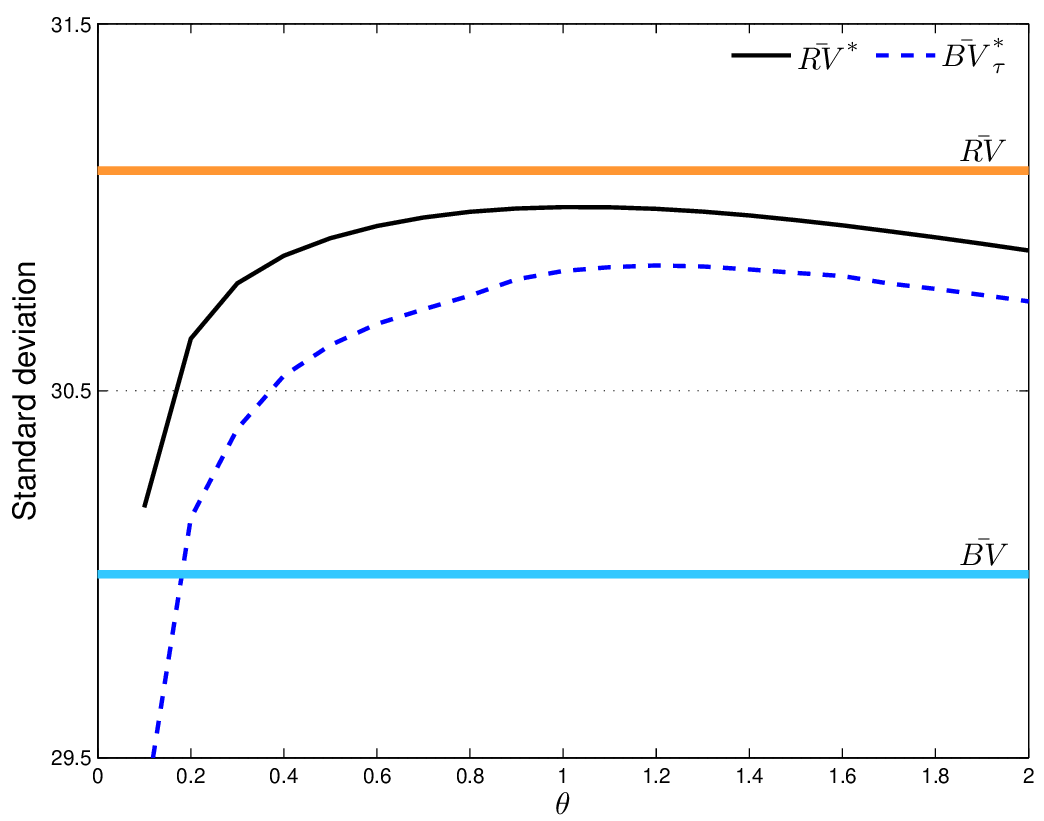} &
\includegraphics[height=8cm,width=0.48\textwidth]{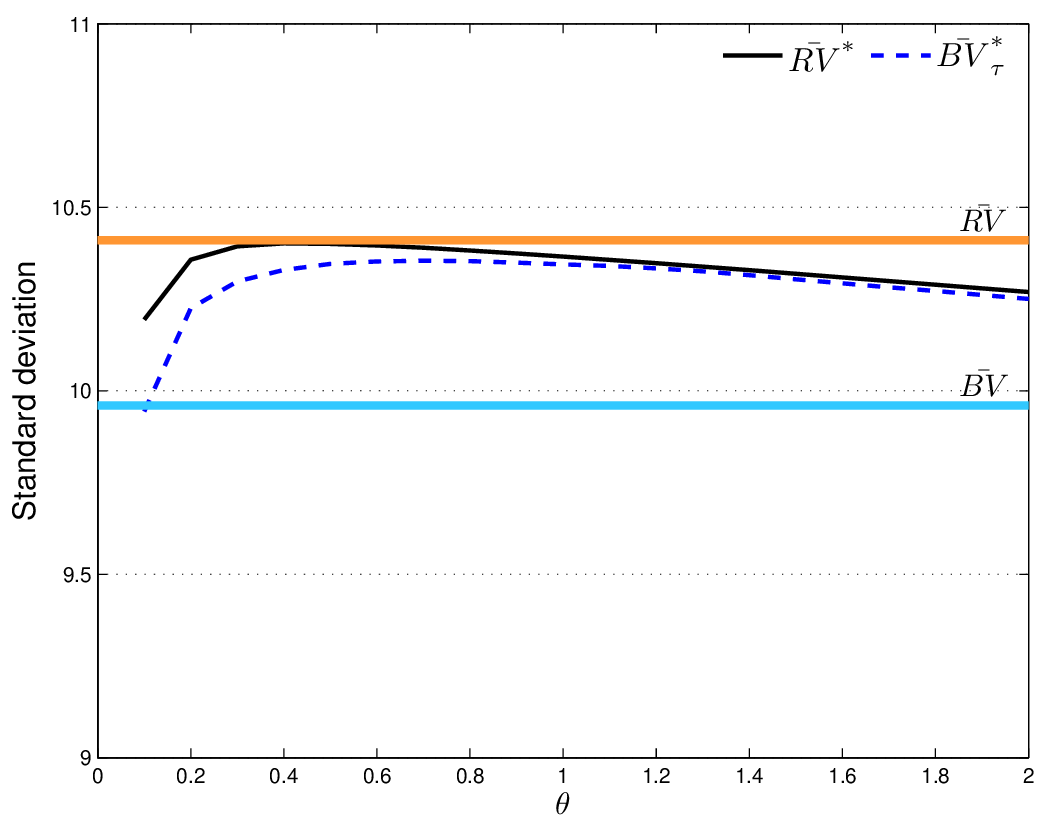} \\
\footnotesize{Panel C: Equity data} & \footnotesize{Panel D: Foreign exchange rate data} \\
\includegraphics[height=8cm,width=0.48\textwidth]{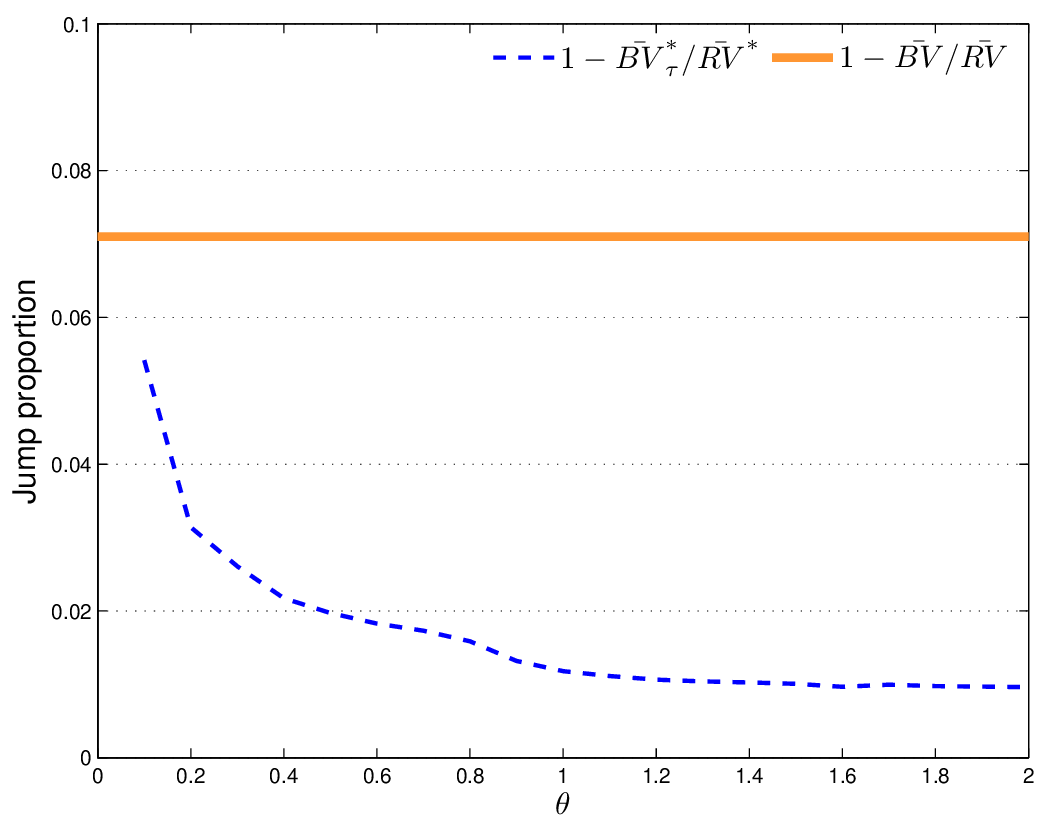} &
\includegraphics[height=8cm,width=0.48\textwidth]{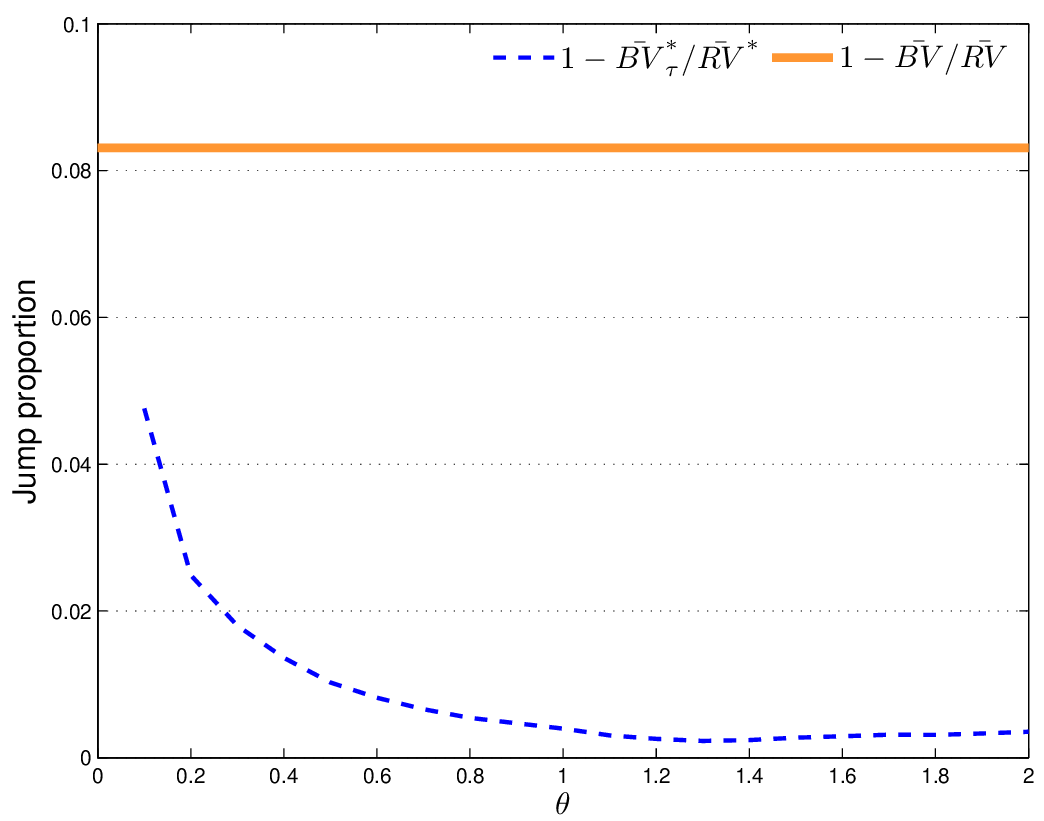} \\
\end{tabular}
\begin{scriptsize}
\parbox{\textwidth}{\emph{Note.} Panels A and B report the annualized volatility of the noise-robust $RV^\ast$ defined in Eq. \eqref{Eqn:RVast} and $BV^\ast_\tau$, which is a truncated version of $BV^\ast$ from Eq. \eqref{Eqn:BVast}. Truncation details are in the simulation study. We first average the time series of estimates for individual assets in our sample and then average over the cross-section of stocks and foreign exchange rate pairs. To explore the sensitivity of our results to pre-averaging, we compute the estimates for many different values of pre-averaging horizon $\theta$. As a comparison, the corresponding average values of the realized variance and bi-power variation from Eqs. \eqref{Eqn:RVn} -- \eqref{Eqn:BVn} based on five-minute data are indicated with a solid thick line. In Panels C and D, the numbers are converted into an estimate of the jump proportion using Eq. \eqref{Eqn:JV}.}
\end{scriptsize}
\end{center}
\end{figure}

\subsection{The magnitude of the jump component}

To start our investigation into the magnitude of the jump component, Table \ref{Table:Descriptive} reports the average daily quadratic variation and integrated variance, together with the implied jump variation, as estimated by $RV^\ast$ and $BV^\ast_\tau$ from tick data. For comparison, we also report the results using RV and BV applied to the conventional five- and 15-minute sampled data. The single most important observation to be made is that the jump component inferred from tick data is small. Much smaller than what is computed from the five- and 15-minute data and much smaller than what is reported by the extant literature as summarized in Table \ref{Table:Literature}. The cross-sectional averages by category convey the message most clearly: Using tick data, we find that jumps account for a mere 1.3\% of total return variation for the DJIA constituents, 0.4\% for the currency pairs, and -1.3\% for the equity indexes (under the null of no jumps, the distribution of $RV^\ast-BV^\ast_\tau$ is symmetric around zero, so small negative numbers can occur due to measurement error). The corresponding figures for the widely used five- (and 15-) minute data are in line with the literature at a much more substantial 7.3\% (11.9\%), 8.3\% (9.2\%), and 3.9\% (8.5\%) for individual equity, currency, and equity index data, respectively. Also, for each asset there is a strict ordering in JV by sampling frequency in which the tick data give the lowest (and arguably most accurate) value and the 15-minute data the highest.

The results in Table \ref{Table:Descriptive} use a pre-averaging parameter of $\theta=1$. Regarding the robustness of our results, Figure \ref{Figure:Theta_Sigplot} draws $RV^\ast$ and $BV^\ast_\tau$ as a function of $\theta$ together with the implied jump proportion for equity and currency data separately. This so-called $\theta$-signature plot shows that for values of above $0.5$ (the lower bound also identified in the simulation study above), the results are remarkably insensitive to the choice of $\theta$. Only when $\theta$ is set to very small values, the pre-averaging is insufficient to mitigate the impact of microstructure noise and a pronounced downward bias results [see \citet*{hautsch-podolskij:13a} for a more detailed study].

\begin{figure}[ht!]
\begin{center}
\caption{Jump proportion, quarter-by-quarter.}
\label{Figure:JpmQuarter}
\begin{tabular}{cc}
\footnotesize{Panel A: Equity data} & \footnotesize{Panel B: Foreign exchange rate data} \\
\includegraphics[height=8cm,width=0.48\textwidth]{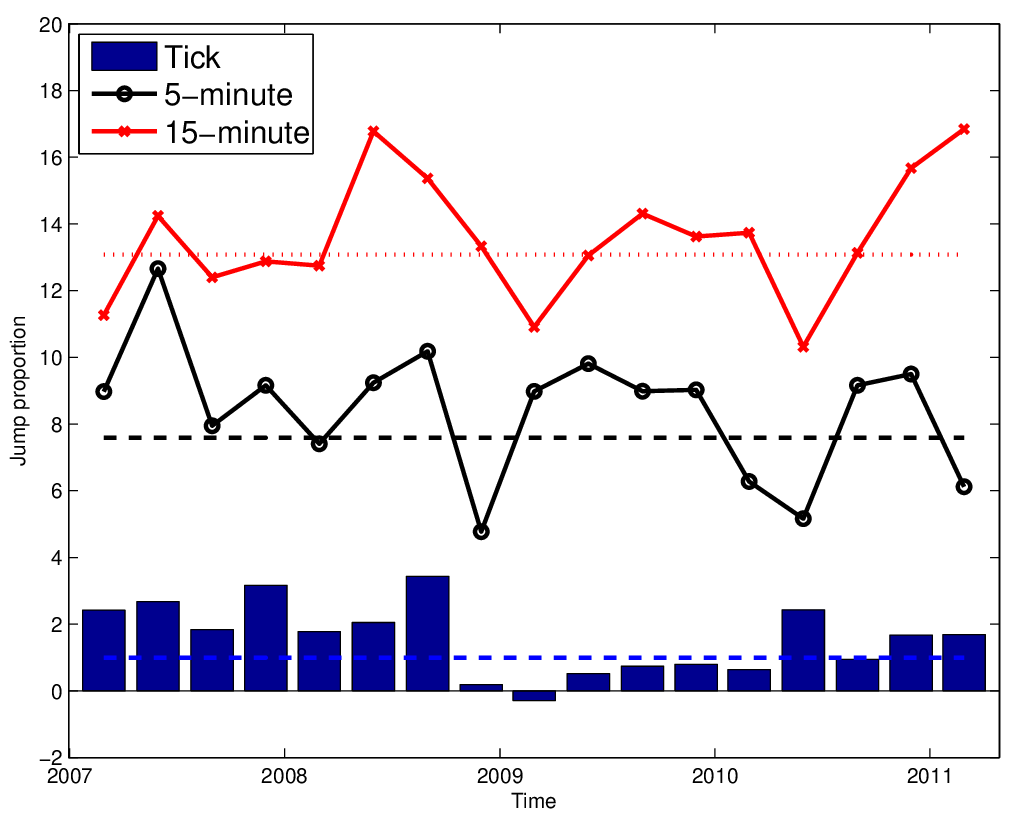} &
\includegraphics[height=8cm,width=0.48\textwidth]{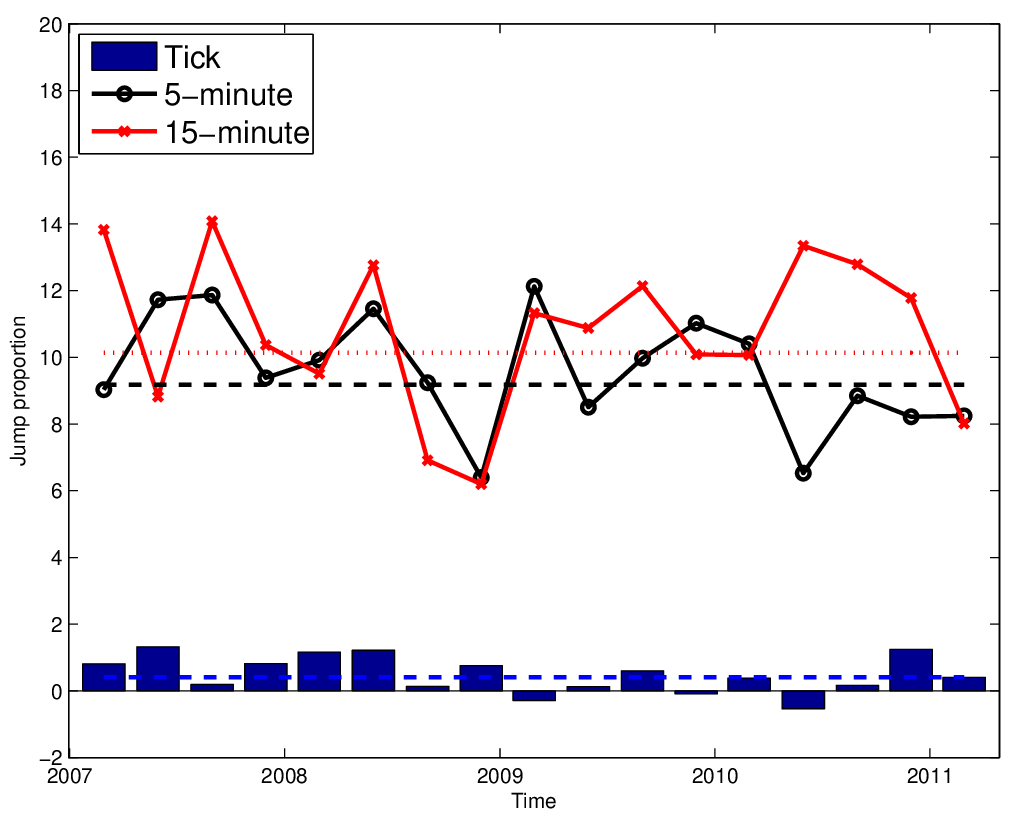} \\
\end{tabular}
\begin{scriptsize}
\parbox{\textwidth}{\emph{Note.} This figure draws the quarterly jump variation in percent as implied by tick data (tick) using the noise-robust $RV^\ast$ defined in Eq. \eqref{Eqn:RVast} and $BV^\ast_\tau$, which is a truncated version of $BV^\ast$ from Eq. \eqref{Eqn:BVast}. Truncation details are in the simulation study. We also report the jump variation as measured by the $RV$ and $BV$ from Eqs. \eqref{Eqn:RVn} -- \eqref{Eqn:BVn} based on five- and 15-minute data. The dashed line is the average across the full sample 2007:Q1 -- 2011:Q1.}
\end{scriptsize}
\end{center}
\end{figure}

Next, regarding the robustness of our finding over time, Figure \ref{Figure:JpmQuarter} plots the quarterly jump variation estimates over the full four-year sample period. Again, the results appear stable and the JV calculated from tick data is small and consistently below the JV estimates calculated from lower-frequency data.

To conclude, an alternative way to summarize and visualize our main finding is via a scatter plot of daily estimates of the integrated variance versus the quadratic variation (Figure \ref{Figure:OlsPlot}). In the absence of jumps, the integrated variance equals the quadratic variation and the estimates are expected to center around a 45-degree line through the origin. Panels A and B show the results for the $RV^\ast$ and $BV_\tau^\ast$ estimates based on tick data. A standard linear regression indicates that the slope coefficient is insignificantly different from one for both the equity and the currency data. By comparison, Panels C and D report the results based on five-minute data. Consistent with the results in Table \ref{Table:Descriptive}, the slope coefficient is significantly smaller than one.

\begin{figure}[ht!]
\begin{center}
\caption{Regression analysis of bi-power variation against realized variance.}
\label{Figure:OlsPlot}
\begin{tabular}{cc}
\small{Panel A: Equity tick data} & \small{Panel B: Foreign exchange rate tick data} \\
\small{$BV_\tau^\ast = \underset{(0.442)}{0.012} + \underset{(-1.737)}{0.987}RV^\ast$} &
\small{$BV_\tau^\ast = \underset{(0.132)}{0.001} + \underset{(-0.354)}{0.994}RV^\ast$} \\
\includegraphics[height=8cm,width=0.48\textwidth]{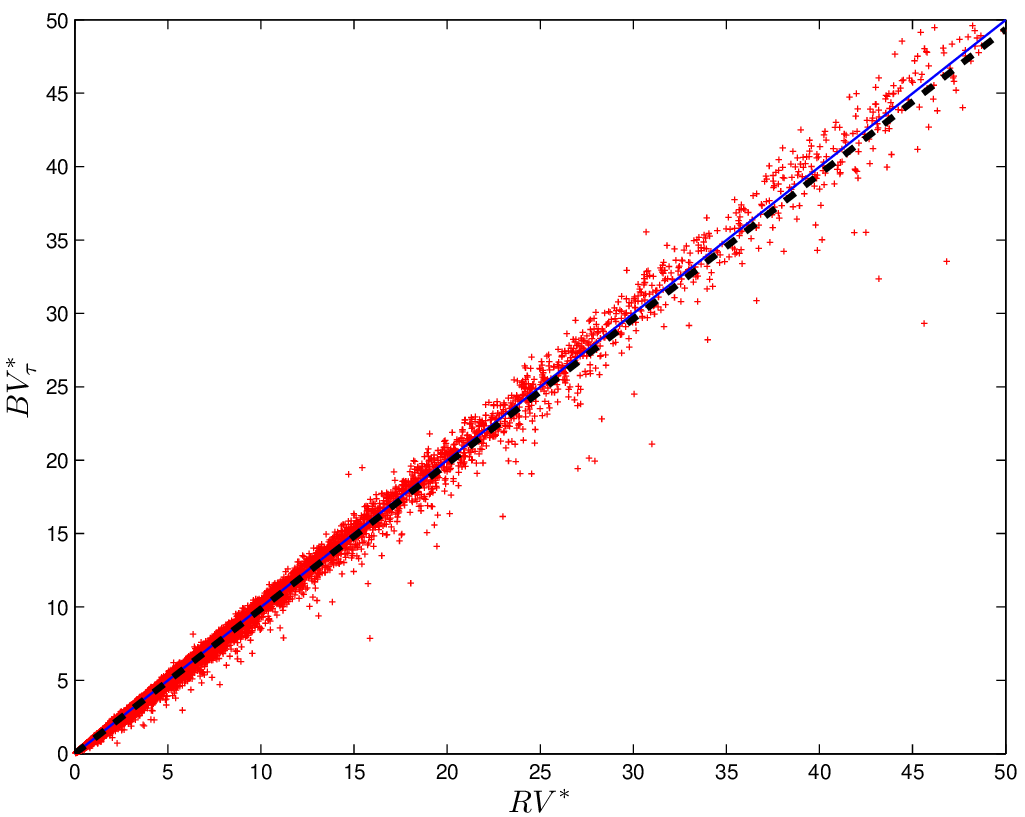} &
\includegraphics[height=8cm,width=0.48\textwidth]{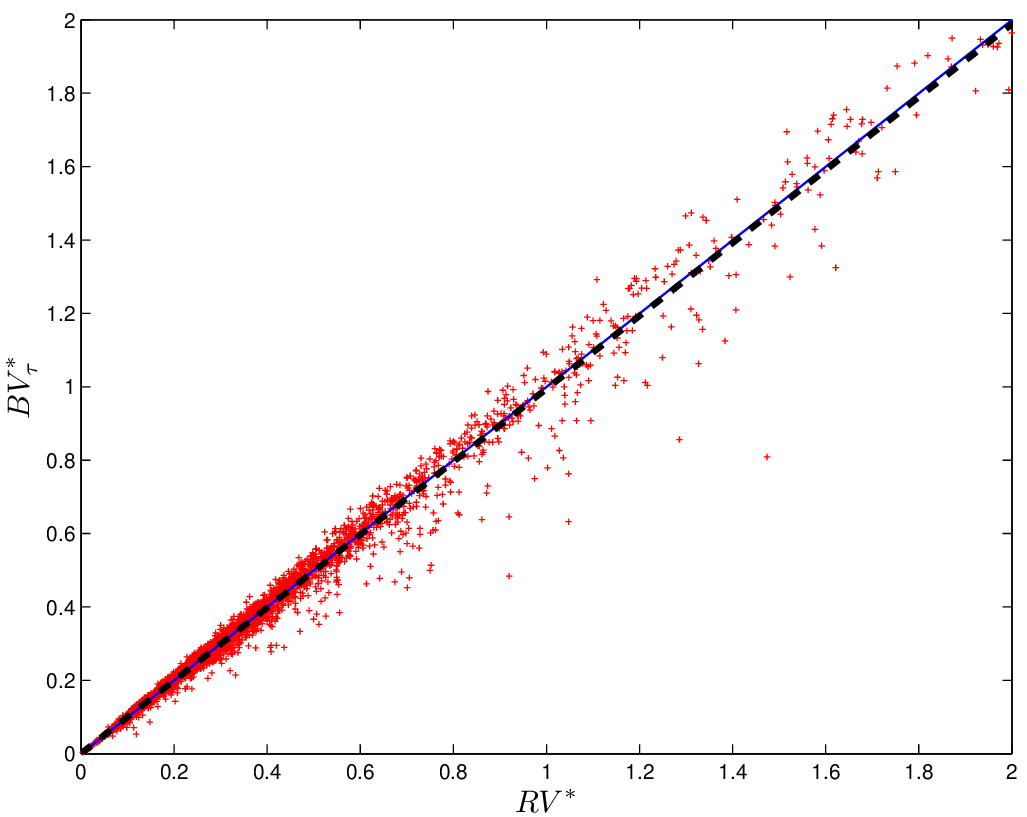} \\
\small{Panel C: Equity five-minute data} & \small{Panel D: Foreign exchange rate five-minute data} \\
\small{$BV = \underset{(1.075)}{0.067} + \underset{(-5.194)}{0.913}RV$} &
\small{$BV = \underset{(1.225)}{0.014} + \underset{(-3.498)}{0.890}RV$} \\
\includegraphics[height=8cm,width=0.48\textwidth]{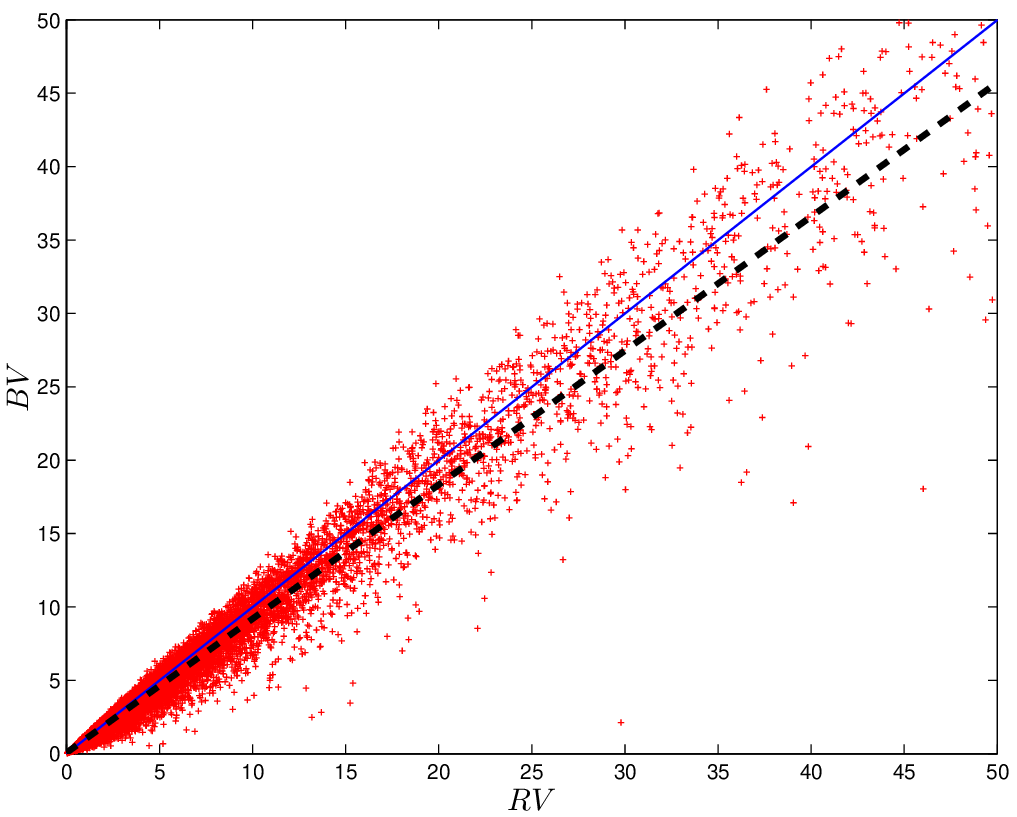} &
\includegraphics[height=8cm,width=0.48\textwidth]{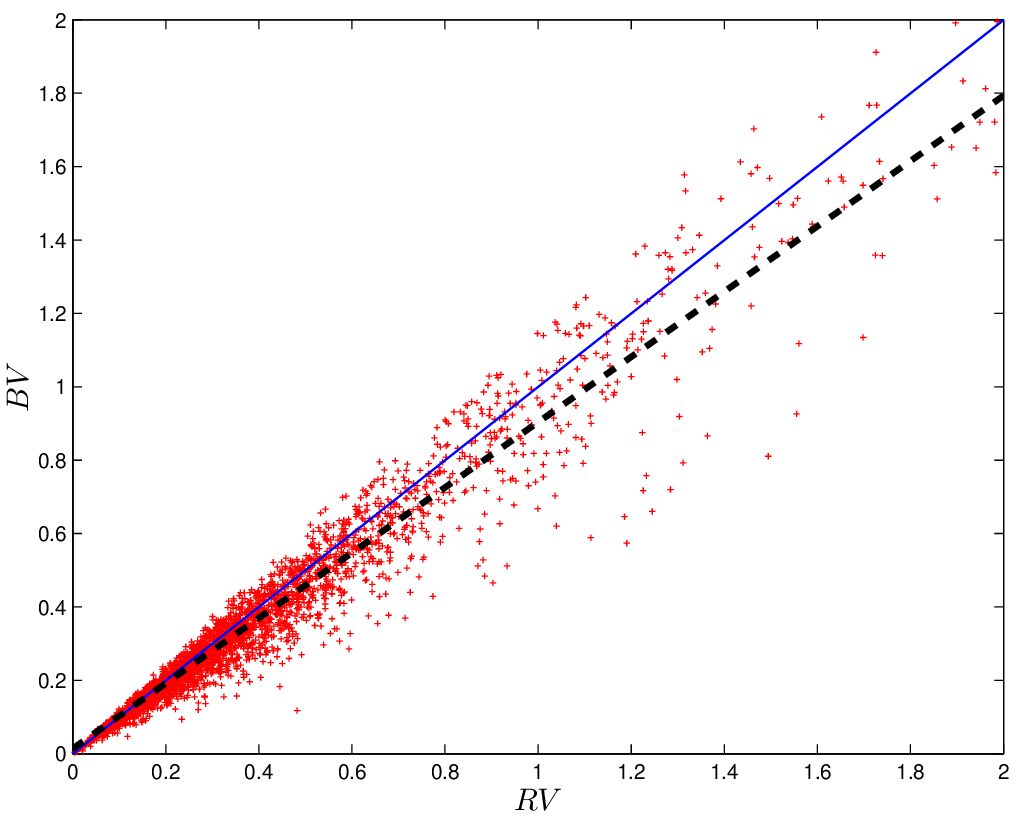} \\
\end{tabular}
\begin{scriptsize}
\parbox{\textwidth}{\emph{Note.} Panels A and B plot the pairwise values of $RV^\ast$ and $BV^\ast_\tau$ based on equity and currency pair tick data, respectively. $RV^\ast$ is defined in Eq. \eqref{Eqn:RVast}, while $BV^\ast_\tau$ is a truncated version of $BV^\ast$ from Eq. \eqref{Eqn:BVast}. Truncation details are in the simulation study. Panels C and D plot the corresponding results for $RV$ and $BV$ from Eqs. \eqref{Eqn:RVn} -- \eqref{Eqn:BVn} based on five-minute data. A plus (\textcolor{red}{\textbf{+}}) indicates an observation for a given day and instrument. We fit a linear regression to the data and conduct a test for the hypothesis that the intercept is zero and the slope is one. The estimated regression equation is reported in subpanels along with $t$-statistics (in parenthesis) based on White's heteroskedasticity-consistent standard errors. The dashed line is based on the estimated regression, while the solid 45-degree line offers a reference point to the null hypothesis.}
\end{scriptsize}
\end{center}
\end{figure}

\subsection{The burst of volatility hypothesis}

The results above point towards a much-reduced role for jumps in explaining the total return variation for equity and currency data. We find that the jump variation computed from tick data is an order of magnitude smaller than the consensus view in the literature that uses five-minute or lower-frequency data. How can these findings be reconciled?

\begin{figure}[ht!]
\begin{center}
\caption{USDJPY earthquake episode: Jump or burst in volatility?}
\label{Figure:Quake}
\begin{tabular}{cc}
\footnotesize{Panel A: Time around earthquake} & \footnotesize{Panel B: Closeup March 16 -- 18} \\
\includegraphics[height=8cm,width=0.48\textwidth]{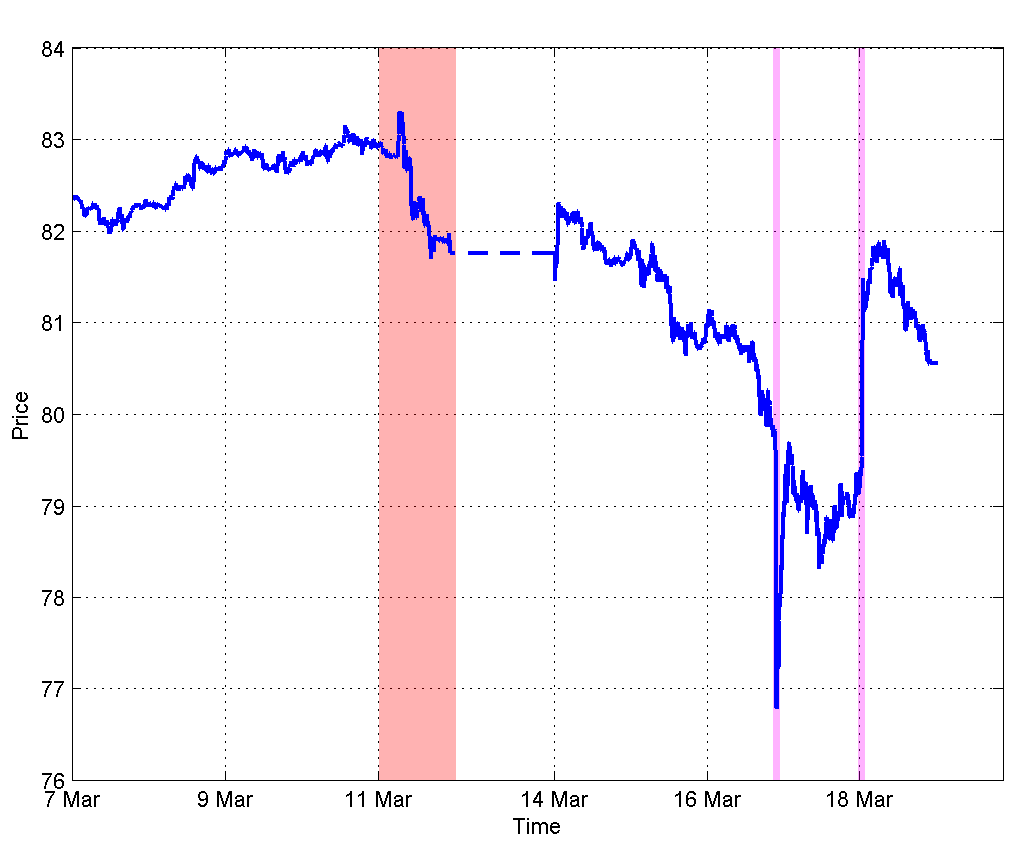} &
\includegraphics[height=8cm,width=0.48\textwidth]{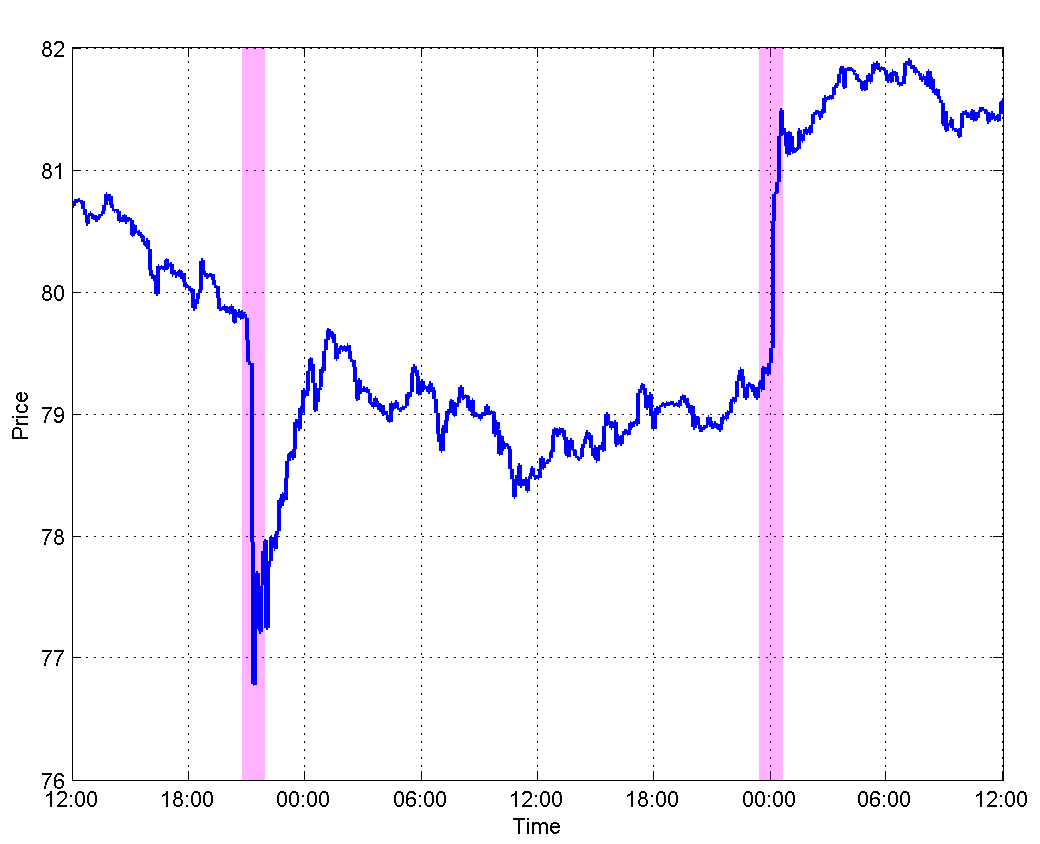} \\
\footnotesize{Panel C: ``Flash-crash'' (trade-by-trade)} & \footnotesize{Panel D: Intervention (trade-by-trade)} \\
\includegraphics[height=8cm,width=0.48\textwidth]{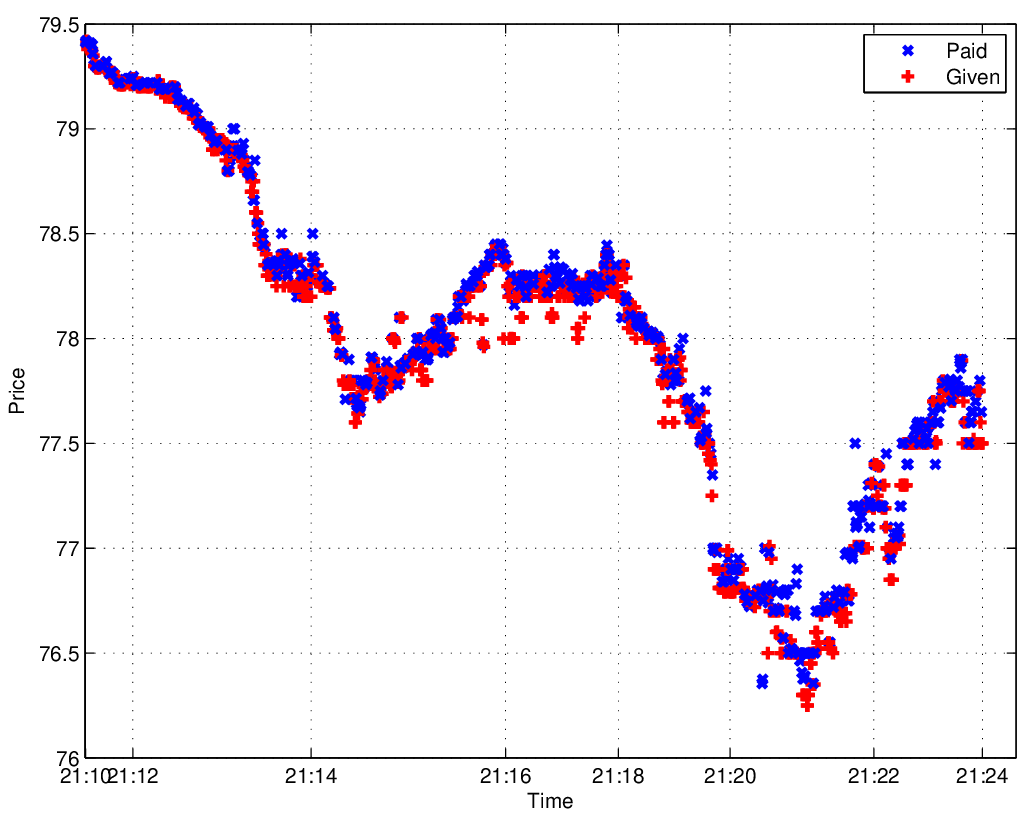} &
\includegraphics[height=8cm,width=0.48\textwidth]{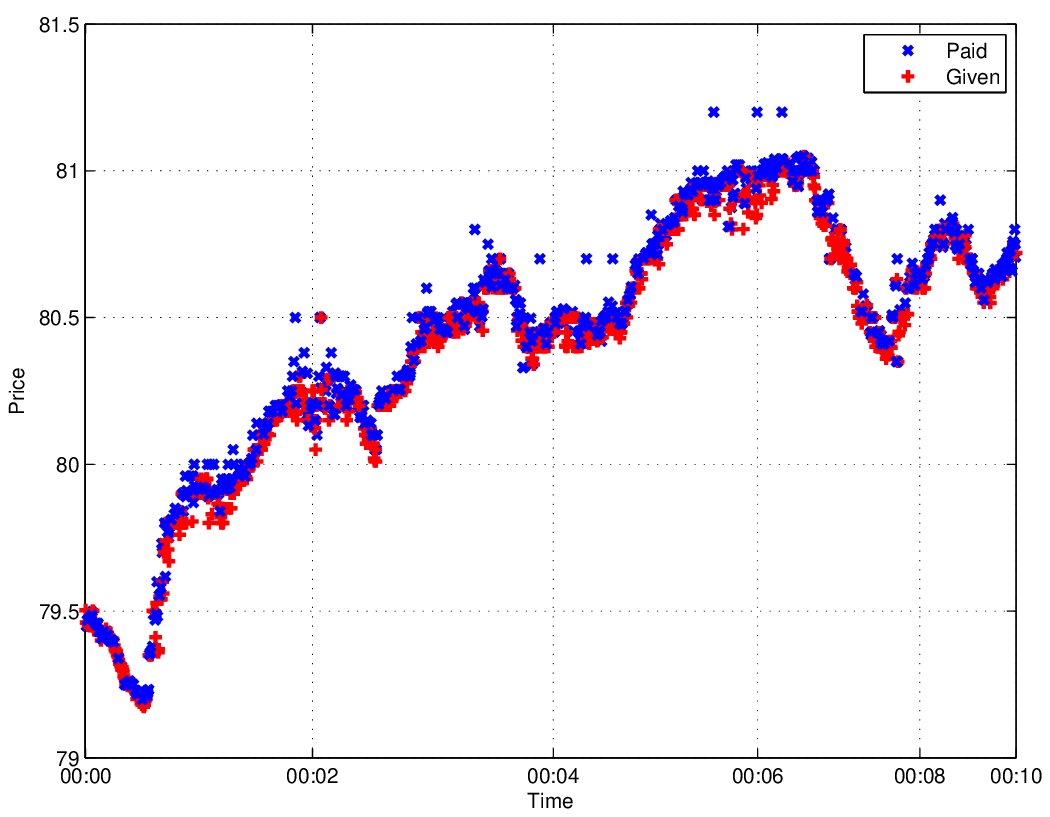} \\
\end{tabular}
\begin{scriptsize}
\parbox{\textwidth}{\emph{Note.} We draw USDJPY spot data for March 7 -- 18, 2011 with time reported in Greenwich Mean Time. The data were collected from EBSLive. The upper panels are based on five-minute data, while the lower panels use tick-by-tick data. The day of the Fukushima earthquake (March 11) is highlighted in Panel A, as are the two five-minute intervals covering the ``flash-crash'' episode on March 16 and the central bank intervention on March 18 that we discuss in the main text. In Panels C and D, ``Paid'' and ``Given'' denote aggressive buy- and sell-orders, respectively.}
\end{scriptsize}
\end{center}
\end{figure}

One explanation is that our ability to distinguish true discrete jumps from continuous diffusive variation diminishes as we lower the sampling frequency $N$. Specifically, a short-lived burst of volatility is likely to be mistakenly identified as a jump with less frequent sampling. The flash-crash episode is a prominent example of such a scenario. The Japanese earthquake provides another for the currency data. Figure \ref{Figure:Quake} plots the evolution of the USDJPY rate in Panels A and B for the period in question. From the five-minute data, two very substantial jumps are evident. One on March 16 when the USDJPY experienced a flash-crash type drop in the rate and another on March 18 following a coordinated central bank intervention aimed at devaluing the JPY. Panels C and D draw the tick data for the respective episodes and tell a very different story in which price jumps remain elusive. 

\subsubsection{A burst of volatility simulation}

To provide support for the burst of volatility hypothesis, we start by considering the simplest of simulation studies. We draw noise-free prices from a scaled Brownian motion, $\text{d}X_t = \sigma_t \text{d}W_t$ for $t \in [0,1]$, where $\sigma_t = 3\sigma^*$ for $t \in [16/32, 17/32]$ and $\sigma_t = \sigma^\ast$ otherwise. $\sigma^\ast$ is fixed at a level corresponding to 40\% in annualized terms. In this scenario, $\sigma_t$ is piecewise constant and increases three-fold over a short interval of the day (equivalent to a 15-minute interval based on an eight-hour trading session). However, because the price path is still continuous, the true JV is zero. We simulate noisy log-prices as above, $Y = X + u$, using i.i.d. noise with a noise ratio parameter of $\gamma = 0.5$ and then round $Y$ to the nearest cent to induce price discreteness, based on a starting price of \$50 in levels. Finally, we calculate RV and BV across a range of sampling frequencies and report the average implied JV over 10,000 independent simulation runs. Panel A of Figure \ref{Figure:JVsign} plots the result on a log-log scale. While the true JV is zero, the estimated JV from low-frequency data is positive and increases with a decrease in the sampling frequency. When applied to noisy data, i.e., $Y$, the effects of microstructure noise invalidate the traditional variation measures at the highest sampling frequencies. For comparison, Panel B plots the corresponding results for the equity data (averaged across all instruments over the full sample) and a pattern similar to that of the simulation emerges thereby providing support for the burst of volatility hypothesis.

\begin{figure}[ht!]
\begin{center}
\caption{Jump variation signature plot.}
\label{Figure:JVsign}
\begin{tabular}{cc}
\small{Panel A: Simulated data (true JV = 0)} & \small{Panel B: Equity data (true JV unknown)} \\
\includegraphics[height=8cm,width=0.48\textwidth]{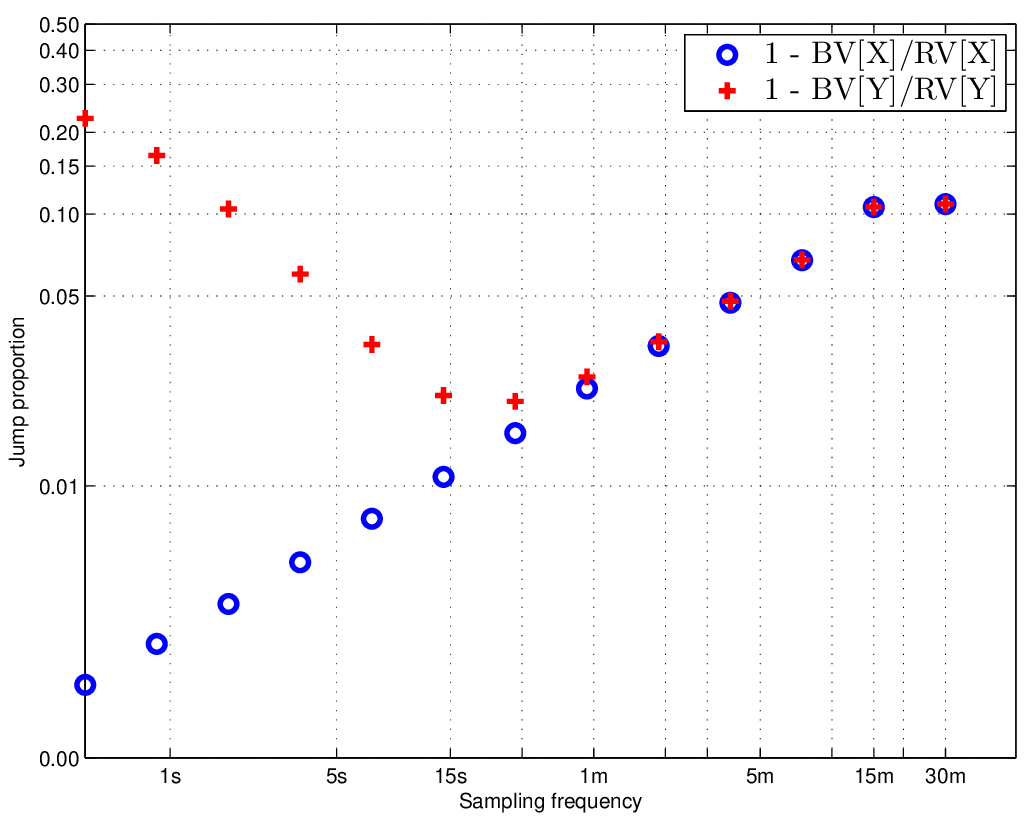} &
\includegraphics[height=8cm,width=0.48\textwidth]{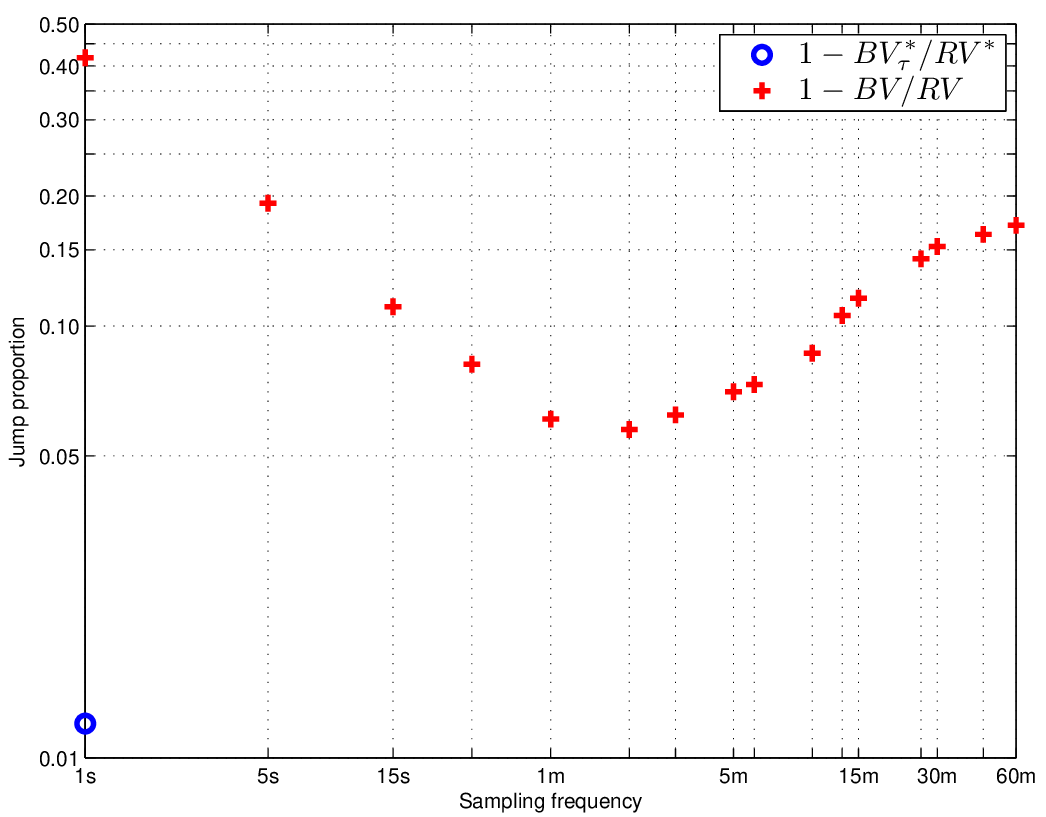} \\
\end{tabular}
\begin{scriptsize}
\parbox{\textwidth}{\emph{Note.} The graph displays average jump proportion estimates as a function of the sampling frequency. The left panel is based on simulated data with a true JV of zero, while the right panel covers the full sample of equity instruments (both cross-sectionally and over time). In both panels, the plus marks (\textcolor{red}{\textbf{+}}) denote estimates based on noisy prices by comparing the standard RV and BV measures defined in Eqs. \eqref{Eqn:RVn} -- \eqref{Eqn:BVn}. These are not valid in the presence of noise. In Panel A, the circles (\textcolor{blue}{\textbf{o}}) denote the JV computed with RV and BV by using the simulated ``unobserved'' noise-free prices. In Panel B, the circle indicates our noise-robust JV estimate based on the real equity tick data using $RV^\ast$ and $BV^\ast_\tau$. $RV^\ast$ is defined in Eq. \eqref{Eqn:RVast}, while $BV^\ast_\tau$ is a truncated version of $BV^\ast$ from Eq. \eqref{Eqn:BVast}. Truncation details are in the simulation study. The latter are both valid in the presence of noise.}
\end{scriptsize}
\end{center}
\end{figure}

The approximately linear relation between the implied JV and the sampling frequency on a log-log scale has some theoretical foundation. In \ref{Sec:BVburst}, we derive the expression for the unconditional bias of BV:
\begin{equation} \label{Eqn:loglog}
E\left(BV - \int_0^1 \sigma_s^2 \text{d}s \right) = -\frac{1}{N} E \left( \frac{1}{12} \int_0^1 \frac{\upsilon_{s}^{2}}{\sigma_{s}^2} \text{d}s \right) + o(N^{-1}),
\end{equation}
where $\upsilon$ is the volatility of volatility. From here, with time-varying volatility, and in the absence of noise, BV is downward biased in finite samples, translating into an inflated JV measure, the effect is stronger with a high volatility of volatility $\upsilon$ or a lowering of the sampling frequency $N$, and the log bias is approximately linear in the log sampling frequency. Consequently, Eq. \eqref{Eqn:loglog} formally illustrates the point that a short-lived burst of volatility can be spuriously attributed to the jump variation component and that its magnitude grows as the sampling frequency of the data is lowered. This line of thought is also consistent with \citet*{ait-sahalia-jacod:09a} who emphasize that jumps can be identified only by increasing the sampling frequency to the limit. This is precisely what our paper does and leads us to argue for a much-reduced role of jumps and by implication, an elevated role for the volatility process.

\subsubsection{Further empirical results} \label{Sec:Femp}

The realised measures discussed in this paper can be used to gauge the magnitude of the jump component but they are not able to identify actual jumps. So to complement the empirical results above and the discussion surrounding the burst of volatility hypothesis, we conduct some further analysis that allows us to study the properties of the jumps in more detail.

To locate jumps from five-minute data, we employ the nonparametric test of \citet*{lee-mykland:08a}, i.e.,
\begin{equation} \label{Eqn:LM}
\mathcal{L}_{i} = \frac{r_{i}}{\widehat{ \sigma}_{i}} \qquad \text{with} \qquad \widehat{ \sigma}_{i}^{2} = \frac{ \pi}{2} \frac{1}{M - 2} \sum_{j = i - M + 2}^{i - 1}|r_{j}||r_{j - 1}|.
\end{equation}
Intuitively, $\mathcal{L}_i$ compares each five-minute return to a local estimate of the instantaneous volatility and identifies a jump when the ratio is ``large''.\footnote{Throughout this section, we use a 1\% level of significance to define ``large.'' Moreover, we set the parameter $M$ in Eq. \eqref{Eqn:LM} equal to the value suggested by \citet*{lee-mykland:08a}.} We should note that $\widehat{\sigma}_i^2$ is robust to jumps. \citet*{lee-mykland:08a} show that the distribution of $\mathcal{L}_i$ is asymptotically standard normal in the absence of jumps and they specify a threshold, say $c$, for the jump detection statistic based on Gaussian extreme value theory. In addition, they assume that when a jump occurs, the size of it dominates $r_{i}$, i.e., when $|\mathcal{L}_i| > c$, a jump $\mathcal{J}_i$ is established at time $i$ with size $r_{i}$. We follow that route here.

The first column of Table \ref{Table:maxgap} reports the results of the test for every security in our universe. It shows that the number of jumps identified by this approach ($\mathcal{J}_{\#}$) translates to an average of roughly one and a half jumps per month for the equity indexes, slightly less than two jumps per week for the foreign exchange rates, and about one jump per week for the DJIA constituents. The average absolute jump size ($\mathcal{J}_{avg}$) is around $1\%$ for individual equities with the largest jumps ($\mathcal{J}_{max}$) measured at over $10\%$ (e.g., AA = Alcoa, MRK = Merck, and TRV = Travelers). The jumps in currencies and equity indexes are smaller in magnitude. With jump sizes at hand, it is straightforward to calculate the implied jump variation (JV) and interestingly, we find that the \citet*{lee-mykland:08a} procedure yields estimates that are broadly in line with those obtained from the low-frequency realised measures reported in Table \ref{Table:Descriptive}.

To apply this test to tick data, we need to modify it to correctly account for microstructure noise. To see this, note that as we move to tick frequency, the noise process starts to dominate $r_{i}$ and, asymptotically,
\begin{equation}
\displaystyle \mathcal{L}_i \simeq \frac{u_{i} - u_{i - 1}}{ \displaystyle \sqrt{ \frac{ \pi}{2} \frac{1}{M - 2} \sum_{j = i - M + 2}^{i - 1}|u_{j} - u_{j - 1}| |u_{j - 1} - u_{j - 2}|}}.
\end{equation}
The random variables $(\mathcal{L}_i)$ are $1$-dependent and identically distributed. In general, it is not clear how the rule $|\mathcal{L}_{i}| > c$ works here. If the distribution of $u_{i} - u_{i-1}$ is close to Gaussian, the test has little power and will fail to identify jumps. On the other hand, if $u_i - u_{i-1}$ has heavier tails than a normal, the statistic will be oversized and identify more jumps than expected under the null. The bottom line is that the test only depends on the distribution of the noise and not on the physical absence or presence of jumps.\footnote{We verified this empirically by computing the above test on tick data and indeed found its behavior to be very erratic.}

We can, however, apply the \citet*{lee-mykland:08a} test to tick data after pre-averaging, i.e.,
\begin{equation} \label{Eqn:LMstar}
\mathcal{L}_{i}^{ \ast} = \frac{r^{ \ast}_{i, K}}{ \widehat{ \sigma}_{i, K}^{ \ast}} \qquad \text{where} \qquad \widehat{ \sigma}_{i, K}^{ \ast2} = \frac{ \pi}{2} \frac{1}{M - 2} \sum_{j = i - M + 2}^{i - 1} |r^{ \ast}_{j, K}||r^{ \ast}_{j - K, K}|,
\end{equation}
for $i = M - 2 + K, M - 2 + 2K, \ldots$. The constant $M$ is again chosen as recommended by \citet*{lee-mykland:08a}, but here with respect to the number of non-overlapping returns following pre-averaging. The sequence $(\mathcal{L}_i^{\ast})$ is asymptotically i.i.d. standard normal if there are no jumps in the data and we can therefore apply the above critical value to identify jumps, i.e., $|\mathcal{L}_{i}^{ \ast}| > c$.

The second column of Table \ref{Table:maxgap} reports the results of the pre-averaged \citet*{lee-mykland:08a} test applied to tick data. It confirms our main finding: both the number and magnitude of identified jumps are substantially diminished. This leads to implied jump variation estimates of just below $1\%$ across the board, which agrees with the results reported in Table \ref{Table:Descriptive}. The close alignment between the (implied) JV estimates calculated using bi-power variation and the \citet*{lee-mykland:08a} jump test on five-minute as well as tick data confirms that the conclusions we draw in this paper are robust to the choice of test statistic.

To verify that our findings are also robust to the use of pre-averaging, we conduct a final piece of analysis that takes a closer look at the raw tick data. Specifically, for every jump identified by the \citet*{lee-mykland:08a} procedure using five-minute data, we search the tick data over the corresponding time interval and compare the identified jump ($\mathcal{J}_i$) to the maximum increment (in percent) of any two consecutive prices observed at this level of granularity. We refer to this maximum tick increment as the ``maxgap'' measure and denote it by $\mathcal{G}_i$.\footnote{To reduce the impact of microstructure noise, we measure price discontinuity at the tick level using a maxgap measure defined in \ref{Sec:ApndxGraphs}, where we set the buffer size $\delta$ such that it covers five observations on each side. The reported results are fairly insensitive to variations in $\delta$.} If an identified jump is genuine, we would expect to observe a discontinuity of similar magnitude in the tick data, i.e., $\mathcal{J}_i \simeq \mathcal{G}_i$. If, on the other hand, a jump is spuriously identified as a result of a burst in volatility, then we expect to observe a continuous, albeit volatile, price evolution at the tick frequency, i.e., $| \mathcal{G}_{i}| \ll | \mathcal{J}_{i}|$.

The third column of Table \ref{Table:maxgap} reports the maxgap results. The magnitude of the price discontinuity observed at the finest resolution is an order of magnitude smaller than the jumps identified from low-frequency data. The average absolute maxgap measure ($\mathcal{G}_{avg}$) implies that price discontinuities observed at tick frequency are in the region of $0.1\%$ for individual equities, whereas the five-minute jumps average about $1\%$. The maximum maxgaps at tick frequency ($\mathcal{G}_{max}$) are also substantially smaller than the maximum five-minute jumps ($\mathcal{J}_{max}$). Figure \ref{Figure:maxgap1} illustrates this further by drawing the size of every jump identified by the \citet*{lee-mykland:08a} procedure using five-minute data against the corresponding maxgap measure. For both the equity and foreign exchange rate data, the slope coefficient of a linear regression of $\mathcal{G}$ on $\mathcal{J}$ is estimated at less than 0.1. Figure \ref{Figure:spuriousjumps} provides some further illustrations of this point by highlighting a few representative examples of jumps identified using low-frequency data that vanish (to a large extent) at high frequency. All taken together, these results reaffirm our earlier finding that jumps tend to dissipate as the sampling frequency increases.

\begin{sidewaystable}
\setlength{\tabcolsep}{0.115cm}
\caption{Jump and maxgap summary statistics}
\label{Table:maxgap}
\begin{center}
\begin{small}
\begin{tabular}{lcccrccccrcclllcccrccccrccll}
\hline
                & \multicolumn{4}{c}{$\mathcal{L}$ on five-minute data} && \multicolumn{4}{c}{$\mathcal{L}^\ast$ on tick data} && \multicolumn{2}{c}{maxgap} && \multicolumn{4}{c}{$\mathcal{L}$ on five-minute data} && \multicolumn{4}{c}{$\mathcal{L}^\ast$ on tick data} && \multicolumn{2}{c}{maxgap} \\
\cline{2-5} \cline{7-10}  \cline{12-13} \cline{15-18} \cline{20-23}  \cline{25-26}
                & $\mathcal{J}_{\#}$ & $\mathcal{J}_{avg}$ & $\mathcal{J}_{max}$  & JV~~ && $\mathcal{J}_{\#}$ & $\mathcal{J}_{avg}$ & $\mathcal{J}_{max}$  & JV~~ && $\mathcal{G}_{avg}$ & $\mathcal{G}_{max}$ & & $\mathcal{J}_{\#}$ & $\mathcal{J}_{avg}$ & $\mathcal{J}_{max}$  & JV~~ && $\mathcal{J}_{\#}$ & $\mathcal{J}_{avg}$ & $\mathcal{J}_{max}$  & JV~~ && $\mathcal{G}_{avg}$ & $\mathcal{G}_{max}$ \\
\hline
\multicolumn{13}{l}{\emph{Panel A : Equity indexes}}                                                      &\multicolumn{12}{l}{\emph{Panel C (cont'd)}} \\
QQQQ           &      80 &   0.69 &   2.53 &   2.35 &&     73 &   0.30 &   1.13 &   0.43 &&   0.02 &   0.22 &HD             &     191 &   1.15 &   4.27 &   6.65 &&     63 &   0.38 &   2.66 &   1.24 &&   0.09 &   0.64 \\
SPY            &      78 &   0.71 &   3.76 &   3.80 &&     91 &   0.22 &   0.87 &   0.87 &&   0.05 &   1.20 &HPQ            &     194 &   1.08 &   5.45 &   8.66 &&     84 &   0.31 &   1.07 &   0.77 &&   0.08 &   0.89 \\
\emph{Average} &      79 &   0.70 &   3.15 &   3.07 &&     82 &   0.26 &   1.00 &   0.65 &&   0.04 &   0.71 &IBM            &     152 &   0.84 &   3.87 &   5.69 &&     73 &   0.29 &   1.08 &   2.16 &&   0.06 &   1.92 \\
\multicolumn{13}{l}{}                                                                                       &INTC           &     118 &   1.21 &   4.82 &   5.56 &&     88 &   0.33 &   1.74 &   0.79 &&   0.04 &   0.46 \\
\multicolumn{13}{l}{\emph{Panel B : Foreign exchange rates}}                                                &JNJ            &     221 &   0.67 &   4.24 &   9.39 &&     70 &   0.21 &   0.88 &   0.23 &&   0.07 &   1.24 \\
EURUSD         &     281 &   0.27 &   1.13 &   6.33 &&     55 &   0.13 &   0.66 &   0.40 &&   0.01 &   0.15 &JPM            &     142 &   1.56 &   7.57 &   5.69 &&     85 &   0.48 &   2.13 &   1.00 &&   0.05 &   0.62 \\
USDCHF         &     311 &   0.32 &   1.94 &   9.09 &&     70 &   0.15 &   0.48 &   0.93 &&   0.03 &   0.25 &KFT            &     238 &   0.94 &   6.21 &  12.72 &&     81 &   0.31 &   0.97 &   1.12 &&   0.14 &   1.66 \\
USDJPY         &     354 &   0.36 &   1.77 &  12.23 &&     65 &   0.14 &   0.39 &   0.43 &&   0.03 &   0.39 &KO             &     214 &   0.84 &   6.17 &  12.56 &&     64 &   0.29 &   1.37 &   0.93 &&   0.08 &   1.80 \\
\emph{Average} &     315 &   0.32 &   1.61 &   9.21 &&     63 &   0.14 &   0.51 &   0.59 &&   0.03 &   0.26 &MCD            &     206 &   0.88 &   5.29 &  10.64 &&     74 &   0.29 &   1.02 &   0.59 &&   0.09 &   1.96 \\
\multicolumn{13}{l}{}                                                                                       &MMM            &     174 &   0.90 &   4.52 &   7.61 &&     64 &   0.29 &   0.83 &   0.43 &&   0.09 &   0.79 \\
\multicolumn{13}{l}{\emph{Panel C : DJIA constituents}}                                                     &MRK            &     250 &   1.28 &  12.07 &  17.40 &&     75 &   0.33 &   1.09 &   0.02 &&   0.11 &   1.27 \\
AA             &     184 &   1.67 &  11.19 &   7.44 &&     70 &   0.52 &   1.74 &   2.53 &&   0.10 &   0.99 &MSFT           &     130 &   1.00 &   4.18 &   4.81 &&     71 &   0.33 &   1.41 &   1.16 &&   0.03 &   0.20 \\
AXP            &     148 &   1.59 &   7.95 &   6.82 &&     54 &   0.46 &   1.44 &   0.68 &&   0.11 &   1.04 &PFE            &     162 &   0.97 &   3.56 &   6.47 &&     73 &   0.32 &   1.01 &   1.04 &&   0.07 &   0.78 \\
BA             &     216 &   1.10 &   4.22 &   8.42 &&     71 &   0.42 &   1.68 &   1.76 &&   0.09 &   0.91 &PG             &     205 &   0.81 &   3.39 &  10.33 &&     89 &   0.26 &   1.44 &   0.64 &&   0.07 &   1.63 \\
BAC            &     173 &   1.94 &   9.89 &   8.37 &&     78 &   0.55 &   3.45 &   0.05 &&   0.08 &   1.76 &T              &     181 &   1.04 &   5.87 &   7.58 &&     80 &   0.26 &   1.16 &   0.08 &&   0.10 &   1.14 \\
CAT            &     161 &   1.28 &   4.76 &   6.25 &&     86 &   0.40 &   2.08 &   0.79 &&   0.07 &   0.56 &TRV            &     223 &   1.47 &  10.14 &  14.67 &&     88 &   0.35 &   1.88 &   1.01 &&   0.22 &   5.39 \\
CSCO           &     122 &   1.08 &   4.94 &   4.40 &&     90 &   0.31 &   1.44 &   0.20 &&   0.04 &   0.92 &UTX            &     195 &   0.85 &   3.44 &   6.34 &&     90 &   0.29 &   1.61 &   1.25 &&   0.09 &   2.15 \\
CVX            &     107 &   0.86 &   4.67 &   2.91 &&     77 &   0.28 &   1.09 &   0.12 &&   0.05 &   0.48 &VZ             &     200 &   1.06 &   3.47 &   8.58 &&     90 &   0.31 &   1.63 &   1.49 &&   0.10 &   1.37 \\
DD             &     217 &   1.20 &   4.16 &   9.15 &&     95 &   0.37 &   1.78 &   1.30 &&   0.11 &   0.95 &WMT            &     198 &   0.82 &   4.20 &   7.85 &&     84 &   0.25 &   0.96 &   0.05 &&   0.06 &   0.70 \\
DIS            &     174 &   1.04 &   4.67 &   6.63 &&     79 &   0.31 &   1.18 &   0.51 &&   0.09 &   0.68 &XOM            &      89 &   0.96 &   5.30 &   4.13 &&     89 &   0.26 &   1.03 &   0.45 &&   0.06 &   1.11 \\
GE             &     149 &   1.35 &   7.68 &   7.58 &&     69 &   0.41 &   2.29 &   0.85 &&   0.07 &   0.70 &\emph{Average} &     178 &   1.11 &   5.74 &   8.04 &&     78 &   0.34 &   1.51 &   0.84 &&   0.08 &   1.22 \\
\hline
\end{tabular}
\end{small}
\medskip
\begin{scriptsize}
\parbox{\textwidth}{\emph{Note.} We report the number of jumps ($\mathcal{J}_{\#}$), the average absolute jump size ($\mathcal{J}_{avg}$), the maximum absolute jump size ($\mathcal{J}_{max}$), and the implied jump variation (JV) for the jumps identified using the non-parametric test of \citet*{lee-mykland:08a}, $\mathcal{L}$ in Eq. \eqref{Eqn:LM}, on five-minute data and its pre-averaged counterpart, $\mathcal{L}^\ast$ in Eq. \eqref{Eqn:LMstar}, applied to tick data. The table also shows the average absolute maxgap ($\mathcal{G}_{avg}$) and the maximum absolute maxgap ($\mathcal{G}_{max}$) for the maxgap measure calculated from tick data over the five-minute window corresponding to a significant jump. The maxgap measure is defined in \ref{Sec:ApndxGraphs}.}
\end{scriptsize}
\end{center}
\end{sidewaystable}

\begin{figure}[ht!]
\begin{center}
\caption{Jumps at five-minute frequency vs. maxgap at tick frequency.}
\label{Figure:maxgap1}
\begin{tabular}{cc}
\small{Panel A: Equity data} & \small{Panel B: Foreign exchange rate data} \\
\includegraphics[height=8cm,width=0.48\textwidth]{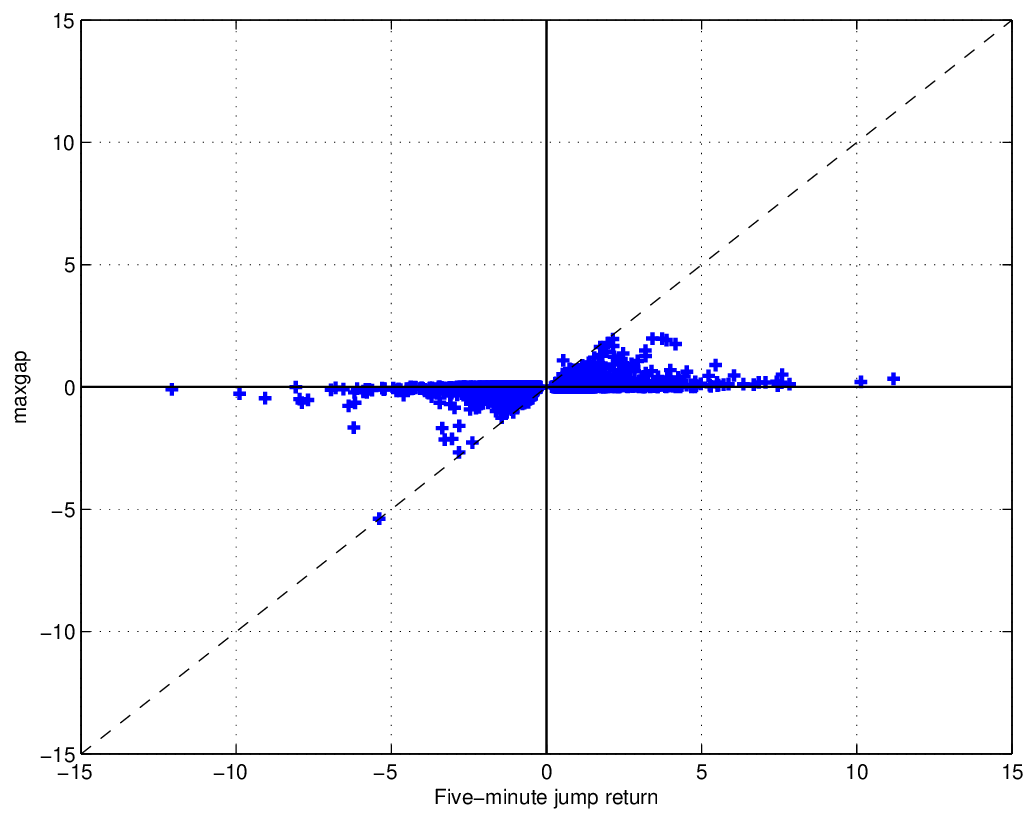} &
\includegraphics[height=8cm,width=0.48\textwidth]{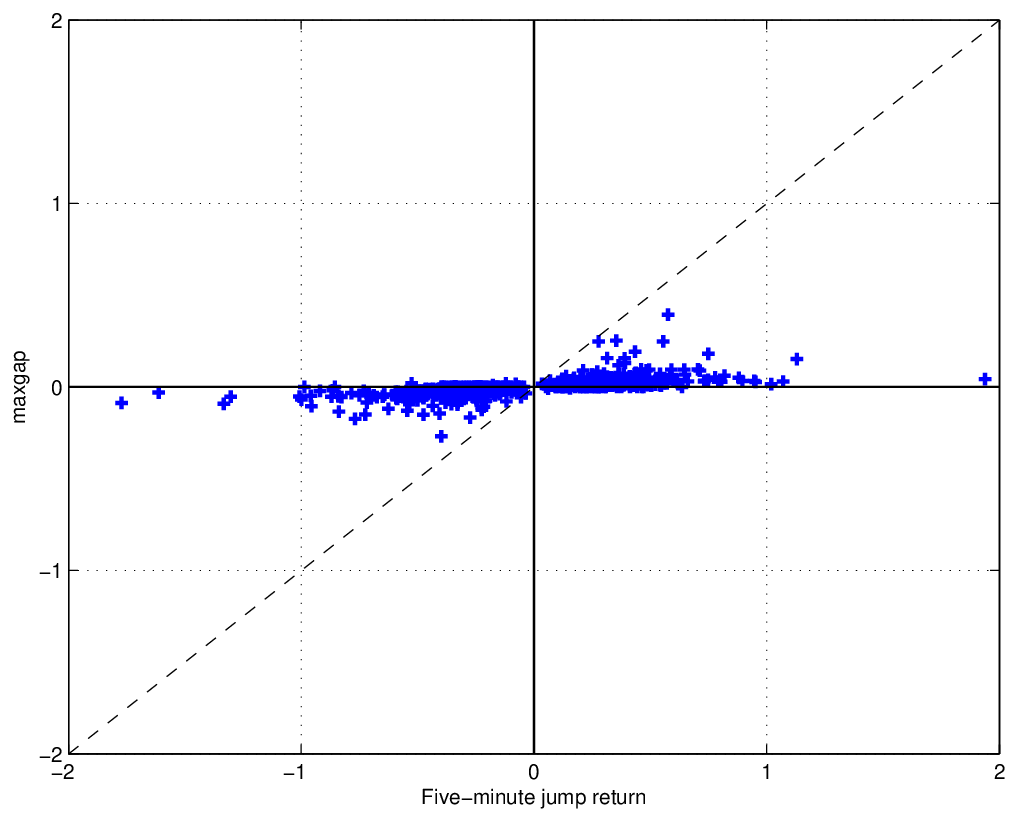} \\
\end{tabular}
\begin{scriptsize}
\parbox{\textwidth}{\emph{Note.} This figure reports a cross-plot of the jumps identified at the five-minute frequency using the nonparametric test of \citet*{lee-mykland:08a} in Eq. \eqref{Eqn:LM} (on the $x$-axis) versus the maxgap measure calculated from tick data over the corresponding time interval (on the $y$-axis). The maxgap measure is formally defined in \ref{Sec:ApndxGraphs}. All numbers are in percentages. The solid 45-degree line is for visual reference only.}
\end{scriptsize}
\end{center}
\end{figure}

\begin{sidewaysfigure}
\setlength{\tabcolsep}{0.10cm}
\begin{center}
\caption{Illustrations of jumps at five-minute frequency compared to the associated maxgap.}
\label{Figure:spuriousjumps}
\begin{tabular}{ccc}
\small{INTC 06 May 10: 5-minute jump $\mathcal{J} = 4.8\%$} & \small{MRK 25 Jan 08: 5-minute jump $\mathcal{J} = -12.1\%$} & \small{KFT 19 Sep 08: 5-minute jump $\mathcal{J} = -6.2\%$} \\
\includegraphics[height=7cm,width=0.32\textwidth]{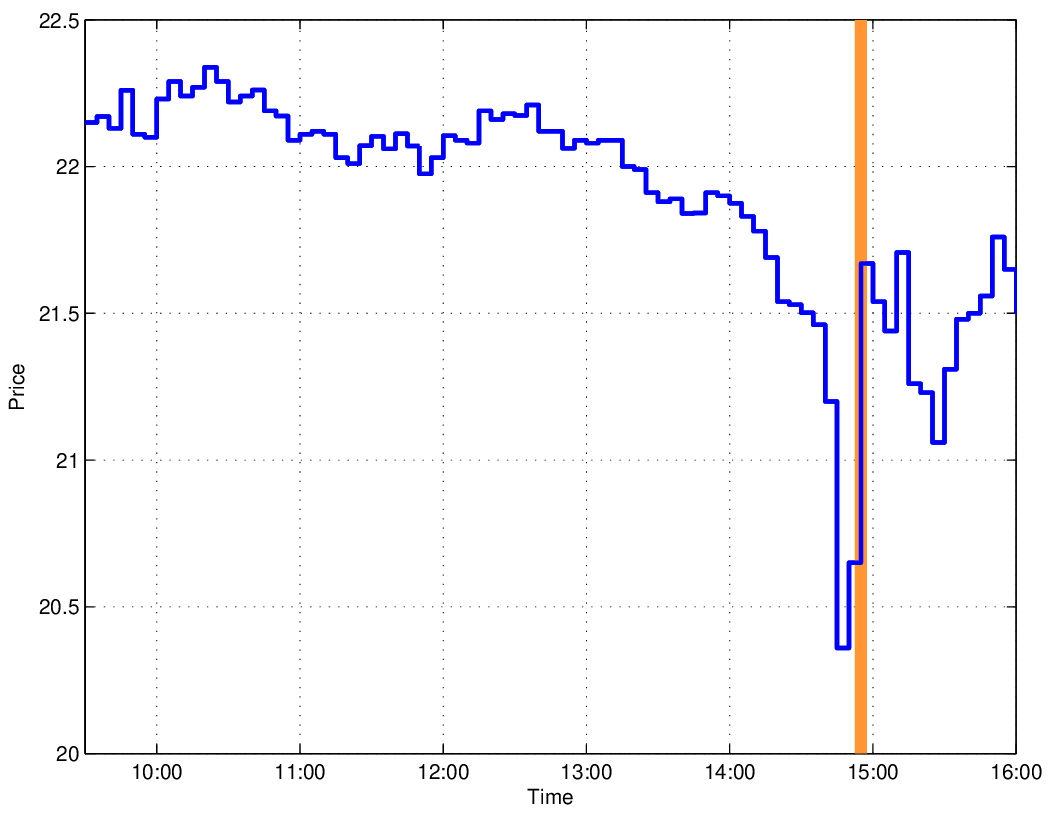} &
\includegraphics[height=7cm,width=0.32\textwidth]{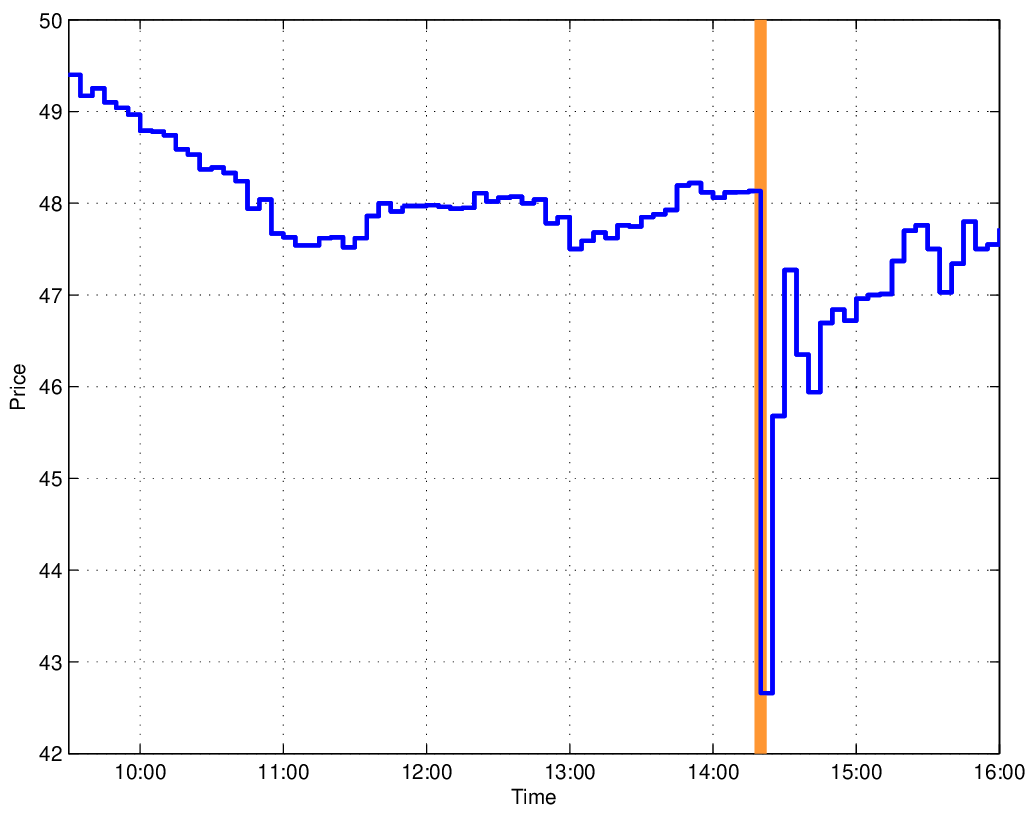} &
\includegraphics[height=7cm,width=0.32\textwidth]{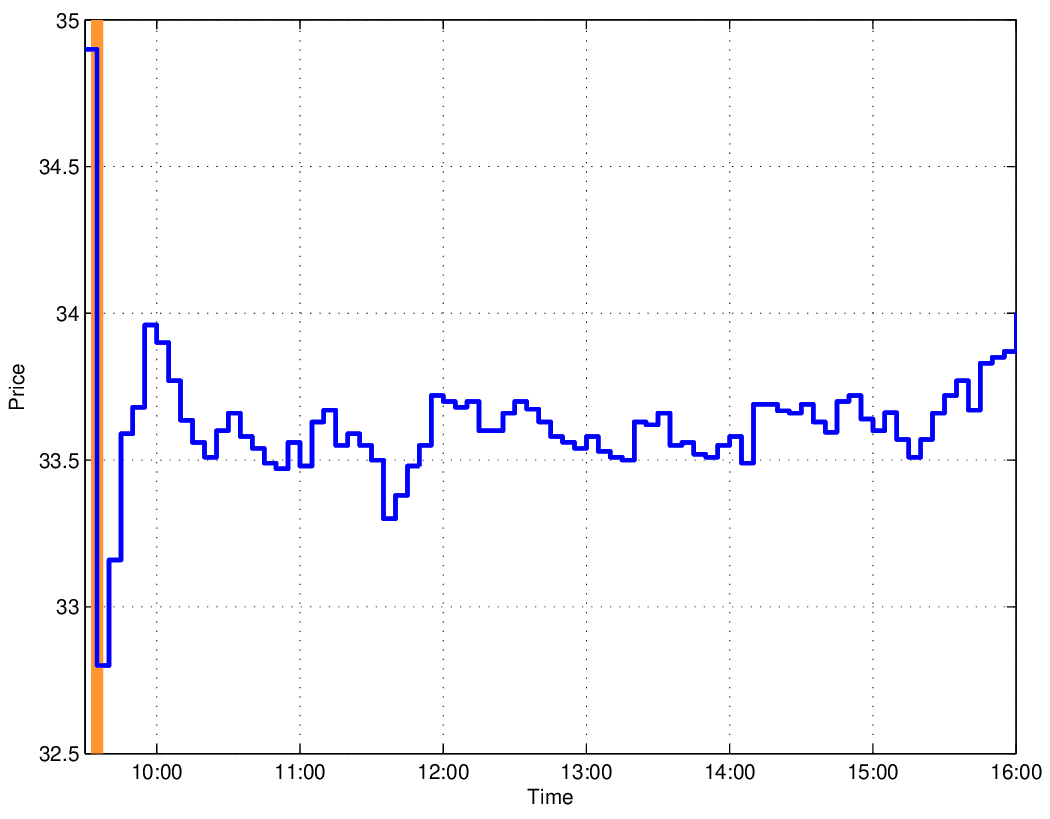} \\
\small{INTC 06 May 10: tick-frequency gap $\mathcal{G} = 0.1\%$} & \small{MRK 25 Jan 08: tick-frequency gap $\mathcal{G} = 0.1\%$} & \small{KFT 19 Sep 08: tick-frequency gap $\mathcal{G} =-1.7\%$} \\
\includegraphics[height=7cm,width=0.32\textwidth]{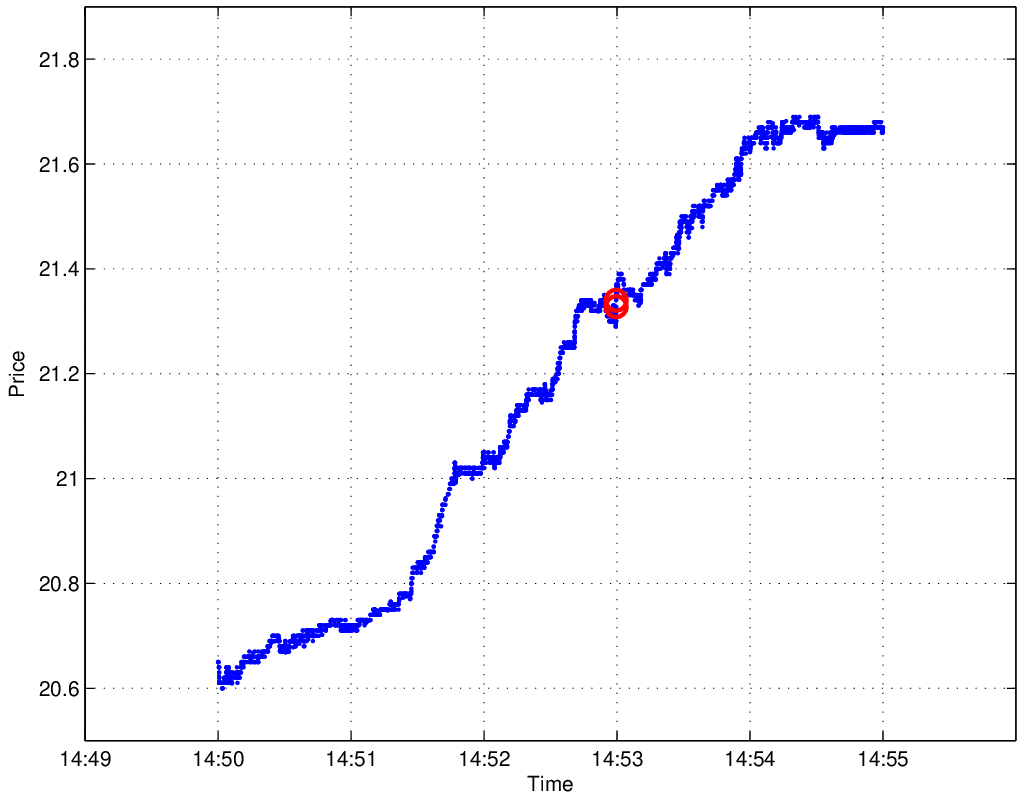} &
\includegraphics[height=7cm,width=0.32\textwidth]{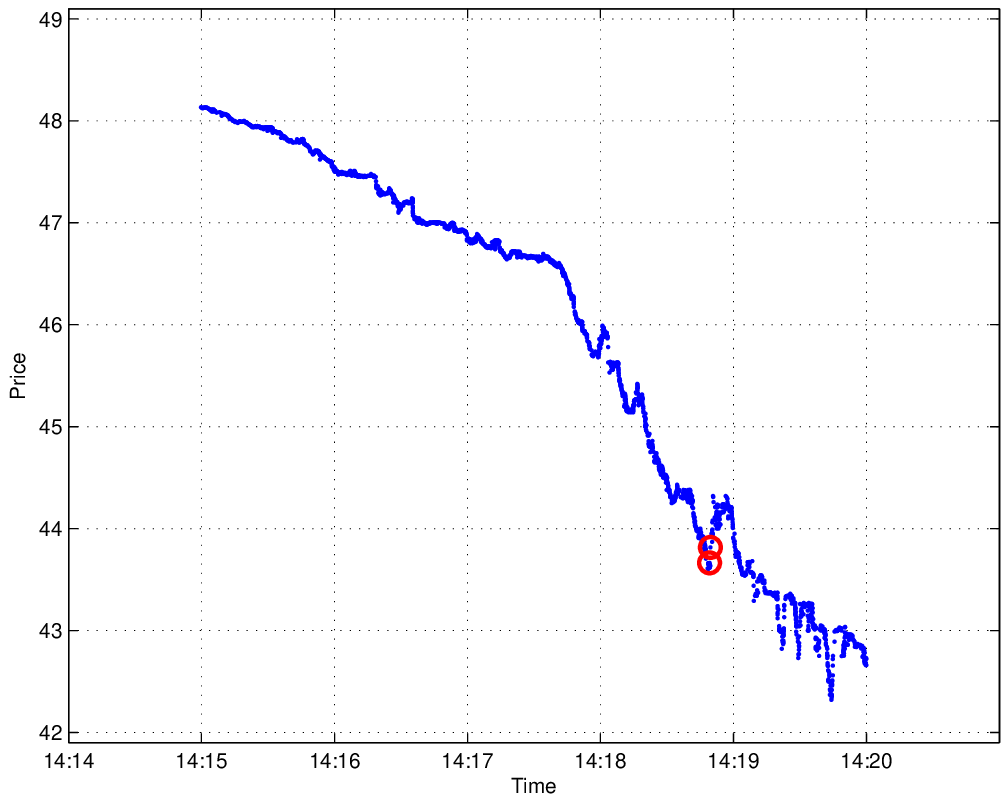} &
\includegraphics[height=7cm,width=0.32\textwidth]{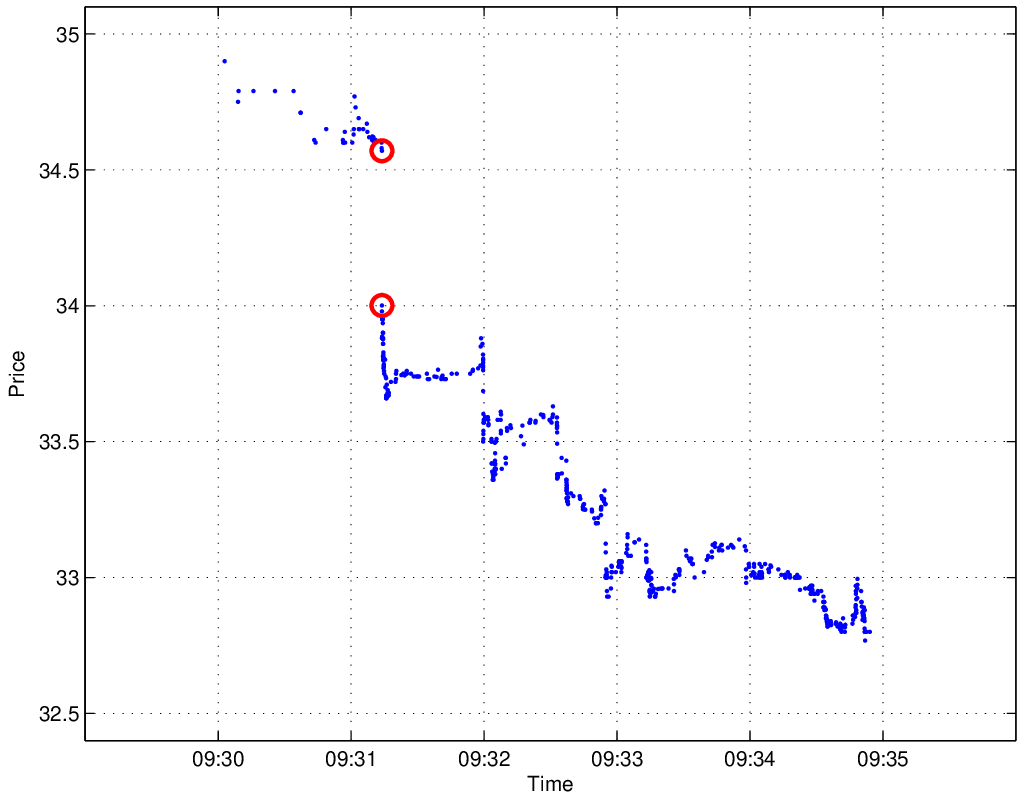} \\
\end{tabular}
\begin{scriptsize}
\parbox{\textwidth}{\emph{Note.} We give three examples of large jumps identified with the \citet*{lee-mykland:08a} test from Eq. \eqref{Eqn:LM}. The title of each subfigure shows the ticker and date of the event (INTC = Intel, MRK = Merck, KFT = Kraft Foods). The upper half shows high-frequency data for the whole day using five-minute intervals along with the jump size in percent. A shaded area is used to highlight the time of the jump. We zoom in on the jump episode in the bottom row, where the associated maxgap measure (defined in \ref{Sec:ApndxGraphs}) calculated from tick data is reported. The circles pin down the location of the maxgap. We only look at a single jump per day, although both the INTC and MRK examples possess multiple jumps according to the \citet*{lee-mykland:08a} criterion.}
\end{scriptsize}
\end{center}
\end{sidewaysfigure}

\section{Concluding remarks} \label{Sec:Conclusion}

This paper uses new econometric techniques for separating the diffusive variation component from the jump variation component, and it applies these to a comprehensive set of tick data covering both equity and foreign exchange rate data to find evidence of a much-reduced role for the jump component in explaining total return variation. Specifically, we find that the jump variation is an order of magnitude smaller than what is widely reported in the literature over the past four decades. The explanation for this can be found in the sampling frequency. The inability of the leading jump variation measures to account for market microstructure noise or market friction has prevented the literature from using tick data. Thus, in recent years a compromise five-minute sampling frequency has been used. However, at this frequency, bursts of volatility are easily mistaken for jumps, thereby obscuring the fact that jumps are not nearly as common as generally thought. A reduced role for jumps, and by implication, an increased role for the volatility process, has many important implications for empirical finance applications such as option pricing, risk management, and portfolio allocation.

\begin{figure}[ht!]
\begin{center}
\caption{USDJPY liquidity shock over the March 16, 2011 sell-off.}
\label{Figure:USDJPYliquidity}
\begin{tabular}{cc}
\footnotesize{Panel A: USDJPY spread} & \footnotesize{Panel B: Regular bid/offer quotes} \\
\includegraphics[height=8cm,width=0.48\textwidth]{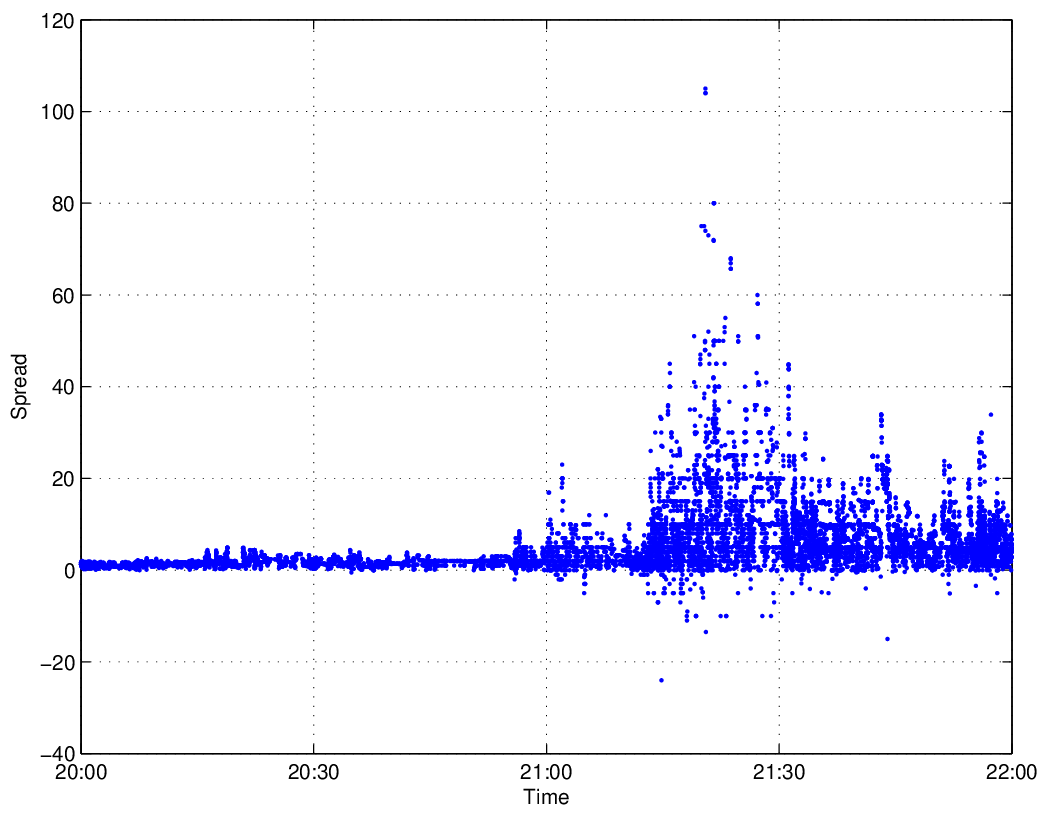} &
\includegraphics[height=8cm,width=0.48\textwidth]{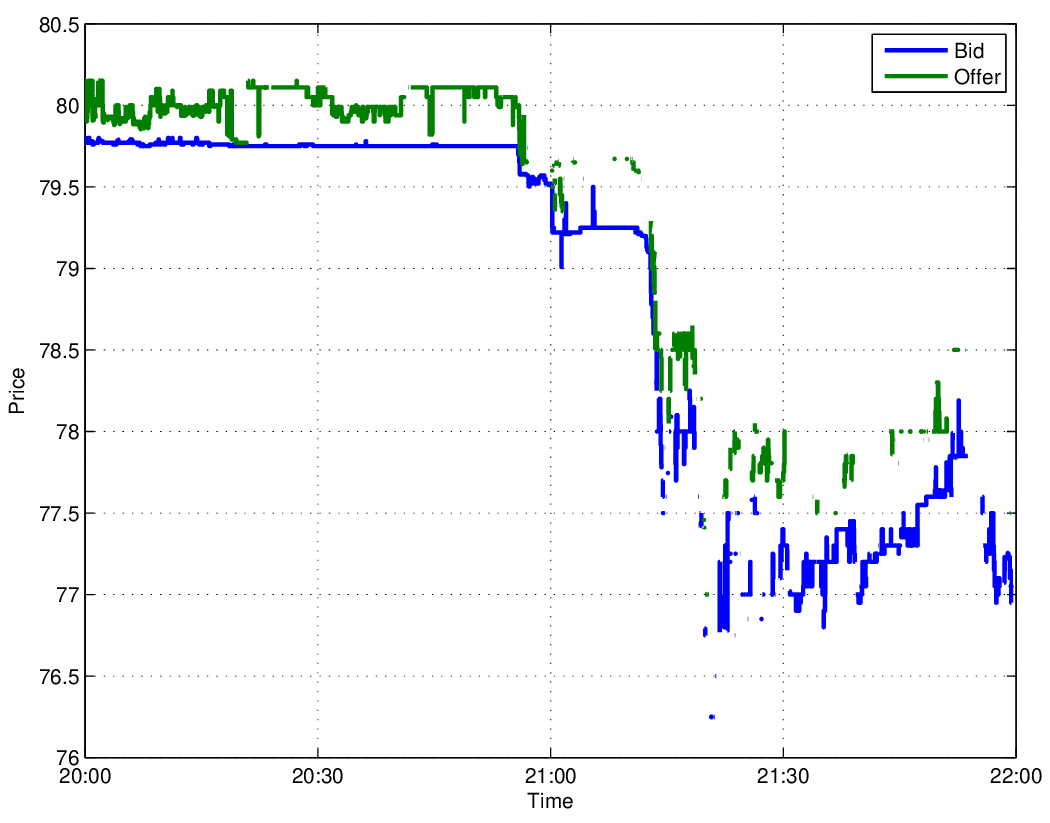} \\
\end{tabular}
\begin{scriptsize}
\parbox{\textwidth}{\emph{Note.} We plot USDJPY tick-by-tick data on March 16, 2011 with time reported in Greenwich Mean Time. The data were provided by EBSLive. In Panel A, we report the inside spread in pips over the course of the two-hour period from 20:00 -- 22:00. Panel B draws the regular quotes during this time, i.e., the best bid and ask price good for up to \$50 million.}
\end{scriptsize}
\end{center}
\end{figure}

To conclude, we emphasize that price continuity as focused on in this paper is a rather narrow concept and that one could instead focus on liquidity as a more meaningful and insightful measure of market state. It is indisputable that markets are often subject to tremendous amounts of stress. But we find this rarely leads to a substantial discontinuity in the price path at a millisecond tick resolution. Instead, what we do find are severe shocks to liquidity, a finding that is consistent with recent work by \citet{jiang-lo-verdelhan:11a}. They show that jumps identified at the five-minute frequency are often preceded by a drop in liquidity provision. Figure \ref{Figure:USDJPYliquidity} illustrates this point using USDJPY data over the March 16, 2011 sell-off. Looking at the price path in Figure \ref{Figure:Quake}, we found little evidence of a price jump. Yet, Panel A in Figure \ref{Figure:USDJPYliquidity} shows that the inside spread that normally is around one or two pips, becomes highly volatile and reaches levels in excess of a hundred pips in the midst of the sell-off.\footnote{The spread also turns substantially negative at times. In the over-the-counter foreign exchange rate market (where counterparties need to provide credit to each other), a negative spread can be observed if the participant that aggresses the market does not have credit against the counterparties providing liquidity at the top of book. Over the sell-off, several participants were forced to pay for liquidity by hitting deep into the opposite side of the order book reflecting their lack of credit. So in this instance, a large spread of either sign can indicate market turbulence. On a centrally cleared market, one would not expect to see this behavior.} Panel B plots the regular quote (i.e., the best bid and offer available for a fixed amount of \$50 million) over the same period. From here it is clear that the book is very sparse over the sell-off and that the ability to quickly trade out of large positions was severely impaired. Both these observations highlight that while price continuity was preserved over this episode of extreme volatility, liquidity was severely deteriorated. This is a point also recently emphasized in \citet*{ohara:10a} and we believe it indicates an important avenue for future research.

\newpage

\appendix

\section{Proofs} \label{Sec:ApndxProof}

\begin{proof}[Proof of Proposition \ref{Thm:CLT}]\label{Sec:ApndxCLT} The proof for the case without outliers is given in \citet*{podolskij-vetter:09a}, conditioned by Assumption (V). \\[0.10cm]

\noindent \textbf{Assumption (V).} \emph{$\sigma$ does not vanish (V$_1$) and it satisfies the equation:
\begin{equation*}
\tag{V$_2$} \sigma_{t} = \sigma_{0} + \int_{0}^{t} a_{s}^{ \prime} \text{\upshape{d}}s +
\int_{0}^{t} \sigma_{s}^{ \prime} \text{\upshape{d}}W_{s} + \int_{0}^{t} v_{s}^{ \prime}
\text{\upshape{d}}B_{s}^{ \prime}, \quad t \geq 0,
\end{equation*}
where $a^{ \prime} = \left( a_{t}^{ \prime} \right)_{t \geq 0}$, $\sigma^{ \prime} = \left( \sigma_{t}^{ \prime} \right)_{t \geq 0}$ and $v^{ \prime} = \left( v_{t}^{ \prime} \right)_{t \geq 0}$ are adapted c\`{a}dl\`{a}g, $B^{ \prime} = \left( B_{t}^{ \prime} \right)_{t \geq 0}$ is a Brownian motion, and $W \Perp B^{ \prime}$ (here, $A \Perp B$ means that A and B are stochastically independent)}.\footnote{This assumption says that $\sigma$ is of continuous semimartingale form. Note the appearance of $W$ in $\sigma$, which is the Brownian motion from the price equation in Eq. \eqref{Eqn:Xprocess}. This allows for leverage effects \citep*[e.g.,][]{christie:82a}. If $X$ itself is a continuous process, then Assumption (V) is a weak regularity condition, which is fulfilled for many financial models. For example, if $X$ is a unique, strong solution of a stochastic differential equation, then under some smoothness conditions on the volatility function $\sigma = \sigma(t, X_{t})$, Assumption (V$_{2}$) (with $v_{s}' = 0$ for all $s$) is a direct consequence of It\^{o}'s Lemma. If $X$ is not continuous, as in Eq. \eqref{Eqn:Xprocess}, then $\sigma$ is potentially discontinuous as well. In fact, there is some empirical support for allowing $\sigma$ to jump, e.g., \citet*{eraker-johannes-polson:03a}. We could include that case here at the cost of substantial extra technical rigor. Thus, even though Assumption (V) is not a necessary condition, it simplifies some of our proofs considerably. A more general treatment in the high-frequency setting, including the case where $\sigma$ jumps, is covered by \citet*{barndorff-nielsen-graversen-jacod-podolskij-shephard:06a}. We rule out these technical details here, as they are not important to our exposition.} \\[0.10cm]

Thus, what remains to be shown is that Proposition \ref{Thm:CLT} is robust to the presence of outliers. For consistency with \citet*{podolskij-vetter:09a}, we provide the proof for a more general case of pre-averaging, namely,
\begin{equation*}
r^\ast_{i,K} = \sum_{j = 1}^{K - 1} g( j/K ) r^\ast_{i+j},
\end{equation*}
where $r^\ast_i = Y_{i/N} - Y_{(i-1)/N}$, and $g: [0,1] \to \mathbb{R}$ is a weighting function assumed to be continuous, piecewise continuously differentiable such that its derivative $g'$ is piecewise Lipschitz, and $g(0) = g(1) = 0$ and $\int_0^1 g^2(u) \text{d}u > 0$. We define the following constants that are associated with $g$:
\begin{equation*}
\psi _1 = \int_0^1 \left(g'\left( s \right) \right) ^2\text{d}s,
\qquad
\psi_2 = \int_0^1 g^2\left( s \right)\text{d}s.
\end{equation*}
The noise-robust variation measures are now defined as follows:
\begin{align*}
RV^\ast &= \frac{N}{N - K + 2} \frac{1}{ K \psi_{2}^{K}} \sum_{i = 0}^{N - K + 1} |r^\ast_{i,K}|^2 - \frac{\widehat{\omega}^{2}}{\theta^{2}} \frac{\psi_{1}^{K}}{\psi_{2}^{K}}, \\
BV^\ast &= \frac{N}{N - 2K + 2} \frac{1}{ K \psi_{2}^{K}}\frac{\pi}{2} \sum_{i = 0}^{N - 2K + 1} |r^\ast_{i,K}| |r^\ast_{i+K,K}| - \frac{\widehat{\omega}^{2}}{\theta^{2}} \frac{\psi_{1}^{K}}{\psi_{2}^{K}},
\end{align*}
where $\psi_1^K$ and $\psi_2^K$ are Riemann approximations to $\psi_1$ and $\psi_2$ that improve finite sample accuracy:
\begin{equation*}
\psi_1^K = K \sum_{j = 1}^K \left[ g\left( \frac{j}{K} \right)- g \left( \frac{j-1}{K} \right)\right]^2, \qquad \psi_2^K = \frac{1}{K} \sum_{j = 1}^{K - 1} g^2 \left( \frac{j}{K} \right).
\end{equation*}
In the main text we have simplified the above setup by selecting a specific choice of weight function $g(x) = \min (x, 1-x)$ and assumed that $K$ is even. In that case, $\psi_1^K = 1$ and $\psi_2^K = \psi_K$.

To show robustness of the result to outliers, we first recall that the definition in Eq. \eqref{An} implies that there are only finitely many $i$'s such that $O_{i/N} \not = 0$. Here, we confine attention to proving the outlier robustness of the $RV^\ast$ estimator. The robustness property of $BV^\ast$ can be proved using almost identical tools.

First, we note that the suggested estimator for $\omega^2$ is robust to (finite activity) outliers, i.e.,
\begin{equation*}
\hat{\omega}_{\text{AC}}^{2} = - \frac{1}{N - 1} \sum_{i=2}^N (u_{i}-u_{i-1})(u_{i-1}-u_{i-2}) + O_p(N^{-1}).
\end{equation*}

Next, we know that with probability converging to one, any interval of the form $[i/N, (i+K)/N]$ contains at most a single outlier. Using that $g(0) = g(1) = 0$, we can re-express the return on pre-averaged prices as follows:
\begin{equation*}
r^\ast_{i,K} = \sum_{j = 1}^K \Big[ g\Big( \frac{j-1}{K} \Big) - g\Big( \frac{j}{K} \Big) \Big] Y_{(i+j-1)/N}.
\end{equation*}
This shows that an outlier $O_{l/N}$ in $[i/N, (i+K)/N]$ induces a bias of the form $\Big[g\Big(\frac{l-i}{K} \Big) - g\Big(\frac{l-i+1}{K} \Big)\Big] O_{l/N}$ and, as there are finitely many outliers, only $O(K)$ of the returns $r^\ast_{i,K}$ are affected by these. Putting the parts together, we conclude that
\begin{equation*}
RV^\ast[Y] = RV^\ast[Z] + O_p(K^{-1})
\end{equation*}
where $Z = X + u$. Thus, the law of large numbers and central limit theorem for $RV^\ast$ are robust to the presence of outliers. \qed
\end{proof}

\begin{proof}[Derivation of Eq. \eqref{Eqn:loglog}] \label{Sec:BVburst}
In a simple model, it is possible to work out an approximate expression for the finite sample bias of the BV defined by Eq. \eqref{Eqn:BVn} in the main text in the paper. To this end, consider the model
\begin{equation*}
X_t = \int_0^t \sigma_s \text{d}W_s,
\end{equation*}
where $\sigma \Perp W$. Moreover, we assume that the variance process $\sigma^2$ is bounded away from zero and has the form
\begin{equation*}
\sigma^2_t = \sigma_0^2+ \int_0^t \eta_s \text{d}s + \int_0^t \upsilon_s \text{d}B_s,
\end{equation*}
where $B$ is another Brownian motion with $B \Perp W$. Let
\begin{equation*}
BV = \frac{\pi}{2} \sum_{i = 2}^{N} |r_{i - 1}| |r_{i}|.
\end{equation*}
Here, we initially leave out the correction $N/(N-1)$ used to define BV in the main text. We add it back later.

Then, the conditional bias of $BV$ is given as
\begin{equation*}
E\left(BV - \int_{0}^{1} \sigma_s^2 \text{d}s \mid \sigma\right) = \sum_{i=1}^{N-1} \left \{ \Bigg( \int_{\frac{i-1}{N}}^{\frac{i}{N}} \sigma_s^2 \text{d}s \Bigg)^{1/2} \Bigg( \int_{\frac{i}{N}}^{\frac{i+1}{N}} \sigma_s^2 \text{d}s \Bigg)^{1/2} - \int_{\frac{i-1}{N}}^{\frac{i}{N}} \sigma_s^2 \text{d}s \right\} - \int_{\frac{N-1}{N}}^{1} \sigma_s^2 \text{d}s.
\end{equation*}
To simplify notation, we define $f(x)=\sqrt{x}$ and set
\begin{equation*}
a_i = N \int_{\frac{i-1}{N}}^{\frac{i}{N}} \sigma_s^2 \text{d}s.
\end{equation*}
Hence,
\begin{equation*}
E\left(BV - \int_0^1 \sigma_s^2 \text{d}s \mid \sigma \right) = \frac{1}{N} \Bigg( a_N + \sum_{i = 1}^{N - 1} f(a_i) (f(a_{i + 1}) - f(a_i)) \Bigg).
\end{equation*}
By Burkholder's inequality, we get
\begin{equation*}
E(|a_{i+1} - a_{i}|^p)\leq C N^{-p/2}
\end{equation*}
for any $p>0$. Thus,
\begin{equation*}
E\left(BV - \int_{0}^{1} \sigma_s^2 \text{d}s \mid \sigma\right) = \frac{1}{N} \Bigg( a_N + \sum_{i=1}^{N-1} f(a_i) \{f'(a_i)(a_{i+1}-a_i) + \frac{f''(a_i)}{2} (a_{i+1}-a_i)^2 \} \Bigg) + o_p(N^{-1}).
\end{equation*}
Note that $f'(x)= \frac{1}{2} x^{-1/2}$ and $f''(x)=-\frac{1}{4} x^{-3/2}$. Using this, we deduce that
\begin{equation*}
\frac{1}{N} \sum_{i = 1}^{N - 1} f(a_i) f'(a_i)(a_{i+1}-a_i)= \frac{1}{2N} (a_N - a_1) = \frac{1}{2N} (\sigma_1^2 - \sigma_0^2) + o_p(N^{-1}).
\end{equation*}
On the other hand,
\begin{equation*}
\frac{1}{N} \sum_{i = 1}^{N - 1} f(a_i) \frac{f''(a_i)}{2} (a_{i+1} - a_i)^2 = - \frac{1}{N} \sum_{i = 1}^{N - 1} \frac{1}{8} \frac{(a_{i+1} - a_i)^2}{a_i}.
\end{equation*}
Because $\sigma^2$ is a continuous semimartingale, a standard approximation scheme yields
\begin{equation*}
-\frac{1}{N} \sum_{i = 1}^{N - 1} \frac{1}{8} \frac{(a_{i+1} - a_i)^2}{a_i} = -N \sum_{i = 1}^{N - 1} \frac{1}{8} \frac{\upsilon_{\frac{i - 1}{N}}^{2}}{\sigma_{{\frac{i - 1}{N}}}^2} \Bigg(\int_{\frac{i}{N}}^{\frac{i + 1}{N}} B_s \text{d}s - \int_{\frac{i - 1}{N}}^{\frac{i}{N}} B_s \text{d}s\Bigg)^2 + o_p(N^{-1}).
\end{equation*}
Moreover, as
\begin{equation*}
E\Bigg( \Bigg[ \int_{\frac{i}{N}}^{\frac{i + 1}{N}} B_s \text{d}s - \int_{\frac{i - 1}{N}}^{\frac{i}{N}} B_s \text{d}s \Bigg]^2 \Bigg) = \frac{2}{3N^3},
\end{equation*}
we conclude that
\begin{equation*}
-\frac{1}{N} \sum_{i = 1}^{N - 1} \frac{1}{8} \frac{(a_{i + 1} - a_i)^2}{a_i} = -\frac{1}{N} \frac{1}{12} \int_0^1 \frac{\upsilon_{s}^{2}}{\sigma_{s}^2} \text{d}s + o_p(N^{-1}).
\end{equation*}
Putting everything together, we find that
\begin{equation*}
E\left(BV - \int_0^1 \sigma_s^2 \text{d}s \mid \sigma \right) = -\frac{1}{N} \left(\frac{1}{12} \int_0^1 \frac{\upsilon_{s}^{2}}{\sigma_{s}^2} \text{d}s + \frac{1}{2} (\sigma_1^2 + \sigma_0^2) \right) + o_p(N^{-1}).
\end{equation*}
Now, applying the finite sample correction $N / (N-1)$ cancels out the effect of the missing summand and adds an additional $\int_0^1 \sigma_s^2 \text{d}s / N$  term to the conditional bias:
\begin{equation*}
E\left(\frac{N}{N-1} BV - \int_0^1 \sigma_s^2 \text{d}s \mid \sigma \right) = -\frac{1}{N} \left(\frac{1}{12} \int_0^1 \frac{\upsilon_{s}^{2}}{\sigma_{s}^2} \text{d}s + \frac{1}{2} (\sigma_1^2 + \sigma_0^2) - \int_0^1 \sigma_s^2 \text{d}s \right) + o_p(N^{-1}).
\end{equation*}
Note that this expression is not negative in general, but that the sign and magnitude of the dependence on $\upsilon$ are unaffected by the adjustment. Finally, taking unconditional expectations and assuming that $\sigma^2$ is a stationary process, we find that
\begin{equation*}
E\left(\frac{N}{N-1}BV - \int_0^1 \sigma_s^2 \text{d}s \right) = -\frac{1}{N} E \left( \frac{1}{12} \int_0^1 \frac{\upsilon_{s}^{2}}{\sigma_{s}^2} \text{d}s \right) + o(N^{-1}).
\end{equation*}
This is the expression given in the main text, which is negative up to terms of order $o(N^{-1})$. \qed
\end{proof}

\section{The explicit form of $\Sigma^\ast$} \label{Sec:ApndxSigma}

Proposition \ref{Thm:CLT} does not specify the exact form of the asymptotic covariance matrix $\Sigma^\ast$. Here, a formula for it is provided. We set $f_i: \mathbb R^2 \rightarrow \mathbb R$, $i=1,2$, equal to
\begin{equation*}
f_1(x)=x_1^2, \qquad f_2 (x)= \frac{|x_1||x_2|}{\mu^2}.
\end{equation*}
Then, for $x\in \mathbb{R}$, $u\in [0,1]$ and $l = -1, \ldots, 2$, we define
\begin{equation*}
F_{l,x,u}^{(ij)} = \text{\upshape{cov}} (f_i(S), f_j(T) ) \qquad 1\leq i,j \leq 2,
\end{equation*}
where $S=(S_1,S_2)'$, $T=(T_1,T_2)'$ are centered and jointly normal with
\begin{itemize}
\item[(i)] $S_i\bot S_j$, $T_i\bot T_j$ for all $i \not= j$.
\item[(ii)] $\text{\upshape{var}}(S_i)= \text{\upshape{var}}(T_i)=\theta  \psi_2  x^2 +
    \frac{\psi_1}{\theta}  \omega^2$  for all $i$.
\item[(iii)] $\text{\upshape{cov}}(S_{i+l-1},T_i)=\theta  w_{g}(u) x^2 + \frac{1}{\theta}
    w_{g'} (u) \omega^2$ for all $i$.
\item[(iv)] $\text{\upshape{cov}}(S_{i+l},T_i)=\theta w_{g}(1-u) x^2 + \frac{1}{\theta} w_{g'}
    (1-u) \omega^2$ for all $i$.
\item[(v)] $\text{\upshape{cov}}(S_i,T_j)=0$ for all $|i+l-j-1|>1$.
\end{itemize}
Here, the function $w_g(u)$ is given by
\begin{equation*}
w_{g} \left( u \right) = \int_{0}^{1 - u} g \left( y \right) g \left( y + u \right)
\text{\upshape{d}}y.
\end{equation*}
Finally, we get that
\begin{equation*}
\Sigma^\ast = \frac{1}{\theta \psi_2^2}\sum_{l = -1}^2 \int_0^1 \int_0^1 F_{l,\sigma_s,u} \text{\upshape{d}}s\text{\upshape{d}}u.
\end{equation*}

\section{Consistency with irregular sampling} \label{Sec:ApndxSampling}

We show robustness of the $BV^{*}$ estimator to irregular sampling (the corresponding result for $RV^{*}$ is almost identical) of the following type: The sampling times are given by $0 = t_0 < \ldots < t_N = 1$ with $t_{i} = f(i/N)$ for a strictly increasing continuously differentiable function $f:[0,1]\rightarrow [0,1]$ such that $f(0)=0$ and $f(1)=1$.

The pre-averaged returns $r^{*}_{i,K}$ are defined as in the equidistant setting. Following the proofs in \citet*{podolskij-vetter:09a}, we know that (a) $r^{*}_{i,K}$ and $r^{*}_{i+K,K}$ are asymptotically uncorrelated and (b) conditionally on $\mathcal F_{t_i}$, $r^{*}_{i,K}$ is asymptotically distributed according to
\begin{equation*}
\mathcal{N} \left(0, \sigma_{t_i}^2 \sum_{j=1}^{K} g^2 \Big( \frac{j}{K}\Big) (t_{i+j}- t_{i+j-1}) +\omega^2 K\sum_{j=1}^{K} \Big[g \Big( \frac{j}{K}\Big) - g \Big( \frac{j-1}{K}\Big)\Big]^2 \right).
\end{equation*}
As in \citet*{podolskij-vetter:09a}, we conclude that
\begin{equation*}
BV^{*} - \left( \frac{1}{K \psi_K} \frac{\pi}{2} \sum_{i = 0}^{N -2K + 1} E \Big(|r^{*}_{i,K}||r^{*}_{i+K,K}| \mid \mathcal F_{t_i}\Big) - \frac{\hat \omega^2}{\theta^2 \psi_K}\right) \overset{p}{\to} 0.
\end{equation*}
Applying the mean value theorem  $t_j -t_{j-1}\sim \frac 1N f'(j/N)$ and using the continuity of $f'$ and $\sigma$, we find that
\begin{equation*}
\frac{1}{K\psi_K} \frac{\pi }{2} \sum_{i=0}^{N-2K+1} E\Big(|r^{*}_{i,K}||r^{*}_{i+K,K}| \mid \mathcal F_{t_i}\Big) - \frac{\hat \omega^2}{\theta^2 \psi_K} = \frac 1N \sum_{i=0}^{N-2K+1} f'(i/N) \sigma_{t_i}^2 + o_{p}(1).
\end{equation*}
Thus,
\begin{equation*}
BV^{*} \overset{p}{\to} \int_0^1 f_{s}' \sigma_{f_{s}}^{2} \text{d}s = \int_{0}^{1} \sigma_{s}^{2} \text{d}s,
\end{equation*}
where we have used integration by parts and the fact that $f(0)=0$, $f(1)=1$.

\section{BFM filter and the maxgap measure} \label{Sec:ApndxGraphs}

\setcounter{figure}{0} \renewcommand{\thefigure}{D\arabic{figure}}
\setcounter{table}{0} \renewcommand{\thetable}{D\arabic{table}}

Figure \ref{Figure:tradefilter} provides an illustration in which the quote and trade feed go temporarily out of sync and shows the distinct ways in which the data cleaning filters operate.\footnote{While in recent data sets this phenomenon in which trades lag quotes is rare and restricted to short periods of heightened market activity when the exchange networks are at maximum capacity, it is routinely observed in the earlier high-frequency data sets \citep*[see, e.g.,][]{lee-ready:91a}.} The BNHLS procedure ends up removing large chunks of data, leaving a large but spurious filtering-induced jump in the cleaned price process. In contrast, the BFM filter matches most of the trades over the backward-looking window and the resulting filtered trade price series is relatively smooth. Table \ref{Table:filter} reports summary statistics by instrument on the number of outliers identified, matched, and removed by the BFM filter.

In Section \ref{Sec:Femp} we use the ``maxgap'' $\mathcal{G}_i$ to measure the price discontinuity observed in tick data over every five-minute episode where the \citet*{lee-mykland:08a} test identifies a jump. The construction is as follows. For a jump $\mathcal{J}_i$ identified at the $i^{th}$ five-minute interval, we take the tick data series $\{Y_{t_j}\}$ for all $j$ where $t_j\in [(i-1)/78,i/78]$ and use this to compute:
\begin{equation*}
\mathcal{G}_i = g_{j^\ast} \quad \text{where}\quad j^\ast = \argmax_j |g_j|
\end{equation*}
where
\begin{equation*}
g_j = \begin{cases}
\max(0, \min_{k \in F_{j}} Y_{t_{k}} - \max_{k \in B_{j}} Y_{t_{k}}) &
\text{for } Y_{t_{j + 1}} > Y_{t_{j}} \\
\min(0, \max_{k \in F_{j}} Y_{t_{k}} - \min_{k \in B_{j}} Y_{t_{k}}) &
\text{for } Y_{t_{j + 1}} < Y_{t_{j}}
\end{cases}
\end{equation*}
and $B_j = \{k:t_k\in [t_{j}-\delta,t_{j}]\}$, $F_j = \{k:t_k\in [t_{j+1},t_{j+1}+\delta]\}$ for $\delta\geq 0$. In words, for a candidate jump from $Y_{t_j}$ to $Y_{t_{j+1}}\gg Y_{t_j}$, $g_{j}$ is defined as the difference between the lowest price observed in a local window $F_j$ of length $\delta$ just after the candidate jump and the highest price observed in a local window $B_j$ just before the candidate jump. As such, the gap measure $g_j$ identifies the largest range of price points that are jumped over in a short timespan around the $j^{th}$ observation. Based on this, the maxgap measure $\mathcal{G}_i$ then simply identifies the largest tick price gap over the interval. Compared to simply calculating tick returns (a special case with $\delta = 0$), the maxgap measure reduces the impact of noise (see below for an illustration) and while $|g_j| \leq |Y_{t_{j+1}}-Y_{t_j}|$, in a genuine jump scenario the difference is negligible and they can be interpreted alike.

Figure \ref{Figure:maxgap} provides an illustration of the maxgap calculation for four different scenarios. In Scenario I, the maxgap and max return identify the same jump location. The difference in their value is attributable to microstructure noise (maxgap is largely robust to this) but it should be negligible for genuine jumps. In Scenario II, the max return selects a different location and takes the value -4 while maxgap is +3. In Scenario III, maxgap is zero whereas the max return is +3. In Scenario IV, maxgap and max return both identify the same two jumps. The genuine jumps in Scenarios I, II, and IV are identified correctly by maxgap, as is the no-jump in Scenario III. On the other hand, the max return only works well in Scenarios I and IV but picks up noise in the other two scenarios resulting in the incorrect identification of the jump in Scenario II and a false positive in Scenario III.

\begin{figure}[ht!]
\begin{center}
\caption{Trade filter illustration (SPY, 18 Sep 2007).}
\label{Figure:tradefilter}
\begin{tabular}{cc}
\small{Panel A: Full trading day} & \small{Panel B: Episode where trades lag quotes} \\
\includegraphics[height=8cm,width=0.48\textwidth]{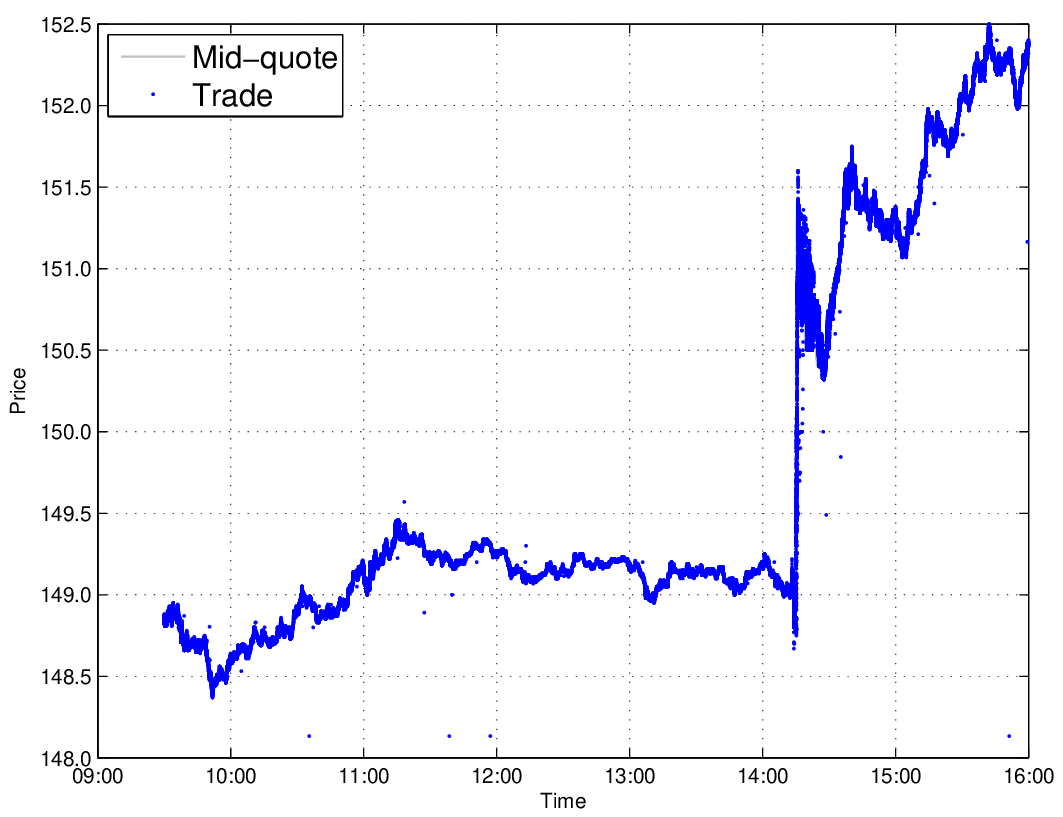} &
\includegraphics[height=8cm,width=0.48\textwidth]{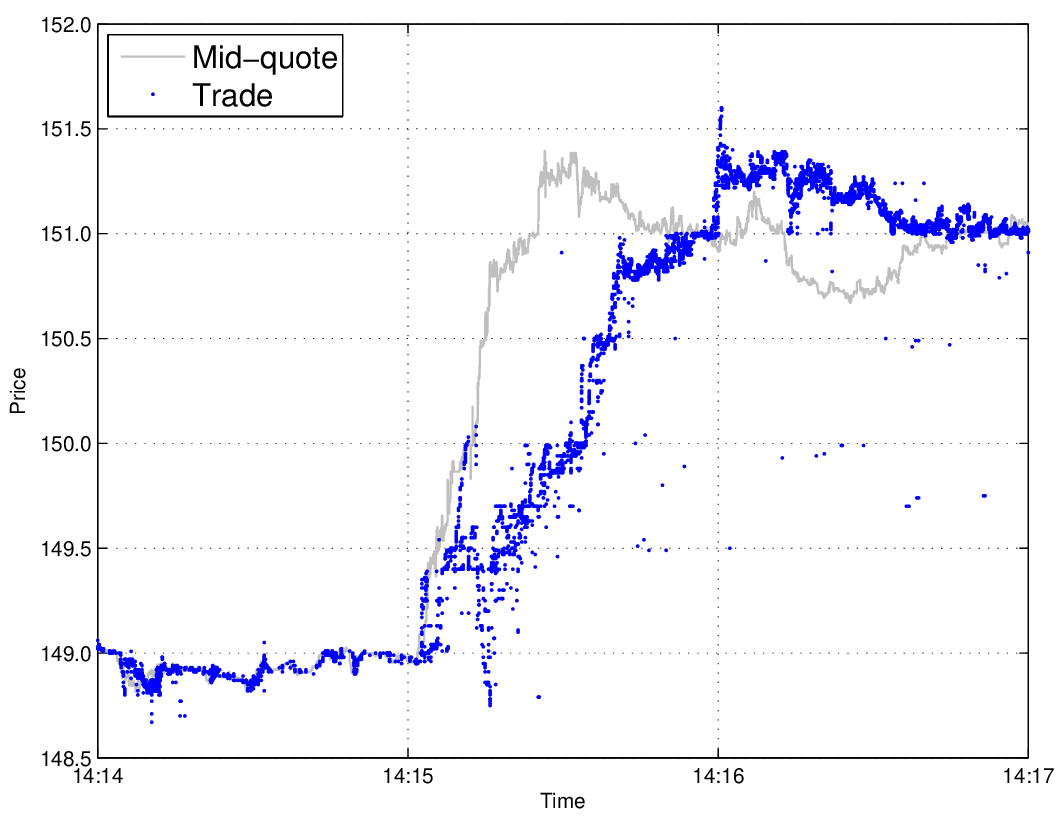} \\
\small{Panel C: Trades post BNHLS filter} & \small{Panel D: Trades post BFM filter} \\
\includegraphics[height=8cm,width=0.48\textwidth]{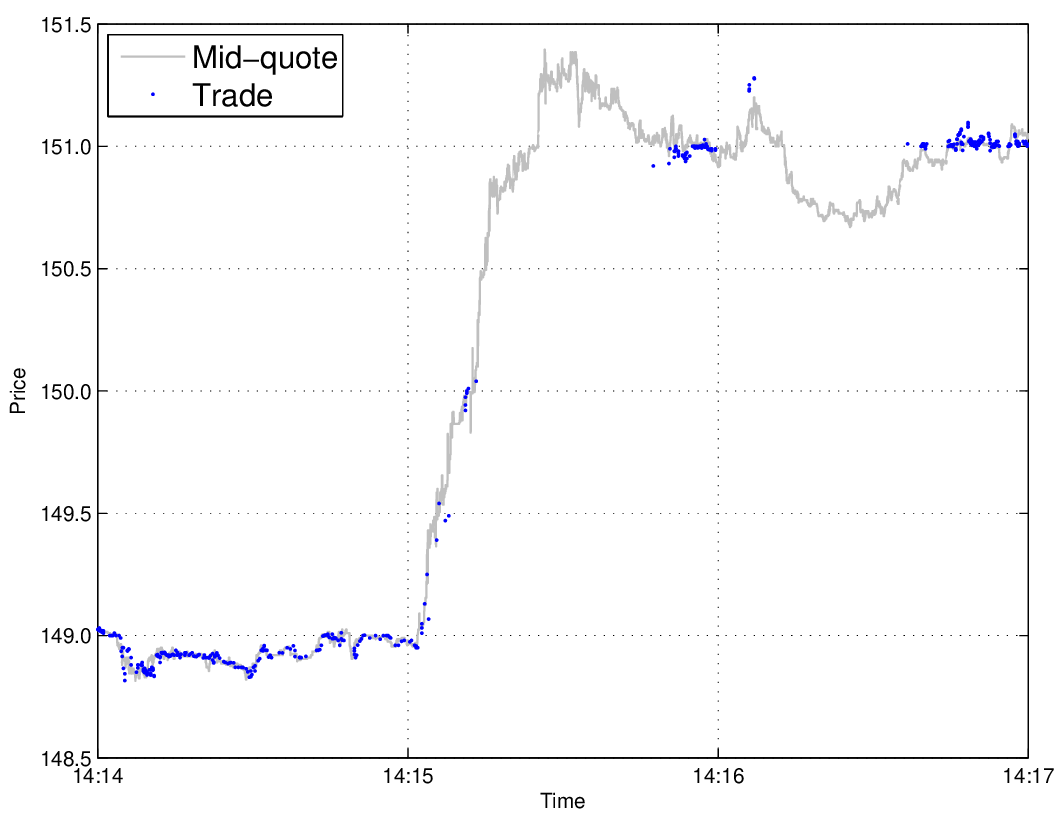} &
\includegraphics[height=8cm,width=0.48\textwidth]{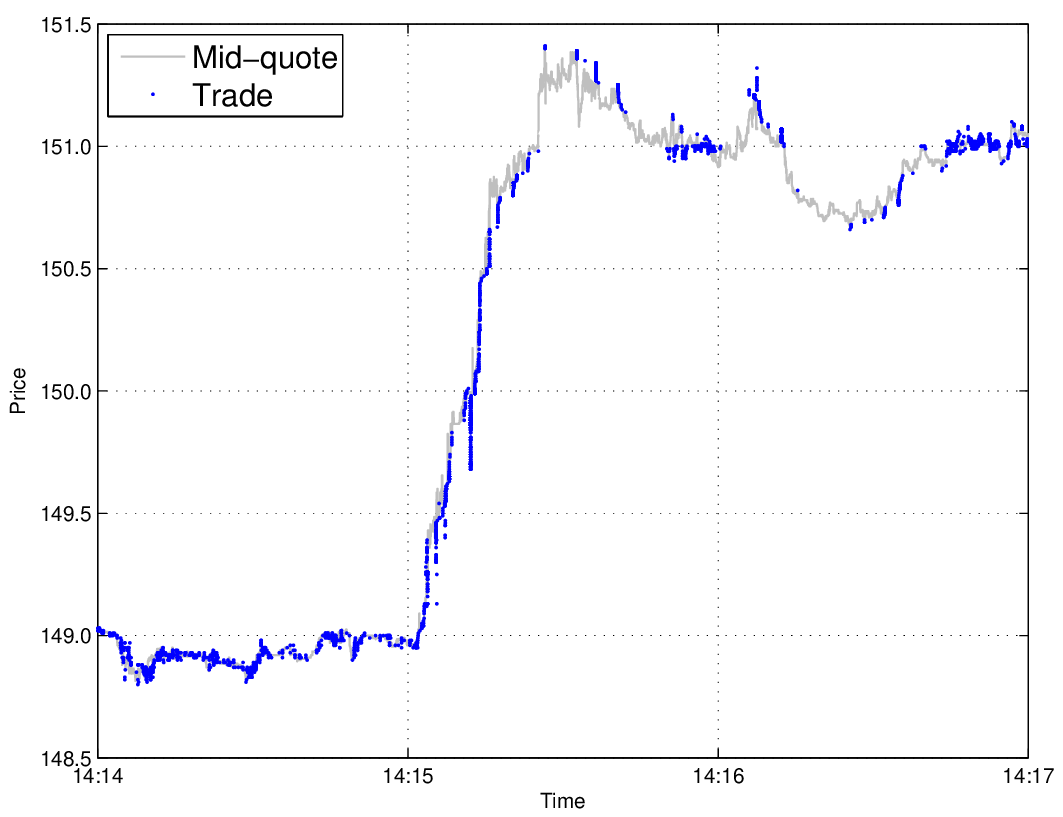} \\
\end{tabular}
\begin{scriptsize}
\parbox{\textwidth}{\emph{Note.} Panel A draws the mid-quote and unfiltered trades for SPY on September 18, 2007. Panel B zooms in on the volatile episode beginning at 14:15 following an unexpected interest rate cut by the FED. Panel C (D) plots the mid-quote and remaining trades over this period after filtering the data with the BNHLS (BFM) procedure.}
\end{scriptsize}
\end{center}
\end{figure}

\begin{sidewaystable}
\setlength{\tabcolsep}{0.195cm}
\caption{BFM trade filter statistics}
\label{Table:filter}
\begin{center}
\begin{small}
\begin{tabular}{lrrrrrrrrclrrrrrrrr}
\hline
    & raw~      & cand.~~ & \multicolumn{2}{c}{frwd match} && \multicolumn{2}{c}{bwrd match}&    &&& raw~ & cand.~~ & \multicolumn{2}{c}{frwd match} && \multicolumn{2}{c}{bwrd match}&\\
\cline{4-5}\cline{7-8} \cline{14-15}\cline{17-18}
    & nobs  & outliers  & nobs & $\tau$(s) && nobs & $\tau$(m)  & outliers                              &&& nobs  & outliers  & nobs & $\tau$(s) && nobs & $\tau$(m)  & outliers\\
\hline
\multicolumn{9}{l}{\emph{Panel A : Equity indexes}}                     &&    \multicolumn{9}{l}{\emph{Panel B (cont'd)}}\\
QQQ & 137,007&  1,720&   233 & 0.21 &&   817 & 2.18 &  670              &&    INTC& 137,138&   667&   109 & 0.17 &&   336 & 3.08 &  223 \\
SPY & 434,809& 18,705&  7,657 & 0.15 && 10,005 & 0.80 & 1,043           &&    JNJ &  53,799&   632&   154 & 0.20 &&   421 & 1.72 &   57 \\
\emph{Average} & 285,908& 10212&  3945 & 0.18 &&  5,411 & 1.49 &  856   &&    JPM & 166,702&  2,673&   895 & 0.19 &&  1,621 & 1.17 &  157 \\
\multicolumn{9}{l}{}                                                    &&    KFT &  40,369&   249&    44 & 0.19 &&   165 & 1.70 &   39 \\
\multicolumn{9}{l}{\emph{Panel B : DJIA constituents}}                  &&    KO  &  44,270&   445&   104 & 0.20 &&   294 & 1.75 &   47 \\
AA  &  71,173&   606&   118 & 0.20 &&   422 & 1.54 &   67               &&    MCD &  38,217&   433&    95 & 0.17 &&   294 & 1.48 &   44 \\
AXP &  57,510&   623&   165 & 0.18 &&   400 & 1.46 &   58               &&    MMM &  23,767&   175&    38 & 0.18 &&   113 & 1.92 &   24 \\
BA  &  31,515&   284&    52 & 0.17 &&   192 & 1.76 &   41               &&    MRK &  58,191&   588&   112 & 0.19 &&   421 & 1.36 &   55 \\
BAC & 250,433&  1,985&   474 & 0.21 &&  1,265 & 1.66 &  247             &&    MSFT& 148,724&  1,209&   226 & 0.19 &&   620 & 2.46 &  363 \\
CAT &  43,157&   408&    95 & 0.17 &&   267 & 1.64 &   46               &&    PFE &  96,037&   312&    44 & 0.17 &&   204 & 3.02 &   64 \\
CSCO& 129,261&   852&   161 & 0.19 &&   461 & 2.29 &  229               &&    PG  &  54,424&   679&   175 & 0.18 &&   447 & 1.57 &   57 \\
CVX &  59,509&   846&   226 & 0.18 &&   546 & 1.57 &   75               &&    T   &  84,676&   818&   122 & 0.21 &&   585 & 1.50 &  111 \\
DD  &  33,857&   273&    66 & 0.17 &&   178 & 1.65 &   29               &&    TRV &  21,815&   114&    24 & 0.15 &&    71 & 1.95 &   19 \\
DIS &  48,352&   317&    65 & 0.20 &&   201 & 2.11 &   51               &&    UTX &  28,421&   210&    46 & 0.17 &&   135 & 1.86 &   29 \\
GE  & 153,779&  1,031&   210 & 0.23 &&   654 & 2.17 &  167              &&    VZ  &  59,018&   456&    90 & 0.19 &&   300 & 1.70 &   67 \\
HD  &  64,837&   491&    88 & 0.19 &&   339 & 2.08 &   64               &&    WMT &  72,407&   870&   223 & 0.19 &&   564 & 1.40 &   84 \\
HPQ &  70,454&   808&   209 & 0.18 &&   526 & 1.57 &   73               &&    XOM & 116,572&  2,996&   926 & 0.19 &&  1,882 & 1.08 &  188 \\
IBM &  37,931&   474&    94 & 0.18 &&   321 & 1.81 &   59               &&    \emph{Average} &  76,544&   751&   182 & 0.19 &&   475 & 1.80 &   94 \\
\hline
\end{tabular}
\end{small}
\medskip
\begin{scriptsize}
\parbox{\textwidth}{\emph{Note.} This table reports the average daily unfiltered number of observations (``raw nobs''), the number of candidate outliers as identified by the filtering procedure (``cand. outliers''), the number of candidate outliers matched forwards (``frwd match'') and backwards (``bwrd match'') in time together with the average displacement in seconds (``$\tau$(s)'') and minutes (``$\tau$(m)''), and the number of remaining outliers that are removed from the sample (``outliers''). Because the filtering step is applied to the data before aggregation by millisecond timestamp, the number of observations here is substantially higher than what is reported in Table \ref{Table:Descriptive} of the paper. Also note that the filter is not applied to the currency pair data and hence, these are omitted from this table.}
\end{scriptsize}
\end{center}
\end{sidewaystable}

\begin{figure}[ht!]
\begin{center}
\caption{An illustration of the maxgap calculation based on simulated data.}
\label{Figure:maxgap}
\begin{tabular}{cc}
\small{Scenario I} & \small{Scenario II} \\
\includegraphics[height=8cm,width=0.48\textwidth]{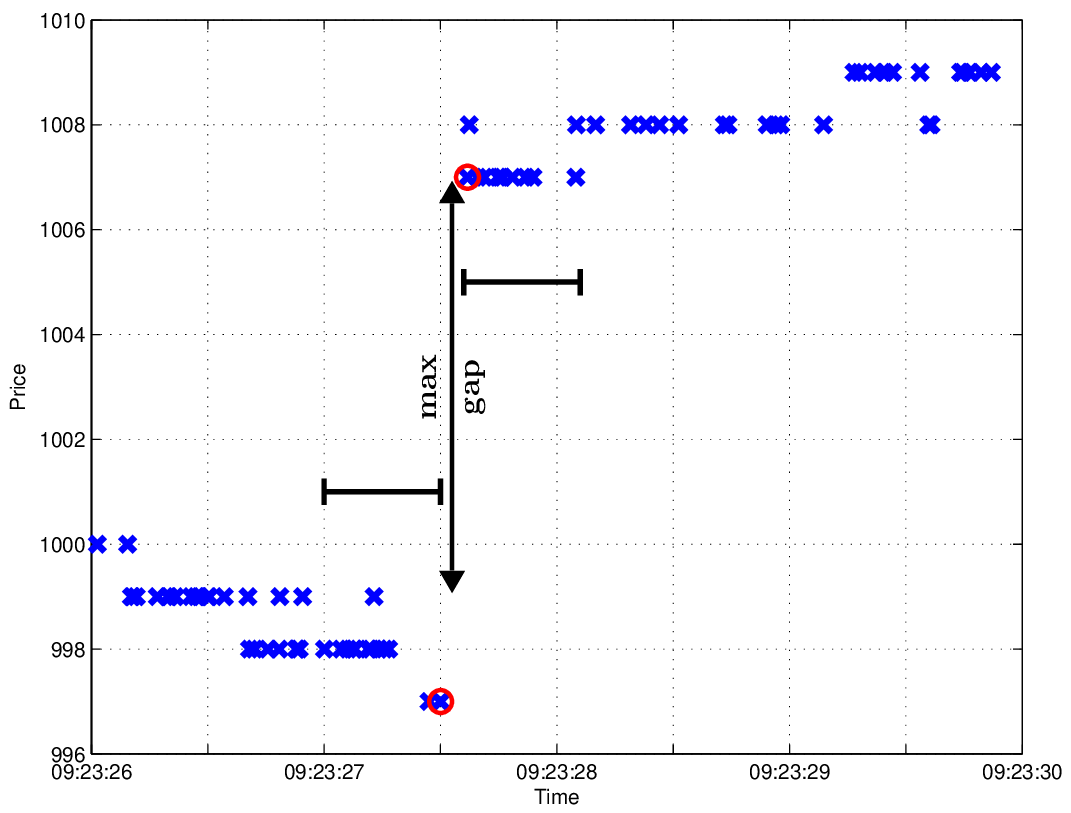} &
\includegraphics[height=8cm,width=0.48\textwidth]{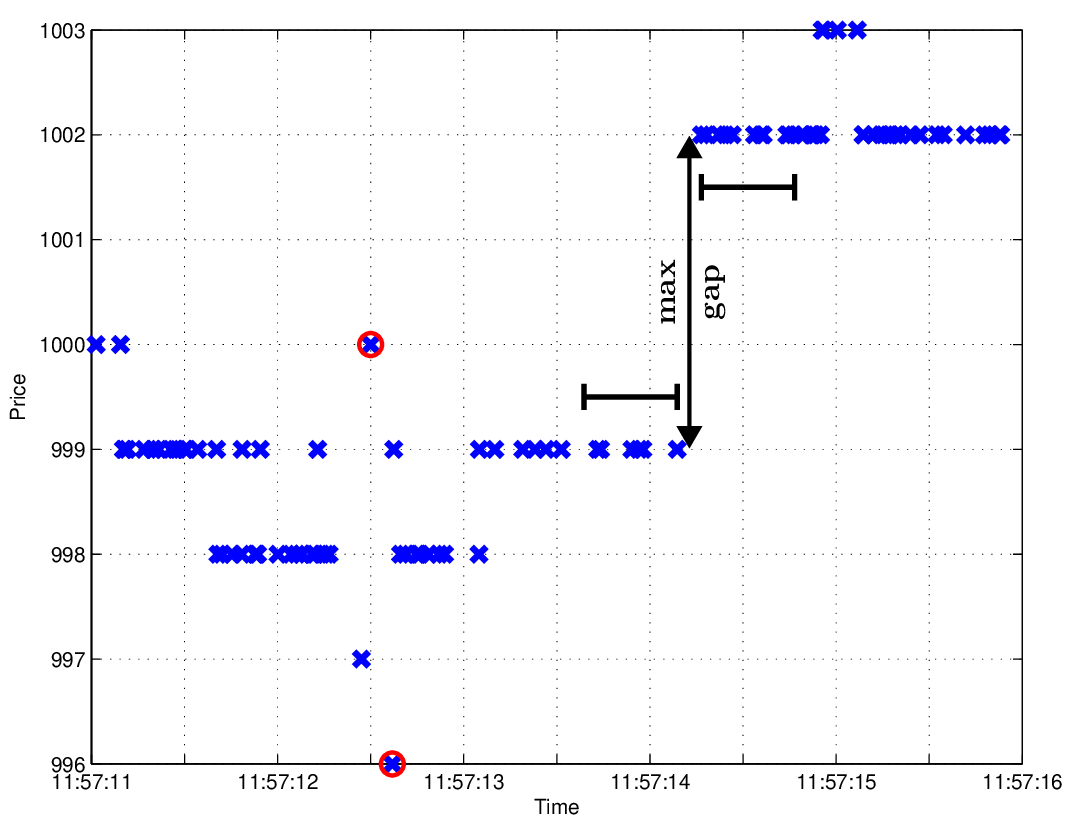} \\
\small{Scenario III} & \small{Scenario IV} \\
\includegraphics[height=8cm,width=0.48\textwidth]{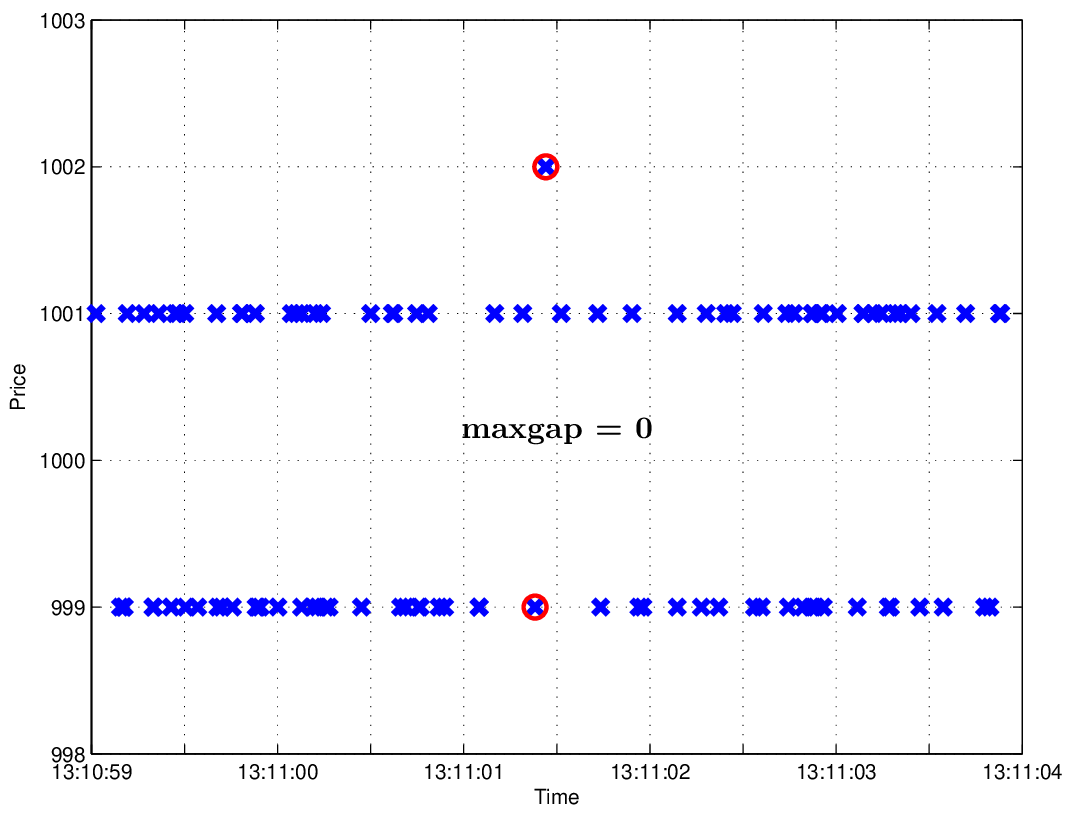} &
\includegraphics[height=8cm,width=0.48\textwidth]{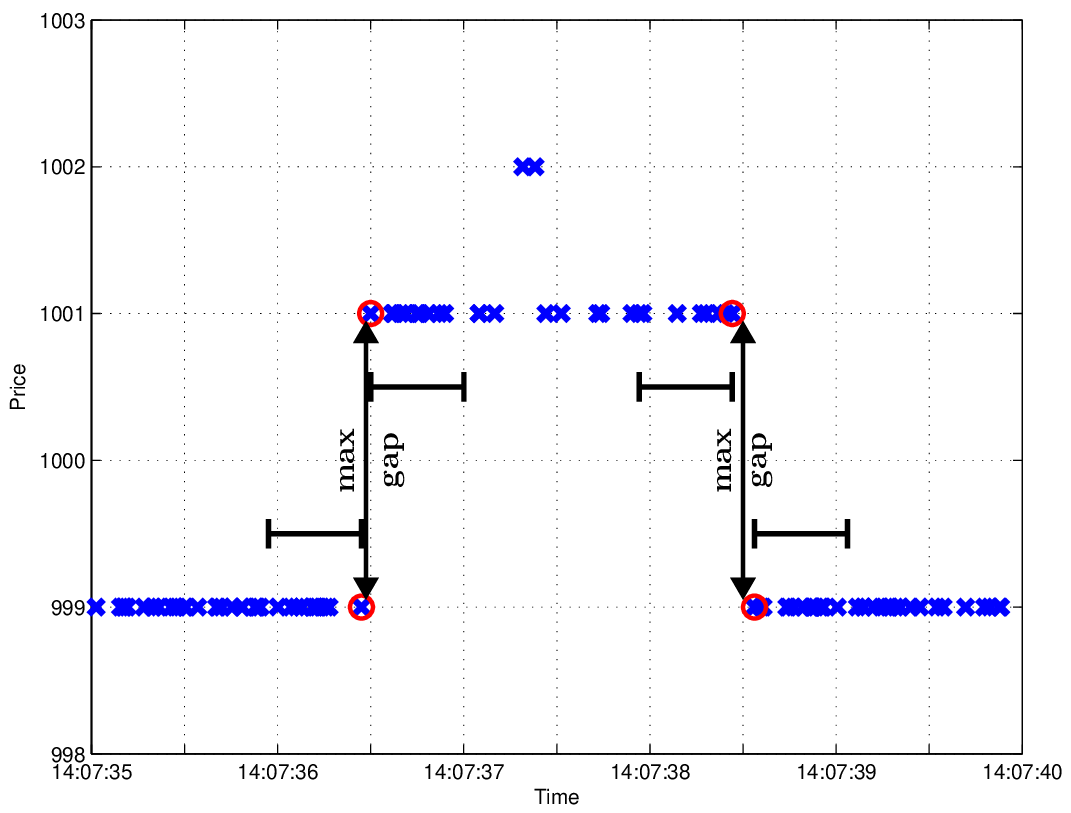} \\
\end{tabular}
\begin{scriptsize}
\parbox{\textwidth}{\emph{Note.} The crosses represent tick price observations, the circles identify the maximum price increment, the horizontal lines indicate the local window width used by the maxgap procedure. The vertical line segment is the actual maxgap (not shown in Panel C, because the maxgap is zero).}
\end{scriptsize}
\end{center}
\end{figure}

\clearpage


\small
\bibliography{userref}
\bibliographystyle{rfs}

\end{document}